%% LyX 2.3.6.1 created this file.  For more info, see http://www.lyx.org/.
%% Do not edit unless you really know what you are doing.
\documentclass[12pt,english]{amsart}
\usepackage[T1]{fontenc}
\usepackage[latin9]{inputenc}
\usepackage{geometry}
\geometry{verbose,tmargin=3cm,bmargin=3cm,lmargin=3cm,rmargin=3cm}
\usepackage{xcolor}
\usepackage{babel}
\usepackage{amsbsy}
\usepackage{amstext}
\usepackage{amsthm}
\usepackage{amssymb}
\usepackage{graphicx}
\usepackage[unicode=true,pdfusetitle,
 bookmarks=true,bookmarksnumbered=false,bookmarksopen=false,
 breaklinks=true,pdfborder={0 0 1},backref=false,colorlinks=false]
 {hyperref}

\makeatletter

%%%%%%%%%%%%%%%%%%%%%%%%%%%%%% LyX specific LaTeX commands.
\newcommand*\LyXZeroWidthSpace{\hspace{0pt}}

%%%%%%%%%%%%%%%%%%%%%%%%%%%%%% Textclass specific LaTeX commands.
\numberwithin{equation}{section}
\numberwithin{figure}{section}
\theoremstyle{plain}
\newtheorem{thm}{\protect\theoremname}[section]
\theoremstyle{remark}
\newtheorem{rem}[thm]{\protect\remarkname}
\theoremstyle{definition}
\newtheorem{defn}[thm]{\protect\definitionname}
\theoremstyle{plain}
\newtheorem{lem}[thm]{\protect\lemmaname}
\theoremstyle{plain}
\newtheorem{assumption}[thm]{\protect\assumptionname}
\theoremstyle{plain}
\newtheorem{prop}[thm]{\protect\propositionname}
\theoremstyle{remark}
\newtheorem*{acknowledgement*}{\protect\acknowledgementname}

%%%%%%%%%%%%%%%%%%%%%%%%%%%%%% User specified LaTeX commands.
\usepackage{xurl}

\newcommand*{\rom}[1]{\expandafter\@slowromancap\romannumeral #1@}
\newcommand*{\ara}[1]{\expandafter\@slowromancap\arabic #1@}

\makeatother

\providecommand{\acknowledgementname}{Acknowledgement}
\providecommand{\assumptionname}{Assumption}
\providecommand{\definitionname}{Definition}
\providecommand{\lemmaname}{Lemma}
\providecommand{\propositionname}{Proposition}
\providecommand{\remarkname}{Remark}
\providecommand{\theoremname}{Theorem}

\begin{document}
\global\long\def\B{\mathcal{B}}%

\global\long\def\R{\mathbb{R}}%

\global\long\def\Q{\mathbb{Q}}%

\global\long\def\Z{\mathbb{Z}}%

\global\long\def\C{\mathbb{C}}%

\global\long\def\N{\mathbb{N}}%

\global\long\def\RR{\mathbb{\overline{R}}}%

\global\long\def\rme{\mathrm{e}}%

\global\long\def\rmi{\mathrm{i}}%

\global\long\def\rmd{\mathrm{d}}%

\global\long\def\E{\mathcal{E}}%

\global\long\def\V{\mathcal{V}}%

\global\long\def\P{\mathcal{P}}%

\global\long\def\VR{\mathcal{V}_{\mathcal{R}}}%

\global\long\def\dom{\Omega}%

\global\long\def\sdom{\mathcal{S}}%

\global\long\def\spec#1{\textrm{\text{Spec}}\left(#1\right)}%

\global\long\def\meig#1{\mathcal{\lambda}^{\Gamma}\left(#1\right)}%

\global\long\def\deig#1{\lambda\left(#1\right)}%

\global\long\def\sreg{\Sigma^{\text{reg}}}%

\global\long\def\sf#1{\mathrm{Sf}_{#1}}%

\global\long\def\mult#1#2{\mathrm{Mult}_{#1}\left(#2\right)}%

\global\long\def\L{L}%

\global\long\def\C{\mathbb{C}}%

\global\long\def\R{\mathbb{R}}%

\global\long\def\N{\mathbb{N}}%

\global\long\def\V{\mathcal{V}}%

\global\long\def\E{\mathcal{E}}%

\global\long\def\VR{\mathcal{V_{R}}}%

\global\long\def\Id{\mathbb{\boldsymbol{I}}}%

\global\long\def\Bell{\mathbb{\ensuremath{\boldsymbol{\ell}}}}%

\global\long\def\nmean#1{\langle#1\rangle_{n}}%

\global\long\def\hmean#1{\langle#1\rangle_{h}}%

\global\long\def\sreg{\Sigma^{\text{reg}}}%

\global\long\def\deltak{\langle\Delta k\rangle}%

\global\long\def\ncf{\mathcal{N}}%

\global\long\def\lmin{l_{min}}%

\global\long\def\Uv{\mathcal{U}_{v}}%

\global\long\def\Sv{\mathcal{S}_{v}}%

\global\long\def\tr{\text{tr}}%

\global\long\def\Ev{\mathcal{E}_{v}}%

\title{Differences between Robin and Neumann eigenvalues on metric graphs}
\author{Ram Band \and Holger Schanz \and Gilad Sofer}
\dedicatory{Dedicated to our teacher and mentor, Uzy Smilansky, on the occasion
of his 82-nd anniversary.}
\begin{abstract}
We consider the Laplacian on a metric graph, equipped with Robin ($\delta$-type)
vertex condition at some of the graph vertices and Neumann-Kirchhoff
condition at all others. The corresponding eigenvalues are called
Robin eigenvalues, whereas they are called Neumann eigenvalues if
the Neumann-Kirchhoff condition is imposed at all vertices. The sequence
of differences between these pairs of eigenvalues is called the Robin-Neumann
gap.

We prove that the limiting mean value of this sequence exists and
equals a geometric quantity, analogous to the one obtained for planar
domains \cite{RudWigYes_arxiv21}. Moreover, we show that the sequence
is uniformly bounded and provide explicit upper and lower bounds.
We also study the possible accumulation points of the sequence and
relate those to the associated probability distribution of the gaps.

To prove our main results, we prove a local Weyl law, as well as explicit
expressions for the second moments of the eigenfunction scattering
amplitudes.
\end{abstract}

\maketitle

\section{Introduction}

The differences between Robin and Neumann eigenvalues of the Laplacian
have been the focus of several recent works. Rudnick, Wigman, and
Yesha considered this sequence of Robin-Neumann gaps (RNG) for the
Laplacian on bounded planar domains and on the hemisphere \cite{RudWig_amq21,Rudnick2021,RudWigYes_arxiv21}.
They computed the limiting mean value of this RNG sequence, proved
some upper and lower bounds, and an almost sure convergence result.
Moreover, they posed stimulating open questions and interesting conjectures
-- such as the existence of planar domains with unbounded RNG sequence,
a lower bound for the RNG in the case of dimension larger than two,
and convergence in the case of a billiard with uniformly hyperbolic
dynamics.

The RNG sequence was studied by Rivi\`{e}re and Royer for the particular
case of metric star graphs in \cite{RivRoy_jphys20}, where they considered
a non self-adjoint Robin condition at the central vertex of a star
graph. They showed that the RNG sequence is bounded and with converging
mean value. They also expressed this limiting mean value in terms
of an associated probability distribution, and discussed some properties
of this distribution. While preparing the current manuscript for submission,
we became aware of the work \cite{BifKer_arXiv}, where Bifulco and
Kerner provide some generalization of the results above. In particular,
they prove a local Weyl law and use it to express the limiting mean
value of the RNG for Schr\"{o}dinger operators on metric graphs,
and extend some results for arbitrary self-adjoint vertex conditions.

In the current paper, we address the analogues of the results in \cite{RudWig_amq21,RudWigYes_arxiv21}
for the case of a metric (quantum) graph. We express the limiting
mean value (i.e., Ces\`{a}ro mean) of the RNG sequence (Theorem \ref{thm:mean_RNG}),
prove a local Weyl law, and express the second moments of the eigenfunction
scattering amplitudes (Theorem \ref{thm:Weyl-law}). We also provide
lower and upper bounds on the RNG (Theorem \ref{thm:explicit_bounds}),
present the associated probability measure (Theorem \ref{thm:probability}),
and use it to study the convergence of subsequences (Theorem \ref{thm:converging_subsequence}).
In doing so, we attempt to answer some of the questions proposed in
previous works, and compare the results to the ones obtained for domains
and star graphs (see Section \ref{sec:discussion}).

To end this introductory part, we note that the dependence of the
spectrum on the boundary conditions in planar domains has long been
a topic of interest in physics. For instance, the group of Uzy Smilansky
used variations of the boundary conditions as a tool in the study
of Gutzwiller's trace formula for quantum billiards \cite{SiePriSmiUssSch_1995}.
Our teacher Uzy has also inspired the present work, and we would therefore
like to dedicate it to his anniversary.

\subsection{Basic definitions and notations \label{subsec:Basic-definitions}}

We consider a metric graph $\Gamma$, with $\V$ and $\E$ being its
vertex set and edge set respectively. Denoting $E:=\left|\E\right|$,
the edge lengths of $\Gamma$ are determined by the vector $\vec{\ell}\in\mathbb{R}_{+}^{E}$
of positive entries. Each edge $e\in\mathcal{E}$ is identified with
the interval $\left[0,\ell_{e}\right]$, so that under the natural
identification of vertices connected to the appropriate edges, $\Gamma$
is a compact metric space. The total length of the graph is denoted
by $\left|\Gamma\right|:=\sum_{e\in\E}\ell_{e}$. For each vertex
$v\in\mathcal{V}$, we denote the set of edges connected to $v$ by
$\mathcal{E}_{v}$, and moreover denote $\deg\left(v\right):=\left|\mathcal{E}_{v}\right|$.

Given a metric graph $\Gamma$, we consider the Hilbert space $L^{2}\left(\Gamma\right):=\oplus_{e\in\mathcal{E}}L^{2}\left(\left[0,\ell_{e}\right]\right)$.
We can then define the Neumann-Kirchhoff Laplacian (also known as
the \emph{standard} Laplacian) by $H^{(0)}=-\frac{d^{2}}{dx^{2}}$
acting on each edge, with domain consisting of all Sobolev functions
$f\in W^{1,2}\left(\Gamma\right)$ which satisfy for all $v\in\mathcal{V}$:
\begin{align}
 & \text{Continuity: }\forall e,e'\in\mathcal{E}_{v},\quad\ensuremath{f|_{e}\left(v\right)=f|_{e'}\left(v\right)},\label{eq:-15-1}\\
 & \text{Current conservation: \ensuremath{\sum_{e\in\Ev}f'|_{e}\left(v\right)=0,}}\label{eq:-16-1}
\end{align}
where by convention, all derivatives are taken in the outward direction
from the vertex. The operator $H^{(0)}$ is self-adjoint, and its
spectrum is infinite, discrete, and bounded from below (\cite{BerKuc_graphs}).
We thus denote the spectrum of $H^{(0)}$ by $\lambda_{1}\leq\lambda_{2}\leq...\nearrow\infty$,
with a complete orthonormal set of real eigenfunctions $f_{1},f_{2},...$.

\bigskip

Let $\Gamma$ be a metric graph, initially endowed with the Neumann-Kirchhoff
Laplacian, as described above. We introduce a perturbation to our
initial operator $H^{(0)}$ by selecting a finite subset of vertices
$\VR\subset\mathcal{V}$, and on this subset of vertices imposing
the Robin vertex condition (also known as $\delta$-type vertex condition)
with parameter $\sigma\geq0$:
\begin{align}
 & \text{Continuity: \ensuremath{\forall e,e'\in\mathcal{E}_{v},\quad f|_{e}\left(v\right)=f|_{e'}\left(v\right)=:f\left(v\right),}}\label{eq:-17-1}\\
 & \text{\text{Robin condition:} \ensuremath{\sum_{e\in\Ev}f'|_{e}\left(v\right)=\sigma f\left(v\right)}},\label{eq:-18-1}
\end{align}
for all $v\in\VR$. The case $\sigma=0$ corresponds to the Neumann-Kirchhoff
condition; namely, no perturbation at all. Further note that condition
(\ref{eq:-18-1}) is the analogue of the Robin boundary condition
for manifolds (hence its name).
\begin{rem}
It is also possible to impose the Robin condition at an interior point
of an edge. To do so, one simply declares such an interior point as
a degree two vertex in $\mathcal{V}$ and adds this vertex to $\VR$.
\end{rem}

We denote the new operator by $H^{(\sigma)}$, and its eigenvalues
by $\left(\lambda_{n}\left(\sigma\right)\right)_{n=1}^{\infty}$.
We may also refer to the square roots of the eigenvalues (a.k.a wave
numbers), $\left(k_{n}\left(\sigma\right)\right)_{n=1}^{\infty}:=\left(\sqrt{\lambda_{n}\left(\sigma\right)}\right)_{n=1}^{\infty}$,
which are well-defined and non-negative since $\spec{H^{(\sigma)}}\subset[0,\infty)$
for $\sigma\geq0$. It is known that the eigenvalues of $H^{(\sigma)}$
are non-decreasing with respect to $\sigma$, see \cite[prop. 3.1.6]{BerKuc_graphs}.
To quantify this increase, define the \emph{Robin-Neumann gaps} (RNG)
by
\begin{equation}
d_{n}\left(\sigma\right):=\lambda_{n}\left(\sigma\right)-\lambda_{n}\left(0\right).\label{eq:-21-1}
\end{equation}
This gives an infinite sequence of functions $\left(d_{n}\left(\sigma\right)\right)_{n=1}^{\infty}$,
which measures the increase in the spectrum of $H^{(\sigma)}$ due
to the $\delta$ perturbation (see Figure \ref{fig:RNG-demo}). The
current paper focuses on studying the main properties of this sequence. 

\begin{figure}
\centerline{ \includegraphics[width=0.5\textwidth]{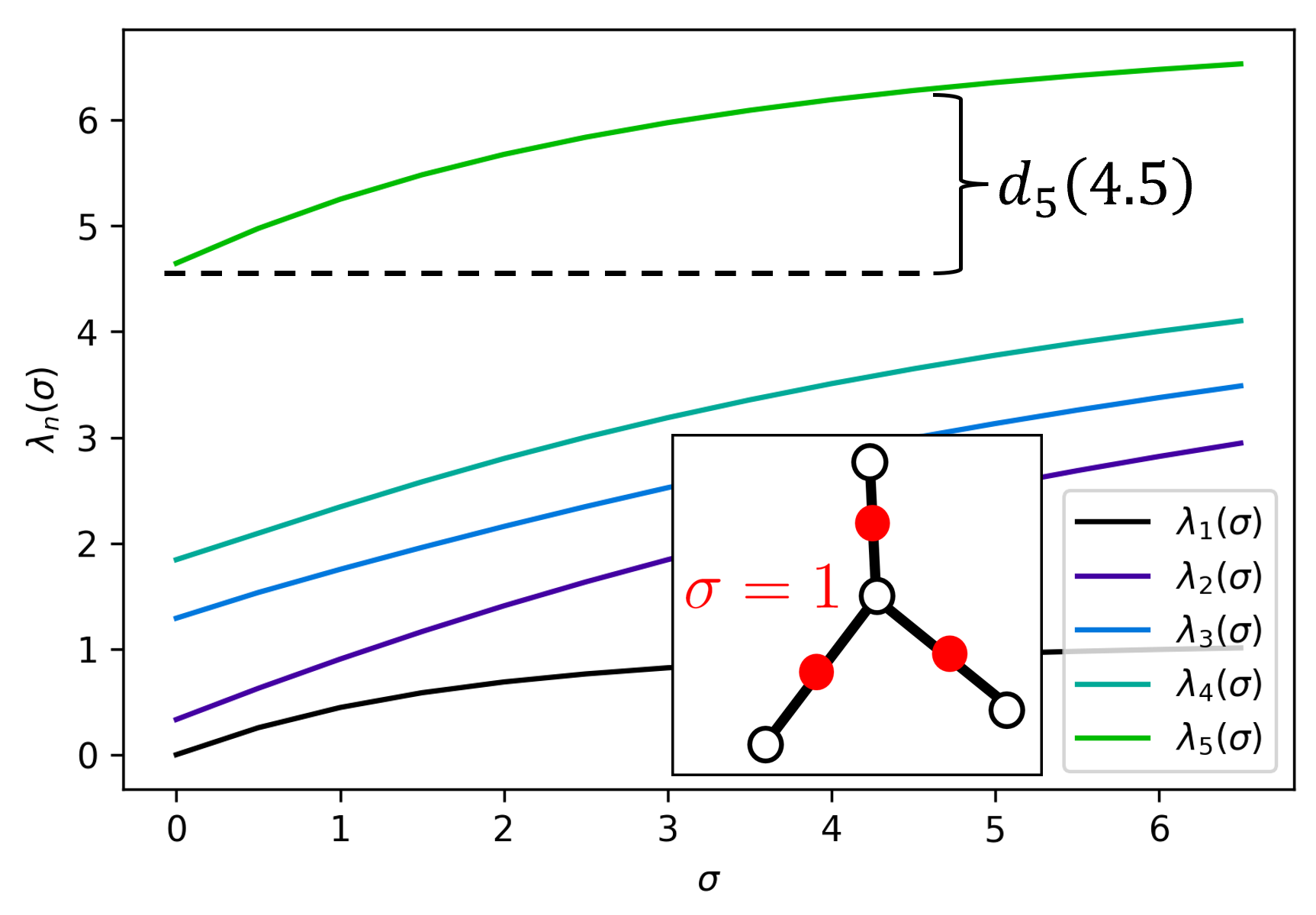}
} \caption{\label{fig:RNG-demo} The Robin eigenvalues $\lambda_{n}\left(\sigma\right)$
for a star graph, along with the Robin-Neumann gap $d_{5}\left(4.5\right)$.
The Robin vertices are marked in red.}
\end{figure}
\begin{figure}
\centerline{ \includegraphics[width=0.7\textwidth]{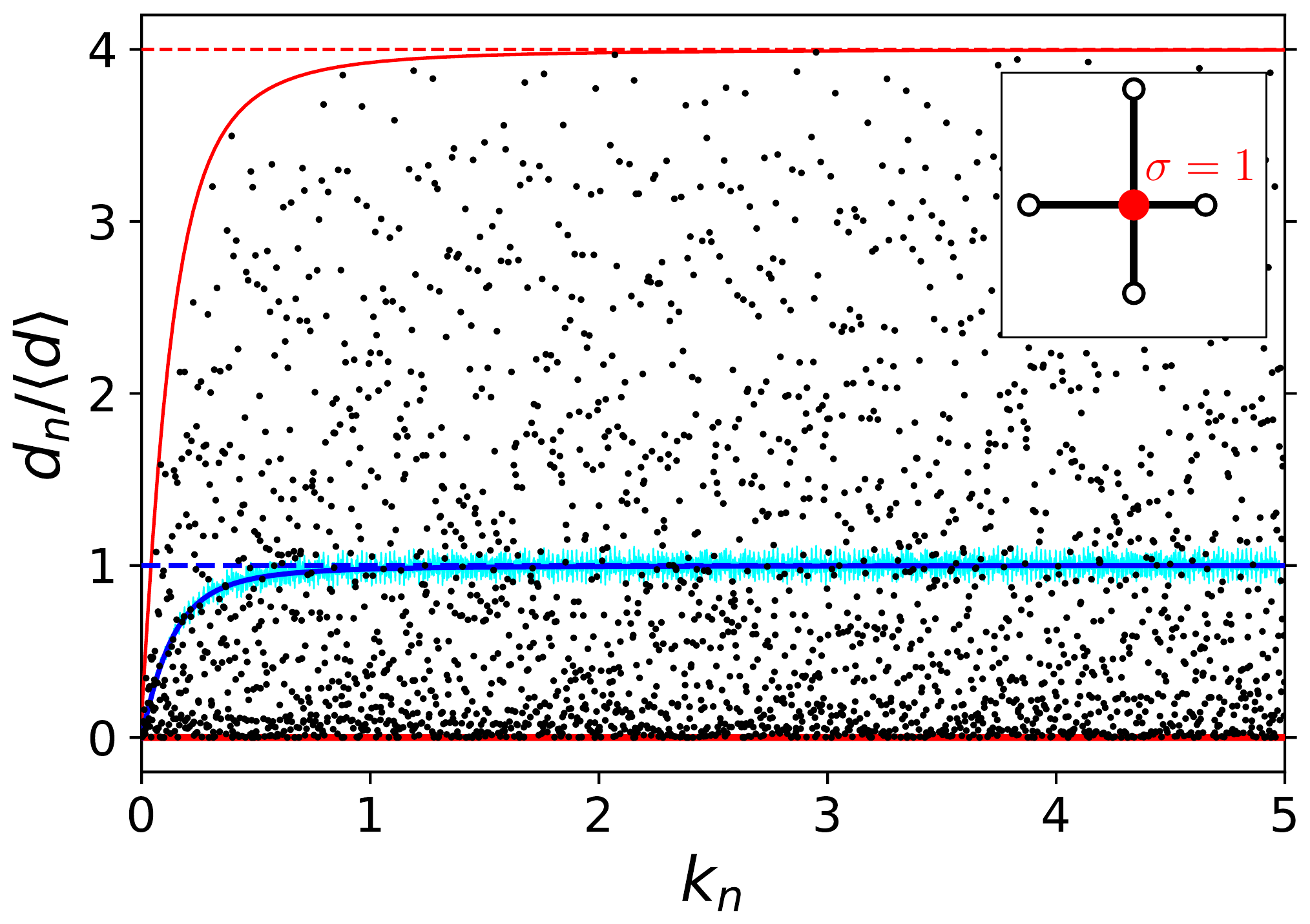} }

\caption{\label{fig:star_gap} Scatter plot of the first $2,500$ Robin-Neumann
gaps for a star graph with four edges and Robin condition at the central
vertex, normalized so that $\left\langle d\right\rangle _{n}\left(\sigma\right)=1$.
The light blue line is a running average and the blue lines on top
of it are the analytic results from Equations (\ref{eq:RNG_mean_const}),
(\ref{eq:RNG_mean_arctan}). The red dashed line is the first upper
bound presented in Equation (\ref{eq:explicit_bounds}), while the
solid red line is the finer upper bound appearing in Equation (\ref{eq:upperbound})
(under the star decomposition described in Subsection \ref{subsec:Explicit-estimate-of}).}
\end{figure}

\subsection{Main results}
\begin{defn}
\label{def:Cesaro}Given a sequence of numbers $\left(c_{n}\right)_{n=1}^{\infty}$,
the Ces\`{a}ro mean (or Ces\`{a}ro sum) of the sequence is 
\begin{equation}
\left\langle c\right\rangle _{n}:=\lim_{N\rightarrow\infty}\frac{1}{N}\sum_{n=1}^{N}c_{n},\label{eq:-40}
\end{equation}
\end{defn}

assuming that the limit exists.

The first result concerns the Ces\`{a}ro mean of the Robin-Neumann
gaps:
\begin{thm}
\textcolor{purple}{\label{thm:mean_RNG}} The Ces\`{a}ro mean of
the Robin-Neumann gap exists and satisfies
\begin{equation}
\left\langle d\right\rangle _{n}\left(\sigma\right)=\frac{2\sigma}{\left|\Gamma\right|}\sum_{v\in\V_{\mathcal{R}}}\frac{1}{\deg\left(v\right)},\label{eq:RNG_mean_const}
\end{equation}
where $\left|\Gamma\right|$ is the total length of the graph and
$\deg\left(v\right)$ is the degree of the vertex $v$.
\end{thm}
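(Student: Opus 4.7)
The plan is to combine a Hellmann-Feynman-type variational formula for the $\sigma$-derivative of each Robin eigenvalue with the local Weyl law promised in Theorem \ref{thm:Weyl-law}. The form domain of $H^{(\sigma)}$ (continuous $W^{1,2}(\Gamma)$ functions) does not depend on $\sigma$, and the associated quadratic form is
\[
\mathcal{Q}_\sigma[f] = \int_\Gamma |f'|^2\,dx + \sigma \sum_{v \in \VR} |f(v)|^2.
\]
Along an analytic branch of normalized eigenpairs $(\lambda_n(\sigma), f_n^{(\sigma)})$, differentiating in $\sigma$ gives
\[
\frac{d \lambda_n(\sigma)}{d\sigma} = \sum_{v \in \VR} |f_n^{(\sigma)}(v)|^2,
\]
so that integrating from $0$ to $\sigma$ expresses the Robin-Neumann gap as
\[
d_n(\sigma) = \int_0^\sigma \sum_{v \in \VR} |f_n^{(s)}(v)|^2 \, ds.
\]

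Next I would average over the first $N$ indices. Since the sum over $\VR$ is finite and the integral is on a bounded interval, the averaging commutes with both, giving
\[
\frac{1}{N}\sum_{n=1}^N d_n(\sigma) = \int_0^\sigma \sum_{v \in \VR} \left(\frac{1}{N}\sum_{n=1}^N |f_n^{(s)}(v)|^2\right) ds.
\]
The local Weyl law (Theorem \ref{thm:Weyl-law}) then supplies the pointwise limit
\[
\lim_{N\to\infty}\frac{1}{N}\sum_{n=1}^N |f_n^{(s)}(v)|^2 = \frac{2}{|\Gamma|\,\deg(v)}
\]
for every fixed $s \in [0,\sigma]$ and $v \in \VR$, independently of the Robin parameter at $v$. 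Pushing the limit $N\to\infty$ through the outer integral and the finite sum yields
\[
\langle d \rangle_n(\sigma) = \int_0^\sigma \sum_{v \in \VR} \frac{2}{|\Gamma|\,\deg(v)}\,ds = \frac{2\sigma}{|\Gamma|}\sum_{v \in \VR}\frac{1}{\deg(v)},
\]
which is the claimed identity.

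The main obstacle is rigorously justifying the exchange of $\lim_{N\to\infty}$ with $\int_0^\sigma$. I expect to obtain a bound on $\frac{1}{N}\sum_{n=1}^N |f_n^{(s)}(v)|^2$ that is uniform in $s$ on the compact interval $[0,\sigma]$, so that dominated convergence applies; such a bound should follow from the quantitative form of the local Weyl law combined with continuous dependence of the secular equation on the Robin parameter. A secondary technical point, standard in analytic perturbation theory of self-adjoint families, is that eigenvalue branches may cross, so that $\lambda_n(\sigma)$ need not be differentiable in the naive $n$-labelling; this is handled by working with the analytic branches of Rellich-Kato and observing that crossings occur on a set of measure zero in $\sigma$, which does not affect the integral representation of $d_n(\sigma)$.
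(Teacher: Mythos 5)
Your overall architecture --- Hellmann--Feynman formula for $\frac{d\lambda_n}{d\sigma}$, integration in the Robin parameter, exchange of the Ces\`{a}ro average with the $\int_0^\sigma$ integral, and then the local Weyl law --- is exactly the skeleton of the paper's proof (Lemma \ref{lem:hadamard} plus the computation in Section \ref{sec:mean-proof}), and your treatment of eigenvalue crossings and of the dominated-convergence step (via the uniform bound of Lemma \ref{lem:bdd}) is adequate. However, there is one genuine gap at the central step: you invoke Theorem \ref{thm:Weyl-law} to assert that
\begin{equation*}
\lim_{N\to\infty}\frac{1}{N}\sum_{n=1}^N \left|f_n^{(s)}(v)\right|^2 = \frac{2}{\deg(v)\left|\Gamma\right|}
\end{equation*}
for \emph{every} $s\in[0,\sigma]$, ``independently of the Robin parameter.'' Theorem \ref{thm:Weyl-law} is stated and proven only for the Neumann--Kirchhoff eigenfunctions, i.e.\ only for $s=0$. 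Its proof rests on the ergodic theorem over the secular manifold, which requires the scattering matrix to be $k$-independent so that the eigenvalue condition becomes the intersection of a fixed linear torus flow with a fixed subvariety $\Sigma$; for $s>0$ the matrix $S^{(s)}(k)$ carries the $k$-dependent terms $\frac{\rmi s}{k}$, the ``secular manifold'' moves with $k$, and the ergodic argument does not apply directly. So the pointwise limit you need for $s>0$ is not available off the shelf.

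The paper closes exactly this gap with Lemmas \ref{lem:eigenfunc-conv} and \ref{lem:eigmean-conv}: one shows $\bigl|k_n^{(s)}-k_n^{(0)}\bigr|\to 0$ uniformly in $s\in[0,\sigma]$ (because $S^{(s)}(k)\to S^{(0)}(k)$ as $k\to\infty$, combined with monotonicity of the eigenphases of $S^{(s)}\rme^{\rmi k\L}$), deduces $\bigl|f_n^{(s)}(v)-f_n^{(0)}(v)\bigr|\to 0$ uniformly in $s$, and concludes that the Ces\`{a}ro mean of $\bigl|f_n^{(s)}(v)\bigr|^2$ exists and equals the $s=0$ value, uniformly in $s$. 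You would need to supply an argument of this kind (or cite a local Weyl law valid for general Robin parameters, as in the work of Bifulco and Kerner) before your appeal to the Weyl law at $s>0$ is justified; once that is in place, the rest of your proof goes through, and the uniformity even spares you the dominated-convergence step.
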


To prove Theorem \ref{thm:mean_RNG}, we prove a result which has
its own interest -- a local Weyl law and expressions for the second
moments of the eigenfunction scattering amplitudes:
\begin{thm}
\label{thm:Weyl-law}Denote by $\left(f_{n}\right)_{n=1}^{\infty}$
the $L^{2}$ normalized eigenfunctions of the Neumann-Kirchhoff Laplacian
$H^{(0)}$. Then for each vertex $v\in\mathcal{V}$,
\begin{equation}
\left\langle \left|f\left(v\right)\right|^{2}\right\rangle _{n}=\frac{2}{\deg\left(v\right)\left|\Gamma\right|}.\label{eq:-15}
\end{equation}
Moreover, expressing these eigenfunctions on each edge $e\in\E$ by
\begin{equation}
f_{n}|_{e}\left(x\right)=\left(a_{e}\right)_{n}\thinspace\exp\left(\rmi k_{n}x\right)+\left(a_{\hat{e}}\right)_{n}\thinspace\exp\left(\rmi k_{n}\left(\ell_{e}-x\right)\right),\label{eq:-45}
\end{equation}
we have
\begin{align}
 & \forall e\in\mathcal{E},\quad\left\langle \left|a_{e}\right|^{2}\right\rangle _{n}=\left\langle \left|a_{\hat{e}}\right|^{2}\right\rangle _{n}=\frac{1}{2\left|\Gamma\right|},\label{eq:-16}\\
 & \forall e_{1},e_{2}\in\mathcal{E},\textrm{ such that \ensuremath{e_{1}\neq e_{2},\quad}}\left\langle a_{e_{1}}\overline{a_{e_{2}}}\right\rangle _{n}=0.\label{eq:-17}
\end{align}
\end{thm}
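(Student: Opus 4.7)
The plan is to establish the three assertions (\ref{eq:-15}), (\ref{eq:-16}), and (\ref{eq:-17}) in sequence, in roughly increasing order of difficulty.

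First, for the local Weyl law (\ref{eq:-15}), I would use a short-time heat-kernel asymptotic combined with a Tauberian theorem. Let $p_t(x,y)=\sum_n e^{-\lambda_n t}f_n(x)f_n(y)$ denote the heat kernel of $H^{(0)}$. A parametrix/reflection argument, using the fact that near any vertex $v$ the graph looks locally like a star whose center carries the Neumann--Kirchhoff condition, yields the diagonal short-time expansion
\[ p_t(v,v) = \frac{2}{\deg(v)\sqrt{4\pi t}} + O(1) \quad \text{as } t\to 0^{+}, \]
together with the standard heat trace $\sum_n e^{-\lambda_n t} = |\Gamma|/\sqrt{4\pi t}+O(1)$ and Weyl count $N(\Lambda)\sim|\Gamma|\sqrt{\Lambda}/\pi$. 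Since $|f_n(v)|^2\geq 0$, Karamata's Tauberian theorem applies to the Laplace transform and gives $\sum_{\lambda_n\leq\Lambda}|f_n(v)|^2\sim\frac{2}{\deg(v)|\Gamma|}N(\Lambda)$, which rearranges to (\ref{eq:-15}).

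Next, for the diagonal amplitude moments (\ref{eq:-16}), I would use an ``artificial degree-two vertex'' trick. Any interior point $x\in e$ can be declared a Neumann--Kirchhoff vertex of degree $2$ without altering $H^{(0)}$, since the NK condition at such a vertex is merely $C^{1}$-continuity. Applying (\ref{eq:-15}) at this artificial vertex gives $\langle|f_n(x)|^{2}\rangle_n = 1/|\Gamma|$. From (\ref{eq:-45}),
\[ |f_n(x)|^2 = |(a_e)_n|^2+|(a_{\hat{e}})_n|^2 + 2\operatorname{Re}\!\left((a_e)_n\overline{(a_{\hat{e}})_n}\,\exp(\rmi k_n(2x-\ell_e))\right). \]
Averaging the identity $\langle|f_n(x)|^2\rangle_n=1/|\Gamma|$ over $x\in[0,\ell_e]$ contracts the cross term to $2\operatorname{Re}((a_e)_n\overline{(a_{\hat{e}})_n})\sin(k_n\ell_e)/(k_n\ell_e)$. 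The $L^2$ normalization $\sum_e\ell_e(|(a_e)_n|^2+|(a_{\hat{e}})_n|^2)=1+O(1/k_n)$ forces the amplitudes to be bounded uniformly in $n$, so the cross term is $O(1/k_n)$ and vanishes in Ces\`aro mean. Hence $\langle|a_e|^2+|a_{\hat{e}}|^2\rangle_n = 1/|\Gamma|$. Reversing the parameterization of $e$ interchanges $a_e\leftrightarrow a_{\hat{e}}$ while preserving every spectral object, so $\langle|a_e|^2\rangle_n=\langle|a_{\hat{e}}|^2\rangle_n=1/(2|\Gamma|)$.

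For the off-diagonal identity (\ref{eq:-17}), I would appeal to Parseval's theorem against test functions. Fix $\phi_j\in C_c^\infty((0,\ell_{e_j}))$, $j=1,2$, with disjoint supports in the interiors of $e_1$ and $e_2$; then $\langle\phi_1,\phi_2\rangle=0$ and thus $\sum_n\langle f_n,\phi_1\rangle\overline{\langle f_n,\phi_2\rangle}=0$. Inserting (\ref{eq:-45}) expresses each $\langle f_n,\phi_j\rangle$ as a linear combination of $(a_{e_j})_n$ and $(a_{\hat{e}_j})_n$ weighted by Fourier transforms $\hat{\phi}_j(\pm k_n)$. Choosing families of $\phi_j$ that concentrate near prescribed wavenumbers (modulated cutoffs), and extracting the Ces\`aro limit, one isolates the four cross-amplitude Ces\`aro averages and shows each vanishes. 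A structurally cleaner alternative is to write the amplitude vector as the eigenvector of the unitary bond-scattering matrix $U(k_n)$ and compute the Ces\`aro means as Barra--Gaspard integrals over the secular variety in $\mathbb{T}^{E}$; the unitarity and vertex symmetries of $U(k)$ then kill the off-diagonal integrals.

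The main obstacle is this last step. Unlike the non-negative quantities that enter (\ref{eq:-15}) and (\ref{eq:-16}), the cross amplitudes $a_{e_1}\overline{a_{e_2}}$ are complex-valued, so Karamata's theorem does not apply directly and the argument requires genuine control of oscillations, either through carefully tuned test functions or through explicit integration against the Barra--Gaspard measure on the secular variety. Once this cross-amplitude identity is in hand, the three parts of the theorem fit together cleanly.
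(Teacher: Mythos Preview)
Your route is genuinely different from the paper's, and in an interesting way it runs in the opposite direction. The paper first establishes the amplitude identities (\ref{eq:-16}) and (\ref{eq:-17}) via the secular manifold and the Barra--Gaspard ergodic theorem (Lemmas \ref{lem:Amean} and \ref{lem:Uncorrelation}), and only then \emph{derives} the local Weyl law (\ref{eq:-15}) from them by expanding $|f_n(v)|^2$ in amplitudes and invoking the scattering relation (\ref{eq:-39-2}). You instead take (\ref{eq:-15}) as the starting point via heat-kernel asymptotics and Karamata, then deduce (\ref{eq:-16}) from it through the artificial degree-two vertex. Both directions are valid; your approach to (\ref{eq:-15}) and (\ref{eq:-16}) is arguably more elementary and has the pleasant feature of not requiring the rational-independence assumption that the paper imposes and later removes in Appendix \ref{sec:Appendix-A}. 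Note, incidentally, that the symmetry argument you give for $\langle|a_e|^2\rangle_n=\langle|a_{\hat e}|^2\rangle_n$ can be strengthened: since the eigenfunctions are real one has $|a_e|=|a_{\hat e}|$ pointwise, as in (\ref{eq:inversion-3})--(\ref{eq:inversion-4}).

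The genuine gap is in (\ref{eq:-17}). Your Parseval sketch does not work as written: the identity $\sum_n\langle f_n,\phi_1\rangle\overline{\langle f_n,\phi_2\rangle}=0$ is an absolutely convergent sum, not a Ces\`aro mean, and the passage from one to the other via ``families of $\phi_j$ that concentrate near prescribed wavenumbers'' is exactly the hard part---it amounts to a two-point spectral correlation statement that you have not supplied. There is no Tauberian shortcut here because the summands are complex. Your ``cleaner alternative'' is precisely what the paper does: it uses the torus-integration formula of Alon--Band--Berkolaiko (equation (\ref{eq:-11-2})) to rewrite the Barra--Gaspard integral of $a_i\overline{a_j}$ as an integral over $\mathbb{T}^E$ of $\sum_m a_i^{(m)}\overline{a_j^{(m)}}$, which vanishes identically because the eigenvectors $\vec a^{(m)}$ of the unitary $S e^{i\boldsymbol\kappa}$ form an orthonormal basis. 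So for (\ref{eq:-17}) you would end up on the paper's path after all.
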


\begin{rem}
The local Weyl law for quantum graphs was recently proven by Borthwick,
Harrell and Jones in \cite{Borthwick2022} via heat kernel methods.
Afterwards, Bifulco and Kerner had shown that the result may be generalized
to Schr\"{o}dinger operators with arbitrary self-adjoint vertex conditions
\cite{BifKer_arXiv}.
\end{rem}

\bigskip
\begin{thm}
\label{thm:Lipschitz_RNG}~ The sequence of functions $\left(d_{n}\left(\sigma\right)\right)_{n=1}^{\infty}$
is uniformly Lipschitz continuous in $[0,\infty)$. Namely, there
exists $C\in\R$ such that 
\begin{equation}
\forall n\in\N,~~\forall\sigma_{1},\sigma_{2}\geq0,\quad\left|d_{n}\left(\sigma_{1}\right)-d_{n}\left(\sigma_{2}\right)\right|\leq C\left|\sigma_{1}-\sigma_{2}\right|.\label{eq:Lipschitz_RNG}
\end{equation}
In particular, the sequence of functions $\left(d_{n}\left(\sigma\right)\right)_{n=1}^{\infty}$
is uniformly bounded on any compact interval.
\end{thm}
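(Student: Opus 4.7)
The plan is to use the Hellmann--Feynman formula to express $d_n(\sigma_1)-d_n(\sigma_2)$ as an integral of the derivatives $\lambda_n'(\tau)$, and then to prove a pointwise upper bound on the eigenfunction values at the Robin vertices that is uniform in both $n$ and $\sigma$.

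Since $\{H^{(\sigma)}\}_{\sigma\geq 0}$ is a self-adjoint analytic family of type (B) with quadratic form $Q_\sigma(f)=\int_\Gamma|f'|^2+\sigma\sum_{v\in\VR}|f(v)|^2$, each $\lambda_n(\sigma)$ is continuous, monotone non-decreasing, and piecewise real-analytic in $\sigma$. At any $\sigma$ where the eigenvalue is simple, Hellmann--Feynman yields
\[
\frac{d\lambda_n}{d\sigma}(\sigma)=\sum_{v\in\VR}|f_n^{(\sigma)}(v)|^2,
\]
where $f_n^{(\sigma)}$ is the $L^2$-normalized eigenfunction. Since crossings constitute an isolated countable set in $[0,\infty)$, this identity holds almost everywhere, and by absolute continuity of $\lambda_n(\sigma)$ we obtain
\[
d_n(\sigma_1)-d_n(\sigma_2)=\int_{\sigma_2}^{\sigma_1}\sum_{v\in\VR}|f_n^{(\tau)}(v)|^2\,d\tau.
\]
The theorem thus reduces to a bound $\sum_{v\in\VR}|f_n^{(\sigma)}(v)|^2\leq C$ independent of $n$ and $\sigma$.

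To obtain such a bound, fix a vertex $v$ and an adjacent edge $e$ parametrized so that $v$ sits at $x=0$. The eigenfunction satisfies $-f''=k_n(\sigma)^2 f$ on $e$, so $f_n^{(\sigma)}|_e(x)=A_e\cos(k_n(\sigma)x)+B_e\sin(k_n(\sigma)x)$, with $|f_n^{(\sigma)}(v)|^2=|A_e|^2$. A direct calculation gives
\[
\|f_n^{(\sigma)}|_e\|_{L^2(e)}^2=\tfrac{\ell_e}{2}\bigl(|A_e|^2+|B_e|^2\bigr)+O\!\left(\tfrac{|A_e|^2+|B_e|^2}{k_n(\sigma)}\right),
\]
so for $k_n(\sigma)\geq 4/\ell_{\min}$ (where $\ell_{\min}:=\min_e\ell_e$) we conclude $|f_n^{(\sigma)}(v)|^2\leq (4/\ell_e)\|f_n^{(\sigma)}|_e\|_{L^2}^2\leq 4/\ell_{\min}$. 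Monotonicity $k_n(\sigma)\geq k_n(0)$ together with Weyl's law ensures that this high-wavenumber bound applies uniformly in $\sigma$ to all but finitely many indices $n$.

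For the remaining finitely many low-index eigenvalues, $\sigma\mapsto|f_n^{(\sigma)}(v)|^2$ is continuous on $[0,\infty)$ and tends to $0$ as $\sigma\to\infty$ by the Dirichlet limit at $v\in\VR$, so each such function is bounded on $[0,\infty)$; taking the maximum over this finite set and over $v\in\VR$ yields a uniform constant $C$. Integrating then gives $|d_n(\sigma_1)-d_n(\sigma_2)|\leq C|\sigma_1-\sigma_2|$. The main obstacle is exactly this uniform bound on $|f_n^{(\sigma)}(v)|^2$: the edge-wise trigonometric representation delivers a clean estimate at high wavenumbers, but the low-frequency regime cannot be handled by the same asymptotic argument and must be controlled by the separate compactness/Dirichlet-limit argument above.
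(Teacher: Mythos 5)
Your proposal is correct and follows essentially the same route as the paper: the Hadamard/Hellmann--Feynman formula (Lemma \ref{lem:hadamard}) reduces the claim to a bound on $\sum_{v\in\VR}|f_n^{(\sigma)}(v)|^2$ uniform in $n$ and $\sigma$, which the paper's Lemma \ref{lem:bdd} likewise extracts from the explicit trigonometric form of the eigenfunction on an adjacent edge together with the $L^2$ normalization. The only substantive difference is how the finitely many low-wavenumber indices are dispatched: your appeal to continuity of $\sigma\mapsto|f_n^{(\sigma)}(v)|^2$ and to the Dirichlet limit as $\sigma\to\infty$ needs a word of care at eigenvalue crossings (where the labeled eigenbranch and its eigenfunction need not vary continuously), though the paper is equally terse at this point and defers the details to \cite{Sofer2022}.
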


\bigskip

Theorem \ref{thm:Lipschitz_RNG} has a very short proof which appears
in Section \ref{sec:Lipschitz-proof}. With some more effort, it is
possible to obtain explicit bounds for the Robin-Neumann gap. In order
to do so, we introduce an auxiliary construction.

Let $\Gamma$ be a metric graph. A \emph{star decomposition} of $\Gamma$
is a partition of $\Gamma$ into star graphs, whose central vertices
are the vertices of $\Gamma$. Such a partition may be described by
introducing an auxiliary vertex $u_{e}$ along each edge $e\in\E$.
The vertex $u_{e}$ may be positioned in the interior of $e$ (so
that $\deg(u_{e})=2$), or at its boundary (so that $u_{e}\in\V$).
We denote the set of all such auxiliary vertices connected to $v$
by $\Uv$. Hence, for each $v\in\V$, the associated star subgraph
of the partition consists of the central vertex $v$ and all vertices
in $\Uv$. This star subgraph is denoted by $\Sv$, and the edge lengths
of this graph are denoted by $\left\{ s_{v,u}\right\} _{u\in\Uv}$.
For example, if an edge $e$ of $\Gamma$ connects the vertices $v,w\in\V$
and $u_{e}\in\Uv\cap\mathcal{U}_{v'}$~~, then $\ell_{e}=s_{v,u_{e}}+s_{w,u_{e}}$.
Note that it is also possible to have $s_{v,u_{e}}=0$ if the corresponding
auxiliary vertex is placed at the boundary of the edge $e$, such
that $u_{e}=v$. We further denote the total edge length of each star
by 
\begin{equation}
\left|\Sv\right|:=\sum_{u\in\Uv}s_{v,u}.\label{-2}
\end{equation}
An example of a star decomposition for a tetrahedron graph can be
seen in Figure \ref{fig:star_decomp}.

\begin{figure}
\centerline{ \includegraphics[width=0.5\textwidth]{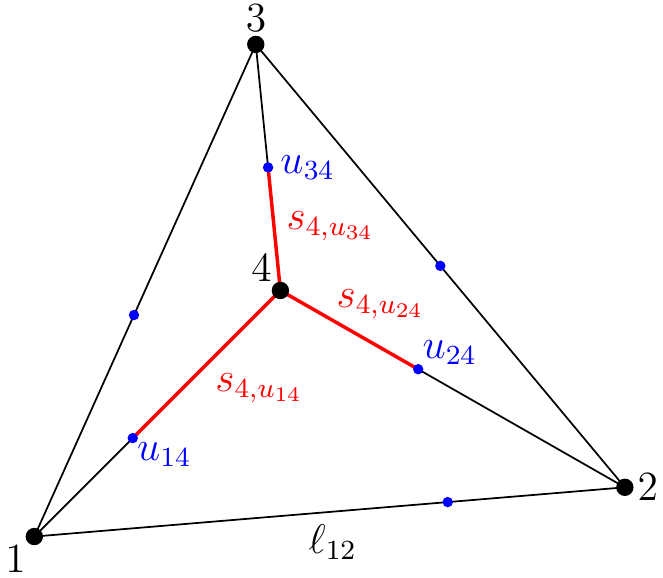}
} \caption{\label{fig:star_decomp} Star decomposition of a tetrahedron graph.
The vertices and edges of the original graph are shown in black. The
small blue dots correspond to the auxiliary vertices. The star graph
around vertex 4 is highlighted in red.}
\end{figure}

\begin{thm}
\label{thm:explicit_bounds} For any star decomposition of the metric
graph $\Gamma$ and $\sigma>0$: 
\begin{eqnarray}
0\leq & d_{n}\left(\sigma\right)<\frac{2\sigma}{\min_{v\in\VR}\left|\Sv\right|}.\label{eq:explicit_bounds}
\end{eqnarray}
In particular,
\begin{equation}
d_{n}\left(\sigma\right)<\frac{4\sigma}{\ell_{\min}},\label{eq:-13}
\end{equation}
where $\ell_{\min}$ is the length of the shortest edge in $\Gamma$.
\end{thm}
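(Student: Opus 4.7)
My plan is to prove the lower bound by monotonicity (already cited from \cite{BerKuc_graphs}) and the upper bound by integrating a Feynman--Hellmann representation of $d_n(\sigma)$ against a global geometric identity. Since $\sigma\mapsto\lambda_n(\sigma)$ is Lipschitz by Theorem~\ref{thm:Lipschitz_RNG}, it is absolutely continuous, and at a.e.\ $\sigma$ the eigenvalue is simple so the Feynman--Hellmann formula applies:
\[
\frac{d\lambda_n}{d\sigma}(\sigma) \;=\; \sum_{v\in\VR}|\psi_n^{(\sigma)}(v)|^2,
\]
where $\psi_n^{(\sigma)}$ denotes the $L^2(\Gamma)$-normalized $n$-th eigenfunction of $H^{(\sigma)}$. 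Therefore
\[
d_n(\sigma)\;=\;\int_0^\sigma\sum_{v\in\VR}|\psi_n^{(t)}(v)|^2\,dt,
\]
and the task reduces to an upper bound on the integrand that is uniform in $n$ and in $t\in[0,\sigma]$.

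The key ingredient is the following pointwise estimate: for every $t\geq 0$, every $L^2(\Gamma)$-normalized eigenfunction $\psi$ of $H^{(t)}$, and every star decomposition,
\[
\sum_{v\in\V}|\psi(v)|^2\,|\Sv|\;\leq\;2\|\psi\|_{L^2(\Gamma)}^2\;=\;2,
\]
with strict inequality when $t>0$. To prove it, I would write $\psi$ on the $v$-side of each edge as $\psi(x)=\psi(v)\cos(kx)+(\psi'|_e(v)/k)\sin(kx)$, compute explicitly the half-edge defect
\[
D_{v,e}\;:=\;|\psi(v)|^2\,s_{v,u_e}-2\int_0^{s_{v,u_e}}|\psi|^2\,dx,
\]
and then sum $\sum_{v\in\V}\sum_{e\in\Ev}D_{v,e}$. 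This produces a quadratic form in the data $\{\psi(v),\psi'|_e(v)\}$; invoking the vertex condition $\sum_{e\in\Ev}\psi'|_e(v)=t_v\psi(v)$ (with $t_v=t$ for $v\in\VR$ and $t_v=0$ otherwise), together with standard trigonometric identities such as $1-\cos(2ks)=2\sin^2(ks)$, the total is seen to be non-positive, and strictly negative once $t>0$ since the cross term $-2A^2 t_v\sin^2(ks_{v,u})/k^2$ appears with a definite negative sign at each Robin vertex.

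Granted the central estimate, since the stars $\{\Sv\}_{v\in\V}$ partition $\Gamma$, the terms $|\psi(v)|^2|\Sv|$ are nonnegative and $|\Sv|\geq\min_{v'\in\VR}|\mathcal{S}_{v'}|$ for every $v\in\VR$, giving
\[
\Bigl(\min_{v\in\VR}|\Sv|\Bigr)\sum_{v\in\VR}|\psi(v)|^2 \;\leq\; \sum_{v\in\VR}|\psi(v)|^2\,|\Sv| \;\leq\; \sum_{v\in\V}|\psi(v)|^2\,|\Sv|\;\leq\;2,
\]
so $\sum_{v\in\VR}|\psi(v)|^2\leq 2/\min_{v\in\VR}|\Sv|$. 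Integrating over $t\in[0,\sigma]$ via the Feynman--Hellmann representation yields $d_n(\sigma)\leq 2\sigma/\min_{v\in\VR}|\Sv|$; the strict inequality follows because the central estimate is strict for a.e.\ $t\in(0,\sigma]$. The bound \eqref{eq:-13} is immediate by specializing to the midpoint star decomposition, since then $s_{v,u_e}=\ell_e/2$ for every half-edge and hence $|\Sv|\geq\ell_{\min}/2$ for every vertex, yielding $2\sigma/\min|\Sv|\leq 4\sigma/\ell_{\min}$.

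The main obstacle is the central pointwise estimate. The naive per-edge inequality $|\psi(v_e)|^2 s_{v_e,u_e}+|\psi(w_e)|^2 s_{w_e,u_e}\leq 2\int_e|\psi|^2\,dx$ fails in general (as one can check for high-frequency Neumann modes on a single edge with a non-centered auxiliary vertex), so positivity of the global sum must be harvested from cancellations between contributions at different vertices. The delicate step is identifying the right grouping of trigonometric terms that, together with $t_v\geq 0$ and the vertex-condition constraint on the outgoing derivatives, forces the total defect to have the correct sign.
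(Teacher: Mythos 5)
Your architecture coincides with the paper's: the Feynman--Hellmann reduction is Lemma \ref{lem:hadamard}, and your central estimate $\sum_{v\in\V}|\psi(v)|^{2}\,|\Sv|\le 2\|\psi\|_{L^{2}}^{2}$ is exactly what the chain (\ref{eq:norm1})--(\ref{eq:norm4}) establishes once the $\tan^{2}$ and $\sigma/k^{2}$ terms are discarded; so you are reproving Lemma \ref{lemma:explicit_bounds} in its crude form. The step you flag as the open obstacle does close, and by essentially the grouping you guessed, though the decisive term is not the one you wrote. Parametrizing $\psi=A_{e}\cos(kx-\varphi_{e,v})$ from the $v$-end of $e$, a direct computation gives
\[
D_{v,e}=-A_{e}^{2}\,s_{v,u_{e}}\sin^{2}\varphi_{e,v}-\frac{A_{e}^{2}}{2k}\Bigl[\sin\bigl(2(ks_{v,u_{e}}-\varphi_{e,v})\bigr)+\sin 2\varphi_{e,v}\Bigr].
\]
The middle terms, evaluated at the auxiliary point, cancel exactly between the two halves of each edge because $ks_{v',u_{e}}-\varphi_{e,v'}=-(ks_{v,u_{e}}-\varphi_{e,v})$ and $\sin$ is odd; the last terms regroup at each true vertex as $-\tfrac{1}{k^{2}}\psi(v)\sum_{e\in\Ev}\psi'|_{e}(v)=-\tfrac{t_{v}}{k^{2}}|\psi(v)|^{2}\le 0$ by the vertex condition; and the first terms are manifestly nonpositive. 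In particular the Robin contribution is $-t_{v}|\psi(v)|^{2}/k^{2}$, not $-2A^{2}t_{v}\sin^{2}(ks_{v,u})/k^{2}$: the factor $\sin^{2}(ks_{v,u_{e}})$ does not pull out of the sum over $e\in\Ev$. (Also, your claimed counterexample to the per-edge inequality is actually an equality case --- a pure Neumann mode on a single edge has $\varphi=0$ at both ends --- though the per-edge inequality does fail for edges interior to a graph, so the global summation is indeed necessary.) One caveat on strictness: for $t>0$ the central estimate itself can be an equality (both correction terms vanish for an eigenfunction with $\psi(v)=0$ on $\VR$ and vanishing outward derivatives elsewhere), but then $\sum_{v\in\VR}|\psi(v)|^{2}=0$, so the integrand is strictly below $2/\min_{v\in\VR}|\Sv|$ for every $t>0$ in either case and (\ref{eq:explicit_bounds}) follows. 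The only substantive difference from the paper is that, by discarding the $\tan^{2}$ and $\sigma/k^{2}$ terms at the outset, you obtain Theorem \ref{thm:explicit_bounds} but not the sharper $\sigma$-dependent bounds of Lemma \ref{lemma:explicit_bounds} and Proposition \ref{prop:integral_estimate}.
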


In fact, Proposition \ref{prop:integral_estimate} which appears in
the sequel gives a better upper bound than (\ref{eq:explicit_bounds}),
but is expressed in a more cumbersome manner. Figure \ref{fig:star_gap}
demonstrates both the bound (\ref{eq:explicit_bounds}) above (dashed
red curve) and the better bound (\ref{eq:upperbound}) in Proposition
\ref{prop:integral_estimate} (solid red curve). Note that for a star
graph with a single Robin vertex $v$ at its center, the optimal bound
in (\ref{eq:explicit_bounds}) is obtained by the star partition which
consists of just a single star -- the whole graph. For this partition,
$\left|\Sv\right|=$$\left|\Gamma\right|$ and the upper bound in
(\ref{eq:explicit_bounds}) is $\frac{2\sigma}{\left|\Gamma\right|}$.
See further discussion in Subsection \ref{subsec:Explicit-estimate-of}.

\bigskip

After presenting results about the mean value and bounds of the RNG,
we now turn to discuss properties of its value distribution and limit
points of its subsequences.
\begin{thm}
\label{thm:probability}For $\sigma>0$ define the function
\begin{align}
 & F_{\sigma}:\mathbb{R}\rightarrow\left[0,1\right],\label{eq:-20}\\
 & F_{\sigma}\left(x\right):=\lim_{N\rightarrow\infty}\frac{1}{N}\left|\left\{ n\leq N~:~d_{n}\left(\sigma\right)\leq x\right\} \right|.
\end{align}
Then $F_{\sigma}$ is a cumulative distribution function whose associated
probability measure $\mu_{\sigma}$ is compactly supported on $\left[0,\frac{4\sigma}{\ell_{\min}}\right]$.

If for all $e,e'\in\E$, $\ell_{e}/\ell_{e'}\in\mathbb{Q}$, then
$\mu_{\sigma}$ is finitely supported.
\end{thm}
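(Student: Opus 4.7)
The strategy is to prove weak convergence of the empirical measures $\mu_{\sigma,N}:=\frac{1}{N}\sum_{n=1}^{N}\delta_{d_{n}(\sigma)}$ to a probability measure $\mu_\sigma$ on $\R$, from which $F_\sigma$ arises as the distribution function at its continuity points. Since $(d_n(\sigma))_{n=1}^\infty$ is uniformly bounded by $4\sigma/\ell_{\min}$ (Theorem \ref{thm:explicit_bounds}), by the Riesz representation theorem it suffices to show that $\frac{1}{N}\sum_{n=1}^{N}h(d_{n}(\sigma))$ converges for every continuous $h:[0,4\sigma/\ell_{\min}]\to\R$, which by Stone--Weierstrass reduces to convergence of the moments $\frac{1}{N}\sum_{n=1}^N d_n(\sigma)^p$ for every $p\in\N$.

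The first step is to encode $d_n(\sigma)$ as the value at the torus point $\kappa_{n}:=k_{n}\vec{\ell}\bmod 2\pi\in\mathbb{T}^{E}$ of a continuous function $G_\sigma:\mathbb{T}^E\to[0,\infty)$, up to an error that is negligible in the Ces\`aro limit. This is in the same spirit as the derivation of Theorem \ref{thm:mean_RNG}: using the secular equation one expresses eigenfunction data in terms of $\kappa_n$, and then applies the Hellmann--Feynman identity
\[
d_{n}(\sigma)=\int_{0}^{\sigma}\sum_{v\in\VR}\bigl|f_{n}^{(s)}(v)\bigr|^{2}\,ds,
\]
together with the fact that each $|f_n^{(s)}(v)|^2$ is a continuous function of $\kappa_n(s)$ modulo $O(1/k_n)$ corrections coming from the $k$-dependence of the Robin scattering matrix. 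Analogous representations for higher moments $d_n(\sigma)^p$ follow by multiplying out the integral.

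The second step is equidistribution. By the Barra--Gaspard theorem, $\{\kappa_n\}_{n=1}^\infty$ equidistributes on the closure $X_{\vec{\ell}}:=\overline{\{k\vec{\ell}\bmod 2\pi:k\geq 0\}}\subseteq\mathbb{T}^{E}$ with respect to an explicit absolutely continuous probability measure $\mu_{\mathrm{BG}}$. Combining with Step 1 gives
\[
\lim_{N\to\infty}\frac{1}{N}\sum_{n=1}^{N}h(d_{n}(\sigma))=\int_{X_{\vec{\ell}}}h\circ G_{\sigma}\,d\mu_{\mathrm{BG}},
\]
so that $\mu_{\sigma}:=(G_{\sigma})_{*}\mu_{\mathrm{BG}}$ is the desired limiting distribution, with compact support in $[0,4\sigma/\ell_{\min}]$ by Theorem \ref{thm:explicit_bounds}. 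For the finite support claim, if $\ell_e/\ell_{e'}\in\Q$ for all $e,e'$, write $\ell_e=c_e L_0$ with $c_e\in\N$. Then $k\mapsto k\vec{\ell}\bmod 2\pi$ is periodic in $k$ with period $2\pi/L_0$, so $X_{\vec{\ell}}$ is a single one-dimensional closed circle in $\mathbb{T}^E$. The Neumann secular equation inherits the same periodicity, hence contains only finitely many eigenvalues per period, and the corresponding $\{\kappa_n\}$ form a finite set on $X_{\vec{\ell}}$. Pushing this finite set through $G_\sigma$ yields the finitely supported $\mu_\sigma$.

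The main obstacle is the precise execution of Step 1: writing $d_n(\sigma)=G_\sigma(\kappa_n)+\varepsilon_n$ with $\frac{1}{N}\sum_{n\leq N}|\varepsilon_n|\to 0$, while the Robin scattering matrix depends genuinely on $k$ through $\sigma/k$ terms that do not collapse to functions of $\kappa$. A cleaner route may be to work directly with $2k_n(0)(k_n(\sigma)-k_n(0))$, which differs from $d_n(\sigma)$ by a lower-order term and is more transparently read off the secular determinants; one then controls the residual via the Lipschitz bound of Theorem \ref{thm:Lipschitz_RNG} and the local Weyl law of Theorem \ref{thm:Weyl-law}. The same bookkeeping must be revisited in the rational-ratios case to check that the Robin correction respects the periodicity and so does not smear the finitely many atoms into a continuum.
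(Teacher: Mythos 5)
Your proposal has the same skeleton as the paper's proof --- represent $d_n(\sigma)$ asymptotically as $\sigma g(\vec{\kappa}_n)$ for a function $g$ built from the scattering amplitudes, invoke Barra--Gaspard equidistribution, realize $\mu_\sigma$ as the pushforward of the Barra--Gaspard measure under $\sigma g$, and reduce the rational case to a periodic flow on a one-dimensional subtorus meeting $\Sigma$ in finitely many points --- but the packaging differs. The paper does not go through moments: for each fixed $x$ it applies the ergodic theorem (Theorem \ref{thm:ergodic-theorem}) directly to the indicator $\eta_x$ of the sublevel set $\left\{ g\leq x/\sigma\right\}$, and the real work is showing that $\eta_x$ is Riemann integrable, which follows from the real-analyticity of $g$ on $\sreg$ (so that on each connected component either $g$ is constant or its level sets have measure zero). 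Your Step 1, which you rightly flag as the main obstacle, is exactly what the paper already has in hand from Lemmas \ref{lem:eigenfunc-conv} and \ref{lem:eigmean-conv}: the uniform convergence $|f_n^{(t)}(v)-f_n^{(0)}(v)|\to 0$ lets one replace $\int_0^\sigma|f_n^{(t)}(v)|^2\,\rmd t$ by $\sigma|f_n^{(0)}(v)|^2$, and the latter equals $\sigma g(k_n\vec{\ell})$ up to $O(1/k_n)$ by the computation in (\ref{eq:-9}). One imprecision to fix: the points $\vec{\kappa}_n$ equidistribute on $\sreg$ with respect to the Barra--Gaspard measure, which lives on the hypersurface $\Sigma$; they do not equidistribute with respect to an absolutely continuous measure on the orbit closure $X_{\vec{\ell}}\subseteq\mathbb{T}^E$.

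The one place your route genuinely falls short of the statement is the passage from weak convergence to the assertion that the limit defining $F_\sigma(x)$ exists for \emph{every} $x$. Riesz representation plus Stone--Weierstrass gives convergence of $\frac{1}{N}\sum_{n\leq N}h(d_n(\sigma))$ for continuous $h$, hence weak convergence of the empirical measures, hence convergence of $\frac{1}{N}\#\left\{ n\leq N:d_n(\sigma)\leq x\right\}$ only at continuity points of $F_\sigma$. Since $\mu_\sigma$ may have atoms --- and in the rational-ratios case it is purely atomic --- this does not deliver the theorem as stated. To close the gap you need the pointwise-in-$x$ argument anyway: the Riemann integrability of each $\eta_x$ as above, together with a verification that replacing $d_n(\sigma)$ by $\sigma g(\vec{\kappa}_n)$ does not disturb the counting at the level $x$. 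So the moments machinery is not wrong, but it is simultaneously heavier and weaker than the paper's direct use of the ergodic theorem on sublevel-set indicators; I would recommend restructuring around the latter.
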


\begin{rem}
We conjecture that if not all ratios of edge lengths are rational,
then the measure $\mu_{\sigma}$ is absolutely continuous with respect
to the Lebesgue measure. This is further discussed in Remark \ref{rem:Absolutely-continuous-conjecture}
after the proof of the theorem.
\end{rem}

\begin{rem}
Theorem \ref{thm:probability} is similar in spirit to theorem $1.3$
in \cite{RivRoy_jphys20}, which holds for star graphs with complex
Robin parameter.
\end{rem}

Interestingly, applying Theorem \ref{thm:probability} yields information
about the possible limit points of the RNG sequence. While for particular
values of $\sigma$ the sequence $\left(d_{n}\left(\sigma\right)\right)_{n=1}^{\infty}$
does not necessarily converge (as seen in Figure \ref{fig:star_gap}),
we show that there exist subsequences of $\left(d_{n}\left(\sigma\right)\right)_{n=1}^{\infty}$
which converge uniformly to a linear function.
\begin{thm}
\label{thm:converging_subsequence} There exists a finite collection
of closed intervals $\left(\left[a_{i},b_{i}\right]\right)_{i=1}^{N}$,
such that for every $c\in\bigcup_{i=1}^{N}\left[a_{i},b_{i}\right]$,
there exists a subsequence $\left(d_{n_{m}}\left(\sigma\right)\right)_{m=1}^{\infty}$
which converges to the linear function $c\sigma$ uniformly on any
compact interval. These are the only possible partial limits of $d_{n}\left(\sigma\right)$.\\
The intervals $\left(\left[a_{i},b_{i}\right]\right)_{i=1}^{N}$ do
not depend on the graph edge lengths, as long as the edge lengths
are linearly independent over $\mathbb{Q}$.\\
In the other extreme, if for all $e,e'\in\E,$ $\ell_{e}/\ell_{e'}\in\mathbb{Q}$
then these intervals are in fact degenerate ($a_{i}=b_{i}$).
\end{thm}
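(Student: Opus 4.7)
The plan is to combine the Lipschitz compactness from Theorem \ref{thm:Lipschitz_RNG} with a secular-manifold description of the Neumann-Kirchhoff eigenvalues, and to identify the partial limits through a uniform linearization of the Robin perturbation in $\sigma$.

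By Theorem \ref{thm:Lipschitz_RNG} and $d_n(0) = 0$, on every compact interval $[0,M]$ the family $(d_n)_{n=1}^{\infty}$ is uniformly bounded and equicontinuous, so Arzelà--Ascoli reduces the theorem to identifying which continuous functions on $[0,M]$ arise as uniform limits of subsequences of $(d_n)$. To this end, I would associate with each Neumann-Kirchhoff wavenumber $k_n$ the torus point $\kappa_n := (k_n \ell_e \bmod 2\pi)_{e\in\E} \in \mathbb{T}^{E}$. These points lie on the secular manifold $\Sigma \subset \mathbb{T}^{E}$ cut out by the vanishing of the Neumann-Kirchhoff secular determinant, a real-analytic hypersurface with finitely many connected components. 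When the lengths $(\ell_e)_{e\in\E}$ are linearly independent over $\mathbb{Q}$, a Barra--Gaspard-type equidistribution ensures $(\kappa_n)$ is dense in $\Sigma$; when all length ratios are rational, $(\kappa_n)$ attains only finitely many values on $\Sigma$.

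The central step is to establish the uniform asymptotic
\begin{equation*}
d_n(\sigma) = c(\kappa_n)\,\sigma + O\bigl(1/k_n\bigr), \quad \text{uniformly for } \sigma \in [0,M],
\end{equation*}
for a suitable continuous function $c : \Sigma \to [0,\infty)$. One writes the Robin secular equation as $\det\bigl(I - e^{\rmi k\vec{\ell}}\,S^{(\sigma)}(k)\bigr) = 0$ and notes that the vertex scattering matrix at each Robin vertex differs from its Neumann-Kirchhoff counterpart by $O(\sigma/k)$. Applying the implicit function theorem near each unperturbed zero $k_n$ yields $k_n(\sigma) - k_n = \frac{\sigma}{2k_n}\,c(\kappa_n) + O(k_n^{-2})$ uniformly in $\sigma\in[0,M]$, and multiplication by $k_n(\sigma) + k_n$ gives the claimed identity. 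First-order eigenvalue perturbation identifies $c(\kappa^{\ast})$ with $\sum_{v\in\VR} |f(v)|^2$ evaluated on the eigenfunction parameterized by $\kappa^{\ast}$, which is consistent with Theorem \ref{thm:mean_RNG} upon averaging via Theorem \ref{thm:Weyl-law}. Together with the continuity of $c$, any sub-subsequence with $\kappa_{n_m}\to\kappa^{\ast}$ then yields $d_{n_m}(\sigma) \to c(\kappa^{\ast})\,\sigma$ uniformly on compact $\sigma$-intervals, which is the required form of the partial limits.

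It remains to read off the structure of the set of admissible slopes. In the $\mathbb{Q}$-independent case, the closure of $\{\kappa_n\}$ equals all of $\Sigma$, so the admissible slopes form $c(\Sigma)$, a finite union $\bigcup_{i=1}^{N}[a_i,b_i]$ of compact intervals since $\Sigma$ is compact with finitely many components and $c$ is continuous; because $\Sigma$ and $c$ depend only on the combinatorial graph and the choice of $\VR$, these intervals are independent of the particular lengths. In the commensurable case $(\kappa_n)$ takes only finitely many values, so $c(\kappa_n)$ is a finite set and $a_i = b_i$. The hard part will be the uniform asymptotic identity above: the implicit-function step must be controlled uniformly in $\sigma$ and up to the singular strata of $\Sigma$ where the Jacobian of the unperturbed secular determinant degenerates, which will likely require either a stratification argument or a separate treatment of the measure-zero exceptional set (together with using the equicontinuity from Theorem \ref{thm:Lipschitz_RNG} to push the limit through such points).
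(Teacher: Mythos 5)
Your proposal follows essentially the same architecture as the paper's proof: torus points $\vec{\kappa}_n$ on the secular manifold, density of the linear flow in the $\mathbb{Q}$-independent case, a continuous slope observable whose image on a space with finitely many connected components is a finite union of closed intervals, and restriction to the appropriate subtorus (one-dimensional, hence finitely many values) in the commensurable case. The one place you genuinely diverge is the central linearization $d_n(\sigma)=c(\vec{\kappa}_n)\,\sigma+o(1)$: you propose to obtain it by an implicit-function-theorem perturbation of the secular determinant in $\sigma$, and you correctly identify uniformity near the singular strata of $\Sigma$ as the hard part. The paper sidesteps exactly this difficulty by reusing machinery already in place: Lemma \ref{lem:hadamard} gives $d_n(\sigma)=\int_0^\sigma|f_n^{(t)}(v)|^2\,\rmd t$; Lemmas \ref{lem:eigenfunc-conv} and \ref{lem:eigmean-conv} give $|f_n^{(t)}(v)|^2-|f_n^{(0)}(v)|^2\to 0$ uniformly in $t\in[0,\sigma]$, proved by comparing eigenphases of $S^{(t)}\rme^{\rmi k\L}$ and $S^{(0)}\rme^{\rmi k\L}$ (which differ by $O(t/k)$) together with monotonicity of the eigenphases and the mean value theorem --- morally your $O(\sigma/k)$ perturbation, but executed without inverting the secular function at its zeros; and the explicit formula (\ref{eq:-9}) identifies $|f_n^{(0)}(v)|^2$ with $g(\vec{\kappa}_n)+O(1/k_n)$ for the function $g$ of (\ref{eq:-18}). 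Two cautions. First, the slope observable is naturally defined and continuous only on $\sreg$, not on all of $\Sigma$ (at singular points the kernel of $I-S\rme^{\rmi\boldsymbol{\kappa}}$ is multidimensional and the amplitudes are not determined up to phase), so the admissible slopes are the components of $\overline{g\left(\sreg\right)}$ rather than of $c(\Sigma)$; your proposed stratification/exceptional-set treatment would still be needed to make the IFT route rigorous there. Second, your justification of length-independence (that $c$ depends only on the combinatorial data) is not literally accurate, since the $L^2$-normalizing denominator of $g$ contains the $\ell_e$; the paper's own one-line justification has the same weakness, so this is inherited rather than introduced, but it is worth flagging.
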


\bigskip
\begin{rem}
We comment on several possible generalizations of the results above.
\begin{enumerate}
\item For $\sigma<0$, the operator $H_{\sigma}$ has only finitely many
negative eigenvalues. Thus, all results except for Theorem \ref{thm:explicit_bounds}
still hold for $\sigma<0$ (although the corresponding proofs are
slightly more subtle). In addition, the results may be naturally extended
for the case where each vertex in $\VR$ has its own value of $\sigma$
(rather than $\sigma$ being the same at all these vertices). The
adaption of the statements is rather straightforward.
\item In all results expect for Theorem \ref{thm:explicit_bounds}, one
may replace the Laplacian with a Schr\"{o}dinger operator $H^{(\sigma)}=-\frac{d^{2}}{dx^{2}}+V\left(x\right)$,
with $V\left(x\right)\in L^{\infty}\left(\Gamma\right)$ and the same
vertex conditions. This is done in \cite{BifKer_arXiv} for Theorems
\ref{thm:mean_RNG}, \ref{thm:Weyl-law}.
\end{enumerate}
\end{rem}

The structure of the paper is as follows. Section \ref{sec:Review-tools}
is devoted to introducing several tools which are required for the
proofs. Section \ref{sec:Lipschitz-proof} contains the proof of Theorem
\ref{thm:Lipschitz_RNG}. The local Weyl law and the expressions of
second moments of the scattering amplitudes (Theorem \ref{thm:Weyl-law})
are proven in Section \ref{sec:Proof-Weyl}, and are then used in
Section \ref{sec:mean-proof} to compute the Ces\`{a}ro mean which
is stated in Theorem \ref{thm:mean_RNG}. Section \ref{sec:conv-sub-proof}
is dedicated to the proofs of Theorems \ref{thm:probability} and
\ref{thm:converging_subsequence}; it also contains a discussion of
the probability distribution and limit points of the RNG. The explicit
bounds on the RNG from Theorem \ref{thm:explicit_bounds} are proven
and discussed in Section \ref{sec:bounds proof}. Section \ref{sec:discuss_mean_RNG}
provides an additional approach for deriving the mean value of the
RNG and its convergence to the limiting mean value. Finally, Section
\ref{sec:discussion} contains a comparison of our results to the
analogous statements for domains and star graphs, as well as several
open questions.

\bigskip

\section{Review of tools and methods for proofs\label{sec:Review-tools}}

In this section, we review some existing tools of spectral analysis
on metric graphs, which are useful in the proofs presented in the
following sections.

\subsection{Expressing the Robin-Neumann gap via eigenfunction values}

A main ingredient used in the proofs is the following formula, which
relates the RNG to the values that the eigenfunctions attain on the
set $\mathcal{V}_{R}$.
\begin{lem}
\label{lem:hadamard}The RNG is given by
\begin{equation}
d_{n}\left(\sigma\right)=\sum_{v\in\mathcal{V}_{\mathcal{R}}}\int_{0}^{\sigma}\left|f_{n}^{\left(t\right)}\left(v\right)\right|^{2}dt,\label{eq:-1}
\end{equation}
where $f_{n}^{\left(t\right)}$ is an $n^{\textrm{th}}$ $L^{2}$
normalized eigenfunction of $H^{(t)}$.
\end{lem}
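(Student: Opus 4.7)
The plan is to use the Hadamard (Feynman--Hellmann) variation formula, exploiting the fact that the operators $H^{(t)}$ all share the same quadratic form domain. Specifically, for $t\geq 0$ the operator $H^{(t)}$ is the self-adjoint operator associated to the closed quadratic form
\begin{equation*}
Q_t(f) = \int_\Gamma |f'|^2\,dx + t\sum_{v\in\VR} |f(v)|^2,
\end{equation*}
defined on the form domain $\{f\in W^{1,2}(\Gamma) : f \text{ continuous at every } v\in\V\}$, which is independent of $t$. Thus $\{H^{(t)}\}_{t\geq 0}$ is a self-adjoint real-analytic family of type (B) in Kato's sense, to which analytic perturbation theory applies.

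By Kato's theorem I may, locally in $t$, choose real-analytic labelings $\tilde\lambda_j(t)$ of the eigenvalues together with real-analytic $L^{2}$-normalized eigenfunctions $\tilde f_j^{(t)}$ (these analytic branches may differ from the non-decreasing ordering $\lambda_n(t)$ only by a permutation). Differentiating the identity $\tilde\lambda_j(t) = Q_t(\tilde f_j^{(t)})$, using $\|\tilde f_j^{(t)}\|_{L^2} = 1$, and noting that the $t$-derivative of $Q_t$ at fixed $f$ equals $\sum_{v\in\VR}|f(v)|^2$, the standard Feynman--Hellmann computation yields
\begin{equation*}
\frac{d\tilde\lambda_j(t)}{dt} = \sum_{v\in\VR} \bigl|\tilde f_j^{(t)}(v)\bigr|^2.
\end{equation*}
Integrating each analytic branch from $0$ to $\sigma$ gives the claimed formula along that branch.

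To pass from analytic branches to the ordered eigenvalues $\lambda_n(t)$, I would observe that crossings of analytic branches form a discrete (hence measure-zero) set in $[0,\sigma]$, and that the integrand $\sum_{v\in\VR}|f(v)|^2$ depends only on the eigenspace at $\lambda_n(t)$ and not on the particular orthonormal eigenbasis: for any orthonormal basis $\{f^{(t)}_{n,1},\ldots,f^{(t)}_{n,m}\}$ of the eigenspace of $\lambda_n(t)$, the sum $\sum_i |f^{(t)}_{n,i}(v)|^2$ is the diagonal of the kernel of the spectral projector at $v$, which is basis-independent. Hence off the crossing set, the integrand in \eqref{eq:-1} is unambiguous, and the ordered eigenvalue function $\lambda_n(\cdot)$ is continuous (by min-max) and piecewise analytic, so it equals the integral of its a.e.\ derivative.

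The main technical point to handle carefully is precisely this bookkeeping at eigenvalue crossings and in the presence of multiplicities. I expect no genuine obstacle: the Kato analytic framework guarantees that one can always locally straighten the labels, and the orthogonal-invariance of the integrand ensures that any consistent $L^2$-normalized choice $f_n^{(t)}$ of $n$-th eigenfunction (as stated in the lemma) yields the same integrated value. Together these two facts reduce the problem to the one-dimensional analytic identity already proved on each branch.
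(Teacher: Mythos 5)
Your proof is correct and takes essentially the same route as the paper: both integrate the Hadamard/Feynman--Hellmann identity $\frac{d\lambda_n}{dt}=\sum_{v\in\VR}\left|f_n^{(t)}(v)\right|^2$ in $t$ and use (piecewise) real analyticity of the eigenvalue branches to dismiss crossings as a finite, measure-zero set. The only differences are cosmetic: you derive that identity from Kato's type-(B) quadratic-form framework where the paper cites it from Berkolaiko--Kuchment, and your basis-independence remark really controls the trace of the spectral projector over the whole eigenspace rather than a single eigenfunction value --- but that distinction only matters in the identically-degenerate corner case (where the restricted perturbation matrix is forced to be scalar anyway), which the paper likewise relegates to a citation and a remark.
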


\begin{rem}
\label{rem:additivity}Lemma \ref{lem:hadamard} allows us to assume
in the forthcoming proofs that the graph contains a single Robin vertex
$v$. The proof for more than one Robin vertex would then follow from
the additivity of Equation (\ref{eq:-1}). 
\end{rem}

\begin{proof}
We use a straightforward generalization\footnote{The original formula refers only to a single Robin vertex. Using additivity
arguments in its proof, the more generalized version with multiple
Robin vertices follows. } of a formula from \cite[prop. 3.1.6]{BerKuc_incol12},
\begin{equation}
\frac{d\lambda_{n}\left(t\right)}{dt}=\sum_{v\in\mathcal{V}_{\mathcal{R}}}\left|f_{n}^{\left(t\right)}\left(v\right)\right|^{2}.\label{eq:-43}
\end{equation}
The formula holds for all values of $t$ for which $\lambda_{n}\left(t\right)$
is simple.

Note that unless $\lambda_{n}\left(t\right)$ is a multiple eigenvalue
for all $t\geq0$, then the set $D\subset\left[0,\sigma\right]$ of
$t$ values for which $\lambda_{n}\left(t\right)$ is non-simple must
be finite. Indeed, by \cite[thm 3.1.2]{BerKuc_graphs}, the eigenvalues
$\lambda_{n}\left(t\right)$ are piecewise real analytic in $t$ (see
also \cite[thms. 3.8, 3.10]{BerKuc_incol12}). Thus, if $D\subset\left[0,\sigma\right]$
was infinite, two of the eigenvalue curves would agree on a set with
an accumulation point, and thus agree everywhere. Furthermore, if
$\lambda_{n}\left(t\right)$ is a multiple eigenvalue for all $t$,
then by \cite[thm 3.1.4]{BerKuc_graphs}, one can locally choose an
analytic orthonormal basis for the eigenspace of $\lambda_{n}\left(t\right)$.
After doing so, \cite[thms 3.23, 3.24]{Latushkin2020} shows that
(\ref{eq:-43}) also holds for the case where $\lambda_{n}\left(t\right)$
is a multiple eigenvalue for all $t$, where this time $f_{n}^{\left(t\right)}$
is chosen to be an arbitrary $L^{2}$ normalized eigenfunction from
the given eigenspace.

Either way, we conclude that (\ref{eq:-43}) holds almost everywhere,
and thus
\begin{equation}
d_{n}\left(\sigma\right)=\lambda_{n}\left(\sigma\right)-\lambda_{n}\left(0\right)=\int_{0}^{\sigma}\frac{d\lambda_{n}\left(t\right)}{dt}dt=\sum_{v\in\mathcal{V}_{\mathcal{R}}}\int_{0}^{\sigma}\left|f_{n}^{\left(t\right)}\left(v\right)\right|^{2}dt.\label{eq:-44}
\end{equation}
\end{proof}
\begin{rem}
When applying Lemma \ref{lem:hadamard} in the subsequent sections,
we shall conveniently assume that $\lambda_{n}\left(t\right)$ is
not a multiple eigenvalue for all $t$. The proof above shows that
this assumption makes no difference, as long as one appropriately
chooses the eigenfunction $f_{n}^{\left(t\right)}$ in the degenerate
case. To avoid this technicality, we focus on the more standard non-degenerate
case.
\end{rem}

\subsection{Scattering formalism and the secular equation}

Let $H^{(\sigma)}$ be the operator introduced in Subsection \ref{subsec:Basic-definitions}.
An eigenfunction $f$ of $H^{(\sigma)}$, with eigenvalue $k^{2}>0$,
can be written on each graph edge $e\in\mathcal{E}$ as
\begin{equation}
f|_{e}\left(x\right)=a_{e}\exp\left(\rmi kx\right)+a_{\hat{e}}\exp\left(\rmi k\left(\ell_{e}-x\right)\right).\label{eq:-6}
\end{equation}
Thus, $f$ can be described by the vector of coefficients $\vec{a}=\left(a_{1},a_{\hat{1}},...,a_{E},a_{\hat{E}}\right)\in\mathbb{C}^{2E}$,
which depends on the wave number $k$. We can think of $a_{e}$ as
representing the amplitude of a (one-dimensional) plane wave propagating
along the edge $e$ in a certain direction. Similarly, $a_{\hat{e}}$
represents a wave propagating along the same edge, but in the opposite
direction. Adopting this physical interpretation, we may consider
the graph as a directed graph. Hence, we denote directed edges with
opposite directions by $e$ and $\hat{e}$. We follow here the theory
which was originally developed by Kottos and Smilanksy in \cite{KotSmi_prl00,KotSmi_ap99}.

We note that (\ref{eq:-6}) yields straightforwardly that $-f''=k^{2}f$.
Yet, the form (\ref{eq:-6}) does not guarantee that $f$ satisfies
the Robin vertex conditions, as in (\ref{eq:-17-1}),(\ref{eq:-18-1}).
In order to satisfy this, one introduces a diagonal edge length matrix
$\L:=\mathrm{diag}\left(\ell_{1},\ell_{1},\ell_{2},\ell_{2},...,\ell_{E},\ell_{E}\right)$
and a unitary matrix $S^{(\sigma)}\in U\left(2E\right)$. The expression
for $S^{\left(\sigma\right)}$ depends on the vertex conditions and
the wave number $k$ as follows. Given two directed edges $e,e'$,
we write that $e\rightarrow e'$ at $v$ if the end vertex of $e$
is $v$ and the starting vertex of $e'$ is $v$. Using this notation,
we set (see \cite[eq. (3.6)]{GnuSmi_ap06})
\begin{equation}
S_{e'e}^{\left(\sigma\right)}\left(k\right)=\begin{cases}
\frac{2}{\deg\left(v\right)+\frac{\rmi\sigma}{k}}-1 & e'=\hat{e}\\
\frac{2}{\deg\left(v\right)+\frac{\rmi\sigma}{k}} & e\rightarrow e'\text{ at \ensuremath{v} and }e'\neq\hat{e}\\
0 & \text{Otherwise.}
\end{cases}\label{eq:-32-2}
\end{equation}
After some computation, one gets that

\begin{equation}
\left(I-S^{(\sigma)}\rme^{\rmi k\L}\right)\vec{a}=0,\label{eq:-2}
\end{equation}
whenever $k^{2}>0$ is an eigenvalue of the graph. In such a case,
the amplitude vector $\vec{a}$ in (\ref{eq:-2}) characterizes the
eigenfunction $f$ of that eigenvalue, as in (\ref{eq:-6}). This
formalism also bears an interesting physical point of view in terms
of scattering dynamics on the graph -- see \cite{GnuSmi_ap06} for
a more elaborate description. We summarize the discussion above with
\begin{thm}
\label{thm:secular-equation}(\cite{KotSmi_ap99,KotSmi_prl00}) Let
$\sigma\geq0$. $k^{2}>0$ is an eigenvalue of $H^{(\sigma)}$ if
and only if
\begin{equation}
\det\left(I-S^{(\sigma)}\rme^{\rmi k\L}\right)=0.\label{eq:sec-equation}
\end{equation}
\end{thm}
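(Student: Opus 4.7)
The plan is to establish a bijection between nonzero solutions of the eigenvalue problem $H^{(\sigma)} f = k^2 f$ (with $k>0$) and nonzero vectors $\vec{a} \in \mathbb{C}^{2E}$ in the kernel of $I - S^{(\sigma)} \rme^{\rmi k \L}$; the equivalence (\ref{eq:sec-equation}) is then immediate from linear algebra. First, I would observe that on the interior of each edge, the general $L^2$ solution of $-f'' = k^2 f$ is exactly an ansatz of the form (\ref{eq:-6}), so every eigenfunction admits such a representation for some $\vec{a}$, and conversely every such ansatz automatically satisfies the eigenvalue equation edge-wise. The problem therefore reduces entirely to translating the Robin/Kirchhoff vertex conditions (\ref{eq:-17-1})--(\ref{eq:-18-1}) into linear constraints on $\vec{a}$.

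Next, I would do the translation vertex by vertex. Fix $v \in \V$ of degree $d$, orient all incident edges so that $v$ sits at their endpoint $\ell_e$, and collect the incoming amplitudes $\{a_e \rme^{\rmi k \ell_e}\}$ and the outgoing amplitudes $\{a_{\hat e}\}$ as vectors in $\mathbb{C}^d$. A direct evaluation of (\ref{eq:-6}) and its outward derivative at $v$ gives $f|_e(v) = a_e \rme^{\rmi k \ell_e} + a_{\hat{e}}$ and $f'|_e(v) = -\rmi k \bigl(a_e \rme^{\rmi k \ell_e} - a_{\hat{e}}\bigr)$. Substituting these into continuity and the Robin relation yields $d$ equations: $d-1$ continuity equations identify a common value $f(v)$, and the remaining one,
\begin{equation}
-\rmi k \sum_{e \in \Ev} \bigl(a_e \rme^{\rmi k \ell_e} - a_{\hat{e}}\bigr) = \sigma f(v),
\end{equation}
combined with $a_{\hat e} = f(v) - a_e \rme^{\rmi k \ell_e}$, can be solved for each $a_{\hat e}$ as an explicit linear combination of the incoming amplitudes. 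A short computation reproduces the entries in (\ref{eq:-32-2}); assembling the local blocks over all vertices is exactly (\ref{eq:-2}).

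The only genuinely delicate point is the sign and orientation bookkeeping in the vertex computation: one must track that outward derivatives at $v$ introduce a sign relative to the parametrization, and that the reflection term $\frac{2}{\deg(v)+\rmi\sigma/k}-1$ on the diagonal of $S^{(\sigma)}$ arises from treating an edge as simultaneously incoming and outgoing. This is standard and can be cited from \cite{KotSmi_prl00,GnuSmi_ap06}; verifying that the $\sigma$-dependent denominator in (\ref{eq:-32-2}) is precisely what the Robin term $\sigma f(v)$ contributes (as opposed to the Neumann-Kirchhoff case $\sigma = 0$) is a one-line check once the Neumann-Kirchhoff scattering matrix is in hand. With the local-to-global assembly done, a nonzero $\vec a$ in $\ker(I - S^{(\sigma)} \rme^{\rmi k \L})$ exists if and only if the determinant vanishes, completing the proof.
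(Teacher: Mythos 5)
Your proposal is correct and follows essentially the same route as the paper, which does not prove this theorem in detail but cites it and sketches exactly this derivation: write each eigenfunction in the plane-wave form (\ref{eq:-6}), translate the vertex conditions (\ref{eq:-17-1})--(\ref{eq:-18-1}) into the local scattering relations (\ref{eq:-32-2}), assemble them into $\left(I-S^{(\sigma)}\rme^{\rmi k\L}\right)\vec{a}=0$, and conclude via the vanishing determinant. Your vertex computation (including the sign from the outward derivative and the value $f(v)=\frac{2}{\deg(v)+\rmi\sigma/k}\sum_{e}a_{e}\rme^{\rmi k\ell_{e}}$ that produces the entries of $S^{(\sigma)}$) is the standard Kottos--Smilansky argument from the cited references.
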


The equation $\det\left(I-S\rme^{\rmi k\L}\right)=0$ which describes
the graph's eigenvalues is frequently called the \emph{secular equation}
(or the \emph{secular function}, when referring just to its left hand
side). Theorem \ref{thm:secular-equation} actually holds for a more
general class of vertex conditions (once $S$ is appropriately modified),
see \cite{BerKuc_graphs}.

\subsection{The secular manifold \label{subsec:The-secular-manifold}}

We now focus on the Neumann-Kirchhoff Laplacian $H^{(0)}$. Motivated
by Theorem \ref{thm:secular-equation}, we can define the following
function on $\mathbb{R}^{E}$:
\begin{equation}
\tilde{F}\left(\vec{x}\right)=\det\left(I-S\rme^{\rmi\boldsymbol{x}}\right),\label{eq:-3}
\end{equation}
where $\vec{x}:=(x_{1},...,x_{E})$, $\boldsymbol{x}:=\mathrm{diag}\left(x_{1},x_{1},...,x_{E},x_{E}\right)$,
and $S$ here is taken to be $S^{\left(\sigma\right)}$ with $\sigma=0$
(see (\ref{eq:-32-2})). The function $\tilde{F}$ is clearly $2\pi$-periodic
in each of its components, and we can thus consider it as a function
$F$ on the torus $\mathbb{T}^{E}=\mathbb{R}^{E}/2\pi\mathbb{Z}^{E}$.

By Theorem \ref{thm:secular-equation}, we conclude that $k^{2}$
is an eigenvalue of $H^{\left(0\right)}$ if and only if $F\left(\vec{\kappa}\right)=0$,
where $\vec{\kappa}:=k\vec{\ell}\mod2\pi=\left(k\ell_{1}\mod2\pi,...,k\ell_{E}\mod2\pi\right)$.
With this in mind, we can define the set
\begin{equation}
\Sigma:=\left\{ \vec{\kappa}\in\mathbb{T}^{E}:F\left(\vec{\kappa}\right)=0\right\} .\label{eq:-38-2}
\end{equation}
$\Sigma$ is a compact algebraic subvariety of the torus $\mathbb{T}^{E}$
known as the secular manifold\footnote{This is a slight misnomer, since generally the secular manifold could
have singular points.}, see \cite{AloBanBer_cmp18}. The eigenvalues of $H^{(0)}$ are thus
determined by the $k$ values such that the linear torus flow $\phi\left(k\right)=\left(k\ell_{1}\mod2\pi,...,k\ell_{E}\mod2\pi\right)$
intersects the secular manifold\footnote{To be precise, those $k$ values are the square roots of the eigenvalues
of $H^{(0)}$.}. Moreover, we can define a map between an eigenfunction $f$ of $H^{(0)}$
with eigenvalue $k^{2}$ and the corresponding amplitude vector, $\vec{a}\in\ker\left(I-S\rme^{\rmi k\L}\right)$,
such that the relation (\ref{eq:-6}) holds. This map is in fact a
linear bijection between the $k^{2}$-eigenspace of $H^{(0)}$ and
$\ker\left(I-S\rme^{\rmi k\L}\right)$, \cite[lem. 4.12]{Alon_PhDThesis}.
With this in mind, a special emphasis is given to the simple eigenvalues
of $H^{(0)}$. To treat those, we define: 
\begin{equation}
\sreg:=\left\{ \vec{\kappa}\in\Sigma:\dim\ker\left(I-S\rme^{\rmi\boldsymbol{\kappa}}\right)=1\right\} ,\label{eq:sreg}
\end{equation}
where $\boldsymbol{\kappa}:=\mathrm{diag}\left(\kappa_{1},\kappa_{1},...,\kappa_{E},\kappa_{E}\right)$.
$\sreg$ is a smooth submanifold of the torus of codimension one (\cite[thm 1.1]{CdV_ahp15},\cite[thm 3.6]{AloBanBer_cmp18}).
Furthermore, the bijection mentioned above implies that $\sreg$ classifies
the simple eigenvalues of $H^{(0)}$. Namely, $k\vec{\ell}\mod2\pi\in\sreg$
if and only if $k^{2}$ is a\emph{ }\emph{simple} eigenvalue of $H^{(0)}$.
For such $\vec{\kappa}:=k\vec{\ell}\mod2\pi\in\sreg$, taking $\vec{a}(\vec{\kappa})\in\ker\left(I-S\rme^{\rmi\boldsymbol{\kappa}}\right)$
provides through (\ref{eq:-6}) the unique eigenfunction $f$ (up
to scalar multiplication) whose eigenvalue is $k^{2}$. Since we know
that $f$ is real (up to scalar multiplication), we get 
\begin{align}
\forall e\in\E,~~\forall x\in\left[0,\ell_{e}\right],\quad\quad f|_{e}\left(x\right) & =\overline{f|_{e}\left(x\right)}\quad\Leftrightarrow\label{eq:inversion-1}\\
a_{e}\exp\left(\rmi kx\right)+a_{\hat{e}}\exp\left(\rmi k\left(\ell_{e}-x\right)\right)= & \overline{a_{e}}\exp\left(-\rmi kx\right)+\overline{a_{\hat{e}}}\exp\left(-\rmi k\left(\ell_{e}-x\right)\right),\label{eq:inversion-2}
\end{align}
and so 
\begin{align}
a_{e} & =\overline{a_{\hat{e}}}\exp\left(-\rmi k\ell_{e}\right),\label{eq:inversion-3}\\
a_{\hat{e}} & =\overline{a_{e}}\exp\left(-\rmi k\ell_{e}\right).\label{eq:inversion-4}
\end{align}
Keeping in mind that the coefficients $a_{e}$ depend on $\vec{\kappa}$
(a dependence which we omitted for brevity), we get 
\begin{align}
a_{e}(\vec{\kappa}) & =\overline{a_{\hat{e}}(\vec{\kappa})}\exp\left(-\rmi\vec{\kappa}_{e}\right),\label{eq: amplitude_relation_1-1}\\
a_{\hat{e}}(\vec{\kappa}) & =\overline{a_{e}(\vec{\kappa})}\exp\left(-\rmi\vec{\kappa}_{e}\right).\label{eq: amplitude_relation_2-1}
\end{align}

We argue that (\ref{eq: amplitude_relation_1-1}),(\ref{eq: amplitude_relation_2-1})
hold for all $\vec{\kappa}\in\sreg$, as is explained in the following.
Above, (\ref{eq: amplitude_relation_1-1}),(\ref{eq: amplitude_relation_2-1})
were derived only for those $\vec{\kappa}\in\sreg$ values for which
$\vec{\kappa}:=k\vec{\ell}\mod2\pi$ holds for some value of $k\in\R$.
These values form only a countable subset of points in $\sreg$. Nevertheless,
all points of $\sreg$ have a similar spectral meaning. Namely, if
$\vec{\kappa}\in\sreg$ is such that there is no $k\in\R$ satisfying
$\vec{\kappa}:=k\vec{\ell}\mod2\pi$, we simply pick a different $\vec{\ell}'\in\R^{E}$
such that $\vec{\kappa}=k'\vec{\ell}'\mod2\pi$ for some $k'\in\R$.
Doing so means that we consider a graph $\Gamma'$ with the same connectivity
as $\Gamma$, but with edge lengths given by $\vec{\ell}'$. We obtain
that $(k')^{2}$ is an eigenvalue of the operator $H^{(0)}$ on this
modified graph $\Gamma'$. Now, if one repeats the arguments leading
to (\ref{eq: amplitude_relation_1-1}),(\ref{eq: amplitude_relation_2-1})
one concludes that (\ref{eq: amplitude_relation_1-1}),(\ref{eq: amplitude_relation_2-1})
hold for all $\vec{\kappa}\in\sreg$.

\subsection{Integrating over the secular manifold -- the Barra-Gaspard measure\label{subsec:Integrating-over-secular-manifold}}
\begin{defn}
\label{def:bg-measure}(\cite{BarGas_jsp00,BerWin_tams10,CdV_ahp15}).
The \emph{Barra-Gaspard measure} on $\sreg$ is the Radon probability
measure
\begin{equation}
\rmd\mu_{\vec{\ell}}\left(\vec{\kappa}\right)=\frac{\pi}{\left|\Gamma\right|}\cdot\frac{1}{\left(2\pi\right)^{E}}\left|\hat{n}\left(\vec{\kappa}\right)\cdot\vec{\ell}\right|\rmd s,\label{eq:-5}
\end{equation}
where $ds$ is the Lebesgue surface element and $\hat{n}$ is the
unit normal to the secular manifold. For a thorough introduction to
the Barra-Gaspard measure, see \cite{Alon_PhDThesis,AloBanBer_cmp18,AloBanBer_21arxiv,BarGas_jsp00,BerWin_tams10,CdV_ahp15}. 
\end{defn}

\begin{rem}
\label{rem:meas0} The singular set, $\Sigma\backslash\sreg$ is of
codimension at least one in $\Sigma$ \cite[thm 1.1]{CdV_ahp15},\cite[thm 3.6]{AloBanBer_cmp18},
and is thus of measure zero (since $\mu_{\vec{\ell}}$ is a Radon
measure). Thus, for the purpose of integration, we may as well extend
the Barra-Gaspard measure $\mu_{\vec{\ell}}$ to be defined over the
entire secular manifold $\Sigma$ (by setting it to be zero on the
set $\Sigma\backslash\sreg$). At any rate, the functions we will
consider and would like to integrate are only well defined on $\sreg$.
\end{rem}

The following ergodic theorem will be a main tool in proving our results:
\begin{thm}
\label{thm:ergodic-theorem}(\cite{BarGas_jsp00,BerWin_tams10,CdV_ahp15}).
Assume that the entries of the vector $\vec{\ell}$ are linearly independent
over $\mathbb{Q}$. Let $g$ be a Riemann integrable function on $\sreg$
(equivalently, $g$ is continuous almost everywhere and bounded).
Then:
\begin{equation}
\left\langle g\right\rangle _{n}:=\lim_{N\rightarrow\infty}\frac{1}{N}\sum_{n=1}^{N}g\left(k_{n}\vec{\ell}\right)=\int_{\sreg}g\left(\vec{\kappa}\right)\rmd\mu_{\vec{\ell}}\left(\vec{\kappa}\right).\label{eq:-7}
\end{equation}
\end{thm}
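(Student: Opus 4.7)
The plan is to reduce the statement to Weyl's equidistribution theorem on the torus $\mathbb{T}^E$ via a transversal counting argument. The point is that the eigenvalues $k_n^2$ of $H^{(0)}$ are precisely the values where the linear flow $\phi(t)=t\vec\ell\!\mod 2\pi$ hits the secular manifold $\Sigma$, so the Ces\`aro sum $\frac1N\sum_{n=1}^N g(k_n\vec\ell)$ is essentially a $g$-weighted count of intersections of a long piece of $\phi$ with $\sreg$. Under the $\mathbb{Q}$-independence assumption on $\vec\ell$, Weyl's theorem tells us that the piece of $\phi$ up to time $K$ equidistributes with respect to Lebesgue measure on $\mathbb{T}^E$ as $K\to\infty$, and this can be upgraded to an equidistribution statement for its crossings of any smooth transverse hypersurface, with density proportional to $|\hat n\cdot\vec\ell|$.

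First I would rescale the sum using the Weyl law. Since $k_n\sim \pi n/|\Gamma|$, the Ces\`aro sum satisfies
\begin{equation*}
\frac{1}{N}\sum_{n=1}^N g(k_n\vec\ell)=\frac{\pi}{|\Gamma|}\cdot\frac{1}{K_N}\sum_{k_n\le K_N}g\bigl(\phi(k_n)\bigr)+o(1),
\qquad K_N:=k_N.
\end{equation*}
Thus it suffices to show that for any Riemann integrable $g$ on $\sreg$,
\begin{equation*}
\lim_{K\to\infty}\frac{1}{K}\sum_{k_n\le K}g\bigl(\phi(k_n)\bigr)=\frac{1}{(2\pi)^E}\int_{\sreg}g(\vec\kappa)\,\bigl|\hat n(\vec\kappa)\cdot\vec\ell\,\bigr|\,ds.
\end{equation*}

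Second, I would establish this for a continuous $g$ compactly supported in $\sreg$ (away from $\Sigma\setminus\sreg$). Cover the support of $g$ by finitely many coordinate patches on which $\sreg$ is the graph of a smooth function and $\hat n$ is nearly constant. On each small patch $U$, equidistribution of $\phi$ together with a coarea/tubular-neighborhood argument gives
\begin{equation*}
\lim_{K\to\infty}\frac{1}{K}\#\{k_n\le K:\phi(k_n)\in U\}=\frac{|\hat n\cdot\vec\ell\,|}{(2\pi)^E}\,\mathrm{Area}(U),
\end{equation*}
because the number of crossings of the linear flow with a thin slab of width $\varepsilon$ around $U$ in time $K$ is, on the one hand, $\approx K\cdot|\hat n\cdot\vec\ell|\cdot\mathrm{Area}(U)\cdot\varepsilon/(2\pi)^E$ by equidistribution, and on the other hand $\approx \varepsilon$ times the number of actual crossings (since on a transverse patch the flow passes through at a controlled speed). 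Taking Riemann sums over partitions of $\mathrm{supp}(g)$ into shrinking patches produces the claimed integral.

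Finally, I would extend the result from continuous compactly supported $g$ to Riemann integrable $g$ on $\sreg$. Since $\mu_{\vec\ell}$ is Radon and the singular set $\Sigma\setminus\sreg$ has codimension $\ge 1$ (Remark \ref{rem:meas0}), one can squeeze a Riemann integrable $g$ between continuous functions $g_-\le g\le g_+$, compactly supported in $\sreg$, whose integrals against $\mu_{\vec\ell}$ differ by at most a prescribed $\varepsilon$; a sandwich argument then yields the result. The main obstacle is the transversal counting estimate in Step~2: tangencies where $\hat n(\vec\kappa)\cdot\vec\ell=0$ must be handled so that a point of $\sreg$ touched tangentially by $\phi$ is not overcounted, and points of $\Sigma\setminus\sreg$ (where the naive flow-intersection picture breaks down) have to be excised using that they form a set of $\mu_{\vec\ell}$-measure zero and that $g$ is only required to be defined on $\sreg$.
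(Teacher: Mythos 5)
The paper does not prove Theorem \ref{thm:ergodic-theorem} at all: it is quoted as a known tool from \cite{BarGas_jsp00,BerWin_tams10,CdV_ahp15}, so there is no internal proof to compare against. Your sketch reconstructs essentially the argument of those references -- reduce to equidistribution of the linear flow $t\mapsto t\vec{\ell}\bmod 2\pi$ on $\mathbb{T}^{E}$ (Weyl's theorem, using the $\Q$-independence), convert the Ces\`{a}ro average over $n$ into a time average over $k$ via the graph Weyl law $N(K)=\frac{|\Gamma|}{\pi}K+O(1)$, and then identify the limiting crossing density of a transverse patch $U\subset\sreg$ as $\frac{|\hat n\cdot\vec{\ell}|}{(2\pi)^{E}}\mathrm{Area}(U)$ by the thin-slab computation (time per traversal $\approx\varepsilon/|\hat n\cdot\vec{\ell}|$ against total sojourn time $\approx K\varepsilon\,\mathrm{Area}(U)/(2\pi)^{E}$). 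That is the right derivation and it explains where the weight $|\hat n\cdot\vec{\ell}|$ in the Barra--Gaspard measure comes from. Two refinements: the tangency issue you flag as the main obstacle is in fact vacuous for secular manifolds, since by Lemma \ref{lem:secular-normal} the components of the normal are $\hat n_{e}=|a_{e}|^{2}+|a_{\hat e}|^{2}\ge 0$ with $\sum_{e}\hat n_{e}=\|\vec a\|_{2}^{2}=1$, so $\hat n\cdot\vec{\ell}\ge\ell_{\min}>0$ uniformly on $\sreg$ and the flow is never tangent. The point that genuinely needs care is the one you mention last: the Ces\`{a}ro sum counts eigenvalues with multiplicity while the crossing count does not, and $g$ is undefined at $k_{n}\vec{\ell}\in\Sigma\setminus\sreg$. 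To excise this one should note that multiplicities are uniformly bounded (by $2E$), so the discrepancy is controlled by the crossing density of an arbitrarily small neighborhood of $\Sigma\setminus\sreg$, which the same slab argument bounds by the $\mu_{\vec{\ell}}$-measure of that neighborhood; this goes to zero by Remark \ref{rem:meas0}. With those two points made explicit your outline is a correct proof.
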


Thus, given a Riemann integrable function $g$ on the secular manifold,
the ergodic theorem allows us to compute the Ces\`{a}ro mean of the
sequence $\left(g\left(k_{n}\vec{\ell}\right)\right)_{n=1}^{\infty}$.
This will be the main idea behind the proof of Theorem \ref{thm:mean_RNG}.

In addition to Theorem \ref{thm:ergodic-theorem}, we will need the
following result, which gives further information about the manifold
$\sreg$ and the integration measure $\mu_{\vec{\ell}}:$
\begin{lem}
\label{lem:secular-normal}(\cite{AloBanBer_cmp18,CdV_ahp15}). Let
$\vec{\kappa}\in\sreg$ and denote $\boldsymbol{\kappa}:=\mathrm{diag}\left(\kappa_{1},\kappa_{1},...,\kappa_{E},\kappa_{E}\right)$.
Let $\vec{a}\in\ker\left(I-S\rme^{\rmi\boldsymbol{\kappa}}\right)$
be $\C^{2E}$ normalized, i.e., $\left\Vert \vec{a}\right\Vert _{2}=1$.
Let $\hat{n}\in\R^{E}$ be the unit normal to the secular manifold
at the point $\vec{\kappa}$.\\
Then, $\hat{n}$ is given by 
\begin{equation}
\forall e\in\E,\quad\quad\hat{n}_{e}=\left|a_{e}\right|^{2}+\left|a_{\hat{e}}\right|^{2}.\label{eq:-8}
\end{equation}
\end{lem}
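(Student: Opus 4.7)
My strategy is to compute the gradient of the secular function $F(\vec\kappa) = \det(I - S e^{\rmi\boldsymbol{\kappa}})$ at a point $\vec\kappa \in \sreg$ via Jacobi's formula and to identify its direction with the vector $\vec n := (|a_e|^2 + |a_{\hat e}|^2)_{e\in\E}$. The appearance of $\vec a\vec a^{*}$ (and hence of $|a_e|^2+|a_{\hat e}|^2$) will come from the rank-one structure of the adjugate at a regular point.

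First I would address the issue that $F$ is complex-valued while $\Sigma$ is a real codimension-one submanifold of $\mathbb{T}^E$. Because $S\rme^{\rmi\boldsymbol{\kappa}}$ is unitary, its eigenvalues lie on the unit circle, and the identity $1-\rme^{\rmi\theta}=-2\rmi \rme^{\rmi\theta/2}\sin(\theta/2)$ together with $\det(S\rme^{\rmi\boldsymbol{\kappa}})=\det(S)\,\rme^{2\rmi(\kappa_1+\cdots+\kappa_E)}$ allow us to factor
\[
F(\vec\kappa) = c_0\,\rme^{\rmi(\kappa_1+\cdots+\kappa_E)}\,\tilde F(\vec\kappa),
\]
where $c_0$ is a unimodular constant and $\tilde F$ is a smooth \emph{real-valued} function with $\{\tilde F=0\}=\Sigma$. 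On the zero set, $\nabla F$ and $\nabla \tilde F$ are parallel up to a unimodular scalar, so the real normal direction can be extracted from $\nabla F$.

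Second I would apply Jacobi's formula $\partial_{\kappa_e}F=\tr(\text{adj}(M)\,\partial_{\kappa_e}M)$ with $M:=I-S\rme^{\rmi\boldsymbol{\kappa}}$. At a point of $\sreg$, $\dim\ker M = 1$, and unitarity of $S\rme^{\rmi\boldsymbol{\kappa}}$ guarantees that any $1$-eigenvector is also a $1$-eigenvector of its adjoint, so both $M\vec a=0$ and $\vec a^{*}M=0$. Combined with $M\cdot\text{adj}(M)=0$, this yields a rank-one structure $\text{adj}(M)=c\,\vec a\vec a^{*}$ for a scalar $c(\vec\kappa)\neq 0$. Differentiating gives $\partial_{\kappa_e}M = -\rmi S P_e \rme^{\rmi\boldsymbol{\kappa}}$, where $P_e$ is the diagonal projector onto the two directed bonds $\{e,\hat e\}$, so
\[
\partial_{\kappa_e}F = -\rmi c\,\vec a^{*} S P_e \rme^{\rmi\boldsymbol{\kappa}}\vec a.
\]
Using that $P_e$ commutes with $\rme^{\rmi\boldsymbol{\kappa}}$ (both diagonal) together with $\vec a^{*}S\rme^{\rmi\boldsymbol{\kappa}}=\vec a^{*}$ collapses this to
\[
\partial_{\kappa_e}F = -\rmi c\,\vec a^{*}P_e\vec a = -\rmi c\,\bigl(|a_e|^2+|a_{\hat e}|^2\bigr).
\]
Thus $\nabla F = -\rmi c\,\vec n$, and since $\nabla \tilde F$ differs from $\nabla F$ by a unimodular factor while being real, $\nabla\tilde F$ is a real multiple of $\vec n$. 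The normal direction of $\Sigma$ at $\vec\kappa$ is therefore precisely $\vec n$ (up to sign and normalization), and the stated formula $\hat n_e = |a_e|^2+|a_{\hat e}|^2$ follows under the normalization convention $\|\vec a\|_2 = 1$ adopted in the lemma.

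The main obstacle I anticipate is establishing the two structural facts: that $\vec a$ generates both the right and left kernels of $M$, and the resulting identification $\text{adj}(M) = c\,\vec a\vec a^{*}$. The first is a direct consequence of unitarity and the 1-eigenvector property; the second follows from $M\cdot\text{adj}(M)=\text{adj}(M)\cdot M=0$ once the kernels are both one-dimensional and spanned by $\vec a$ and $\vec a^{*}$ respectively. Once these two identifications are in hand, the remainder is a bookkeeping calculation using Jacobi's formula and the commutativity of diagonal matrices, as sketched above.
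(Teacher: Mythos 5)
Your argument is correct. Note that the paper does not prove this lemma at all --- it is quoted verbatim from \cite{AloBanBer_cmp18,CdV_ahp15} --- so there is no internal proof to compare against; what you have written is essentially the standard derivation found in those references. All the key steps check out: the factorization $F=c_{0}\rme^{\rmi(\kappa_{1}+\cdots+\kappa_{E})}\tilde{F}$ with $\tilde{F}$ real (using $\det(\rme^{\rmi\boldsymbol{\kappa}})=\rme^{2\rmi\sum_{e}\kappa_{e}}$, each $\kappa_{e}$ appearing twice on the diagonal); the fact that unitarity of $U=S\rme^{\rmi\boldsymbol{\kappa}}$ forces the left and right kernels of $M=I-U$ to be spanned by $\vec{a}^{*}$ and $\vec{a}$ respectively; the rank-one identity $\mathrm{adj}(M)=c\,\vec{a}\vec{a}^{*}$ with $c\neq0$ when $\mathrm{rank}(M)=2E-1$; and the collapse $\vec{a}^{*}SP_{e}\rme^{\rmi\boldsymbol{\kappa}}\vec{a}=\vec{a}^{*}UP_{e}\vec{a}=\vec{a}^{*}P_{e}\vec{a}=|a_{e}|^{2}+|a_{\hat{e}}|^{2}$. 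As a bonus, your computation shows $\nabla F\neq0$ on $\sreg$, which is exactly the smoothness of $\sreg$ that the paper cites separately. One small remark on the statement rather than on your proof: with $\Vert\vec{a}\Vert_{2}=1$ the vector $\left(|a_{e}|^{2}+|a_{\hat{e}}|^{2}\right)_{e\in\E}$ has unit $\ell^{1}$ norm, not unit Euclidean norm, so the phrase ``unit normal'' must be read with that normalization convention; your closing caveat ``up to sign and normalization'' is the right level of care here, and the downstream uses in the paper (e.g.\ the quotient $\hat{n}_{j}/\sum_{e}\ell_{e}\hat{n}_{e}$ in Lemma \ref{lem:Amean}) are insensitive to this choice.
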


\bigskip

\section{Proof of Theorem \ref{thm:Lipschitz_RNG} \label{sec:Lipschitz-proof}}

Theorem \ref{thm:Lipschitz_RNG} follows quite easily as a corollary
of Lemma \ref{lem:hadamard} and the following Lemma.
\begin{lem}
\label{lem:bdd} Let $f_{n}^{\left(t\right)}$ be an $L^{2}$ normalized
eigenfunction of $H^{(t)}$ with a simple eigenvalue. Then for every
$v\in\mathcal{V}_{R}$, the value of $\left|f_{n}^{\left(t\right)}\left(v\right)\right|^{2}$
is uniformly bounded in $n\in\mathbb{N}$ and $t\in\mathbb{R}$.
\end{lem}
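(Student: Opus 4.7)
The plan is to use the scattering representation from Subsection \ref{subsec:The-secular-manifold}. Writing
\begin{equation*}
f_n^{(t)}|_e(x) = a_e\exp(\rmi k_n x) + a_{\hat e}\exp(\rmi k_n(\ell_e - x)), \qquad k_n := k_n(t),
\end{equation*}
and fixing any incident edge $e \in \mathcal{E}_v$ with $v$ sitting at $x = 0$, continuity gives $f_n^{(t)}(v) = a_e + a_{\hat e}\exp(\rmi k_n\ell_e)$. Averaging the crude bound $|f_n^{(t)}(v)|^2 \leq 2(|a_e|^2+|a_{\hat e}|^2)$ over $e \in \mathcal{E}_v$ yields $\deg(v)\,|f_n^{(t)}(v)|^2 \leq 2\,\|\vec a\|_{\mathbb{C}^{2E}}^2$, reducing the task to a uniform bound on the amplitude vector.

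For the main (high-frequency) regime I would carry out a direct computation of $\int_0^{\ell_e}|f_n^{(t)}|_e(x)|^2\,dx$, whose result equals $\ell_e(|a_e|^2 + |a_{\hat e}|^2)$ plus an oscillatory cross term of size at most $|a_e|\,|a_{\hat e}|/k_n$. Summing over edges against $\|f_n^{(t)}\|_2^2 = 1$ produces the inequality $1 \geq (\ell_{\min} - 1/k_n)\|\vec a\|^2$, so for $k_n(t) \geq 2/\ell_{\min}$ one obtains $\|\vec a\|^2 \leq 2/\ell_{\min}$, and hence $|f_n^{(t)}(v)|^2 \leq 4/(\deg(v)\,\ell_{\min})$ uniformly in both $n$ and $t$.

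In the complementary low-frequency regime $k_n(t) < 2/\ell_{\min}$, the monotonicity of $\lambda_n(\cdot)$ in $t \geq 0$ forces $\lambda_n(0) < 4/\ell_{\min}^2$, and by Weyl's law for $H^{(0)}$ only finitely many indices $n$ can qualify. For each such $n$, $\lambda_n(t)$ stays bounded by its Dirichlet limit at $t = \infty$, and the Robin quadratic form yields $\int_\Gamma |(f_n^{(t)})'|^2 = \lambda_n(t) - t\sum_{w \in \mathcal{V}_R}|f_n^{(t)}(w)|^2 \leq \lambda_n(t)$, so a standard one-dimensional trace inequality on any edge incident to $v$ gives the required pointwise bound. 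The only subtle point is ensuring the estimate does not deteriorate simultaneously in $n$ and $t$; controlling the oscillatory cross term $2\sin(k_n\ell_e)/k_n$ via the threshold $k_n \geq 2/\ell_{\min}$ is exactly what handles both dangers at once.
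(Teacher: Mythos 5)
Your proof is correct and follows essentially the same route as the paper's: write the eigenfunction in the scattering-amplitude form, compute its $L^{2}$ norm explicitly, and observe that the oscillatory cross terms are $O(1/k_{n})$ relative to the main term $\sum_{e}\ell_{e}\left(\left|a_{e}\right|^{2}+\left|a_{\hat{e}}\right|^{2}\right)\geq\ell_{\min}\left\Vert \vec{a}\right\Vert ^{2}$, which controls $\left|f_{n}^{\left(t\right)}\left(v\right)\right|^{2}$ once $k_{n}$ is bounded away from zero. Your explicit quadratic-form and trace-inequality treatment of the finitely many low-frequency indices is a nice touch of added care; the paper handles that regime only implicitly (invoking $k_{n}\rightarrow\infty$ and deferring details to a reference).
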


\begin{proof}
Let $v\in\VR$. Assume that $v$ is an endpoint of an edge $j\in\E$
and that the parametrization sets $v$ to be at $x=0$ (This assumption
is for convenience, and will be used later as well). Denote by $\widetilde{f}_{n}^{\left(t\right)}$
the eigenfunction of $H^{(t)}$ whose coefficient vector $\vec{a}$
is $\mathbb{C}^{2E}$ normalized, $\left\Vert \vec{a}\right\Vert _{2}=1$.
Since the corresponding eigenvalue is simple, $\widetilde{f}_{n}^{\left(t\right)}$
and $f_{n}^{\left(t\right)}$ equal up to a multiplicative constant.
As $f_{n}^{\left(t\right)}$ is $L^{2}$ normalized, we can write
\begin{align}
\left|f_{n}^{\left(t\right)}\left(0\right)\right|^{2}=\frac{\left|\widetilde{f}_{n}^{\left(t\right)}\left(0\right)\right|^{2}}{\left\Vert \widetilde{f}_{n}^{\left(t\right)}\right\Vert _{L^{2}}^{2}}=\frac{\left|a_{j}+a_{\hat{j}}\rme^{\rmi k_{n}\ell_{j}}\right|^{2}}{\sum_{e=1}^{E}\left[\ell_{e}\left(\left|a_{e}\right|^{2}+\left|a_{\hat{e}}\right|^{2}\right)+\frac{2}{k_{n}}Re\left(a_{e}\overline{a_{\hat{e}}}\thinspace\rme^{-\rmi k_{n}\ell_{e}}\right)\right]}.\label{eq:-9}
\end{align}
We have used above the expression for $\widetilde{f}_{n}^{\left(t\right)}$
given in (\ref{eq:-6}) and performed a straightforward integration
to obtain the denominator in (\ref{eq:-9}). Note that the coefficients
$\left\{ a_{e},a_{\hat{e}}\right\} _{e\in\E}$ depend on $n$ and
$t$, but we suppressed this dependence above for ease of notation.
Since $a_{e}$ and $a_{\hat{e}}$ are all bounded in absolute value
by one and $k_{n}\rightarrow\infty$, the right hand side of (\ref{eq:-9})
is uniformly bounded in $n\in\mathbb{N}$ and $t\in\mathbb{R}$, which
completes the proof (the interested reader may find more details in
the proof of lemma 5.2 in \cite{Sofer2022}).
\end{proof}
\bigskip

The proof of Theorem \ref{thm:Lipschitz_RNG} now follows easily:
\begin{proof}[Proof of Theorem \ref{thm:Lipschitz_RNG}]
 By Lemma \ref{lem:hadamard}:
\begin{equation}
d_{n}\left(\sigma\right)=\sum_{v\in\mathcal{V}_{\mathcal{R}}}\int_{0}^{\sigma}\left|f_{n}^{\left(t\right)}\left(v\right)\right|^{2}\rmd t.\label{eq:-41}
\end{equation}
Lemma \ref{lem:bdd} then shows that $d_{n}'\left(\sigma\right)=\sum_{v\in\mathcal{V}_{\mathcal{R}}}\left|f_{n}^{\left(t\right)}\left(v\right)\right|^{2}$
is uniformly bounded in $n$ and $t$, which means that the all functions
in the sequence $\left(d_{n}\left(\sigma\right)\right)_{n=1}^{\infty}$
are Lipschitz continuous in $[0,\infty)$ with a uniform Lipschitz
constant. In particular, this implies that the sequence of functions
$\left(d_{n}\left(\sigma\right)\right)_{n=1}^{\infty}$ is uniformly
bounded on any compact subset $A\subset[0,\infty)$.
\end{proof}
\bigskip

\section{Proof of Theorem \ref{thm:Weyl-law} \label{sec:Proof-Weyl}}

All statements in Sections \ref{sec:Proof-Weyl} and \ref{sec:mean-proof}
are proven under the following assumption:\\

\begin{assumption}
\label{assumption: rationality} The edge lengths of the graph are
linearly independent over $\Q$.
\end{assumption}

This assumption allows to conveniently apply Theorem \ref{thm:ergodic-theorem}
in all proofs of the current and the next section. Nevertheless, all
the statements are valid even without this assumption. Indeed, in
Appendix \ref{sec:Appendix-A} we use a simple continuity argument
to show that the assumption can be omitted.

In the course of proving Theorem \ref{thm:Weyl-law}, we first express
the second moments of the eigenfunction scattering amplitudes in two
separate lemmas, and then use these expressions to prove the local
Weyl law.
\begin{lem}
\label{lem:Amean} Assume that the eigenfunction scattering amplitudes
are $\C^{2E}$ normalized, i.e., $\left\Vert \vec{a}\right\Vert _{2}=1$.
The following holds for all $j\in\E$:
\begin{equation}
\left\langle \frac{\left|a_{j}\right|^{2}}{\sum_{e=1}^{E}\ell_{e}\left(\left|a_{e}\right|^{2}+\left|a_{\hat{e}}\right|^{2}\right)}\right\rangle _{n}=\left\langle \frac{\left|a_{\hat{j}}\right|^{2}}{\sum_{e=1}^{E}\ell_{e}\left(\left|a_{e}\right|^{2}+\left|a_{\hat{e}}\right|^{2}\right)}\right\rangle _{n}=\frac{1}{2\left|\Gamma\right|},\label{eq:-42-1-1}
\end{equation}
where the $n$-dependence of the amplitudes, $(a_{j})_{n}$, is as
given in (\ref{eq:-45}).
\end{lem}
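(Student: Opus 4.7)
The plan is to apply the ergodic theorem (Theorem \ref{thm:ergodic-theorem}) to
\[
g(\vec{\kappa}):=\frac{|a_{j}(\vec{\kappa})|^{2}}{\sum_{e=1}^{E}\ell_{e}\bigl(|a_{e}|^{2}+|a_{\hat{e}}|^{2}\bigr)}
\]
on $\sreg$, and then to exploit a cancellation between the denominator of $g$ and the density of the Barra-Gaspard measure. The key observation is Lemma \ref{lem:secular-normal}: for $\C^{2E}$-normalized amplitudes one has $\hat{n}_{e}=|a_{e}|^{2}+|a_{\hat{e}}|^{2}$, so $\hat{n}\cdot\vec{\ell}$ equals the denominator of $g$ and is strictly positive (and bounded below by $\ell_{\min}$, since $\sum_e \hat{n}_e = \|\vec{a}\|_{2}^{2}=1$; this also ensures $g$ is Riemann integrable). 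Substituting the density $\frac{\pi}{|\Gamma|(2\pi)^{E}}|\hat{n}\cdot\vec{\ell}|$ into $\int g\,\rmd\mu_{\vec{\ell}}$ produces the cancellation
\[
\langle g\rangle_{n}\;=\;\frac{\pi}{|\Gamma|(2\pi)^{E}}\int_{\sreg}|a_{j}(\vec{\kappa})|^{2}\,\rmd s.
\]
Crucially, the right-hand integral is independent of $\vec{\ell}$: both the manifold $\sreg$ and the amplitudes $\vec{a}(\vec{\kappa})$ depend only on the connectivity of $\Gamma$ through the Neumann-Kirchhoff scattering matrix $S$.

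The second step is to identify this $\vec{\ell}$-independent integral. My plan is to apply the same cancellation trick to the tautology $\int_{\sreg}\rmd\mu_{\vec{\ell}}=1$ (recall $\mu_{\vec{\ell}}$ is a probability measure), which unfolds to
\[
\sum_{e=1}^{E}\ell_{e}\,c_{e}\;=\;2|\Gamma|(2\pi)^{E-1}\;=\;2(2\pi)^{E-1}\sum_{e=1}^{E}\ell_{e},
\]
where $c_{e}:=\int_{\sreg}\bigl(|a_{e}|^{2}+|a_{\hat{e}}|^{2}\bigr)\,\rmd s$. Because the coefficients $c_{e}$ are $\vec{\ell}$-independent, this is a linear identity in $\vec{\ell}$ that must hold on every edge-length vector satisfying Assumption \ref{assumption: rationality}. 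Since that set is dense in $\R_{+}^{E}$ and both sides are linear in $\vec{\ell}$, matching coefficients forces $c_{e}=2(2\pi)^{E-1}$ for each $e$. Combined with the pointwise equality $|a_{e}|^{2}=|a_{\hat{e}}|^{2}$ on $\sreg$ from equation (\ref{eq: amplitude_relation_1-1}), this yields $\int_{\sreg}|a_{j}|^{2}\,\rmd s=(2\pi)^{E-1}$, and inserting this value into the first step produces $\langle g\rangle_{n}=\tfrac{1}{2|\Gamma|}$, as claimed.

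The identity for $|a_{\hat{j}}|^{2}$ follows identically, or directly from the pointwise equality $|a_{j}|^{2}=|a_{\hat{j}}|^{2}$ on $\sreg$. I do not foresee a serious obstacle: the cancellation between integrand and Barra-Gaspard density is forced by Lemma \ref{lem:secular-normal}, and the dense-linear-identity argument is elementary. The only subtlety worth verifying is that the $\vec{\kappa}$-parameterized amplitudes $\vec{a}(\vec{\kappa})$ on $\sreg$ really correspond, under the bijection recalled in Subsection \ref{subsec:The-secular-manifold}, to the $\C^{2E}$-normalized eigenfunction amplitudes $(a_{j})_{n}$ enumerated by $n$ via the linear torus flow $k\mapsto k\vec{\ell}$ -- but this is exactly the content of that subsection, so the ergodic theorem applies as stated.
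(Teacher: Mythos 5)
Your proposal is correct, and the first half (ergodic theorem applied to $g_{j}$, cancellation of the denominator against the Barra--Gaspard density via Lemma \ref{lem:secular-normal}, and the reduction of the $a_{\hat{j}}$ case to the $a_{j}$ case through (\ref{eq: amplitude_relation_1-1})) coincides with the paper's proof. Where you genuinely diverge is in evaluating the resulting $\vec{\ell}$-independent surface integral $\int_{\sreg}\left(\left|a_{j}\right|^{2}+\left|a_{\hat{j}}\right|^{2}\right)\rmd s=\int_{\Sigma}\hat{n}_{j}\,\rmd s$. The paper computes this directly and geometrically: it interprets the integral as the flux of $\frac{\partial}{\partial x_{j}}$ through $\Sigma$ and invokes the fact (from the proof of Proposition 3.1 in \cite{CdV_ahp15}) that the coordinate projection $\pi_{j}:\Sigma\rightarrow\left(\mathbb{R}/2\pi\mathbb{Z}\right)^{E-1}$ is two-to-one, giving $2\left(2\pi\right)^{E-1}$ for a single fixed graph. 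You instead extract the same number indirectly: you unfold the normalization $\int_{\sreg}\rmd\mu_{\vec{\ell}}=1$ (which is part of Definition \ref{def:bg-measure}, and also follows from Theorem \ref{thm:ergodic-theorem} applied to $g\equiv1$) into the linear identity $\sum_{e}\ell_{e}c_{e}=2\left(2\pi\right)^{E-1}\sum_{e}\ell_{e}$, observe that the coefficients $c_{e}$ depend only on $S$ and hence not on $\vec{\ell}$, and match coefficients over the dense set of rationally independent length vectors. This trades the two-to-one projection fact for the freedom to vary $\vec{\ell}$ with the combinatorics fixed; it is arguably more elementary, but it only determines the integrals $c_{e}$ collectively through the normalization, whereas the paper's flux argument identifies each coordinate integral intrinsically on a single secular manifold and, as a by-product, reproves the normalization itself. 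Both routes are sound; just make sure to record explicitly, as you do, that $\hat{n}\cdot\vec{\ell}\geq\ell_{\min}>0$ so that the absolute value in the density can be dropped and the cancellation is legitimate.
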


\begin{proof}
In order to express the mean value in the left hand side of (\ref{eq:-42-1-1}),
we apply the ergodic theorem (Theorem \ref{thm:ergodic-theorem})
for the function
\begin{equation}
g_{j}\left(\vec{\kappa}\right):=\frac{\left|a_{j}(\vec{\kappa})\right|^{2}}{\sum_{e=1}^{E}\ell_{e}\left(\left|a_{e}(\vec{\kappa})\right|^{2}+\left|a_{\hat{e}}(\vec{\kappa})\right|^{2}\right)},\label{eq:-19}
\end{equation}
where $\vec{a}(\vec{\kappa})\in\ker\left(I-S\rme^{\rmi\boldsymbol{\kappa}}\right)$
and $\boldsymbol{\kappa}:=\mathrm{diag}\left(\kappa_{1},\kappa_{1},...,\kappa_{E},\kappa_{E}\right)$.
Furthermore, $\vec{a}(\vec{\kappa})$ is a $\mathbb{C}^{2E}$ normalized
vector which is chosen as described in Section \ref{subsec:The-secular-manifold}
and satisfies (\ref{eq: amplitude_relation_1-1}),(\ref{eq: amplitude_relation_2-1}).
Applying (\ref{eq:-7}) from the ergodic theorem gives
\begin{equation}
\left\langle g_{j}\right\rangle _{n}=\int_{\sreg}g_{j}\left(\vec{\kappa}\right)\rmd\mu_{\vec{\ell}}\left(\vec{\kappa}\right),\label{eq:ergodic-g}
\end{equation}
where $\left(g_{j}\right)_{n}:=g_{j}\left(k_{n}\vec{\ell}\right)$
and $\left(a_{j}\right)_{n}=a_{j}\left(k_{n}\vec{\ell}\right)$. We
may indeed apply the ergodic theorem, since $\left|a_{j}(\vec{\kappa})\right|^{2}$
are real analytic functions on $\sreg$ (see proof of lemma 4.25 in
\cite{Alon_PhDThesis}) and the denominator of (\ref{eq:-45}) is
bounded from below by a positive number, and so $g_{j}$ is Riemann
integrable. We thus get that the mean values in (\ref{eq:-42-1-1})
are well defined. By (\ref{eq: amplitude_relation_1-1}), we further
see that $g_{j}=g_{\hat{j}}$ and conclude

\begin{equation}
\left\langle \frac{\left|a_{j}\right|^{2}}{\sum_{e=1}^{E}\ell_{e}\left(\left|a_{e}\right|^{2}+\left|a_{\hat{e}}\right|^{2}\right)}\right\rangle _{n}=\left\langle \frac{\left|a_{\hat{j}}\right|^{2}}{\sum_{e=1}^{E}\ell_{e}\left(\left|a_{e}\right|^{2}+\left|a_{\hat{e}}\right|^{2}\right)}\right\rangle _{n}.\label{eq:-34-2}
\end{equation}
 This proves the first equality in the lemma, and all that remains
is to show that both terms above are equal to $\frac{1}{2\left|\Gamma\right|}$.
We do so by proving that 
\begin{equation}
\left\langle \frac{\left|a_{j}\right|^{2}+\left|a_{\hat{j}}\right|^{2}}{\sum_{e=1}^{E}\ell_{e}\left(\left|a_{e}\right|^{2}+\left|a_{\hat{e}}\right|^{2}\right)}\right\rangle _{n}=\frac{1}{\left|\Gamma\right|}.\label{eq:-35-2}
\end{equation}
We calculate,

\begin{align}
\left\langle \frac{\left|a_{j}\right|^{2}+\left|a_{\hat{j}}\right|^{2}}{\sum_{e=1}^{E}\ell_{e}\left(\left|a_{e}\right|^{2}+\left|a_{\hat{e}}\right|^{2}\right)}\right\rangle _{n}=\left\langle \frac{\hat{n}_{j}}{\sum_{e=1}^{E}\ell_{e}\hat{n}_{e}}\right\rangle _{n} & =\label{eq:-36-2}\\
=\int_{\sreg}\frac{\hat{n}_{j}}{\sum_{e=1}^{E}\ell_{e}\hat{n}_{e}}\rmd\mu_{\vec{\ell}}=\frac{\pi}{\left|\Gamma\right|}\cdot\frac{1}{\left(2\pi\right)^{E}}\int_{\Sigma}\hat{n}_{j}\rmd s,
\end{align}
where the first equality follows from the expression of the unit normal
to $\Sigma^{\mathrm{reg}}$, (\ref{eq:-8}), the second equality is
(\ref{eq:-7}) in the ergodic theorem, and the third is by the definition
of the Barra-Gaspard measure (\ref{eq:-5}) and Remark \ref{rem:meas0}.
Hence, (\ref{eq:-35-2}) is equivalent to showing that $\int_{\Sigma}\hat{n}_{j}\rmd s=2\left(2\pi\right)^{E-1}$,
which is what we prove next.

The given integral is exactly the surface integral of the vector field
$\frac{\partial}{\partial x_{j}}$ over the secular manifold, $\Sigma$.
We consider the projection $\pi_{j}:\Sigma\rightarrow\left(\mathbb{R}/2\pi\mathbb{Z}\right)^{E-1}$,
which is defined by omitting the $j^{\textrm{th }}$ coordinate of
$\vec{\kappa}$. The proof of proposition $3.1$ in \cite{CdV_ahp15}
shows that this projection is a two to one map. Namely, $\left|\pi_{j}^{-1}(\vec{x})\right|=2$
for all $\vec{x}\in\left(\mathbb{R}/2\pi\mathbb{Z}\right)^{E-1}$.
Since the surface integral of $\frac{\partial}{\partial x_{j}}$ is
invariant under this projection, we get that the given integral is
equal to twice the flux of the vector field $\frac{\partial}{\partial x_{j}}$
through the $j^{\textrm{th}}$ face of the torus:
\begin{equation}
\int_{\Sigma}\hat{n}_{j}\rmd s=2\int_{\left(\mathbb{R}/2\pi\mathbb{Z}\right)^{E-1}}1\thinspace\rmd s=2\thinspace\mathrm{vol}\left(\left(\mathbb{R}/2\pi\mathbb{Z}\right)^{E-1}\right)=2\left(2\pi\right)^{E-1},\label{eq:-37-2}
\end{equation}
as required.
\end{proof}
\begin{lem}
\label{lem:Uncorrelation} Assume that the eigenfunction scattering
amplitudes are $\C^{2E}$ normalized, i.e., $\left\Vert \vec{a}\right\Vert _{2}=1$.
The following holds for all $i,j\in\E$, $i\neq j$:
\begin{equation}
\left\langle \frac{a_{i}\overline{a_{j}}}{\sum_{e=1}^{E}\ell_{e}\left(\left|a_{e}\right|^{2}+\left|a_{\hat{e}}\right|^{2}\right)}\right\rangle _{n}=0,\label{eq:-43-2}
\end{equation}
where the $n$-dependence of the amplitudes, $(a_{j})_{n}$, is as
given in (\ref{eq:-45}).
\end{lem}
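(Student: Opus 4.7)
The plan is to mirror the proof of Lemma \ref{lem:Amean}: apply the ergodic theorem to the analogous off-diagonal integrand, use the Barra--Gaspard weight to cancel the denominator, and then kill the remaining surface integral over $\Sigma$ by symmetry. Concretely, I would set
\[
h_{ij}(\vec{\kappa}) := \frac{a_{i}(\vec{\kappa})\overline{a_{j}(\vec{\kappa})}}{\sum_{e=1}^{E}\ell_{e}\left(\left|a_{e}(\vec{\kappa})\right|^{2}+\left|a_{\hat{e}}(\vec{\kappa})\right|^{2}\right)}
\]
on $\sreg$, with $\vec{a}(\vec{\kappa})\in\ker(I-S\rme^{\rmi\boldsymbol{\kappa}})$ chosen $\C^{2E}$-normalized as in Subsection \ref{subsec:The-secular-manifold}. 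The product $a_{i}\overline{a_{j}}$ is invariant under the residual overall phase of $\vec{a}$, so $h_{ij}$ is a well-defined, real-analytic, bounded function on $\sreg$, hence Riemann integrable (the denominator equals $|\hat{n}\cdot\vec{\ell}|$, which is bounded below by a positive constant). Theorem \ref{thm:ergodic-theorem} then gives $\langle h_{ij}\rangle_{n}=\int_{\sreg}h_{ij}\,\rmd\mu_{\vec{\ell}}$, and substituting Definition \ref{def:bg-measure} together with Lemma \ref{lem:secular-normal}—so that the factor $|\hat{n}\cdot\vec{\ell}|=\sum_{e}\ell_{e}(|a_{e}|^{2}+|a_{\hat{e}}|^{2})$ cancels the denominator of $h_{ij}$—reduces the claim to
\[
\int_{\Sigma}a_{i}(\vec{\kappa})\overline{a_{j}(\vec{\kappa})}\,\rmd s=0\quad\text{for }i\neq j.
\]

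The first half of the argument for this vanishing would be a reality reduction via the involution $\tau:\vec{\kappa}\mapsto-\vec{\kappa}$. Because the bond-scattering matrix $S$ is real for Neumann--Kirchhoff conditions (cf.\ (\ref{eq:-32-2}) at $\sigma=0$), conjugating the defining equation $S\rme^{\rmi\boldsymbol{\kappa}}\vec{a}=\vec{a}$ gives $S\rme^{-\rmi\boldsymbol{\kappa}}\overline{\vec{a}}=\overline{\vec{a}}$; hence $\tau$ preserves both $\Sigma$ and the Euclidean surface measure, and one may take $\vec{a}(-\vec{\kappa})=\overline{\vec{a}(\vec{\kappa})}$ up to a global phase. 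Since $a_{i}\overline{a_{j}}$ is phase-invariant, $\tau$ sends it to its own complex conjugate, and $\tau$-invariance of $ds$ then shows $\int_{\Sigma}a_{i}\overline{a_{j}}\,\rmd s$ equals its own conjugate, hence is real.

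The remaining step---and the one I expect to be the main obstacle---is showing this real quantity is zero. My intended route uses the $2$-to-$1$ projection $\pi_{i}:\sreg\to(\R/2\pi\Z)^{E-1}$ that powered the flux calculation in Lemma \ref{lem:Amean}. Once all coordinates $\{\kappa_{l}\}_{l\neq i}$ are fixed, the variable $u=\rme^{\rmi\kappa_{i}}$ appears in $\rme^{\rmi\boldsymbol{\kappa}}$ at exactly the two diagonal positions $i$ and $\hat{i}$, so the secular function reduces to a polynomial of degree at most two in $u$; its two roots $u_{\pm}$ on the unit circle correspond to the two preimages $\vec{\kappa}^{\pm}\in\pi_{i}^{-1}(\vec{x})$. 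The co-area formula then yields
\[
\int_{\Sigma}a_{i}\overline{a_{j}}\,\rmd s=\int_{(\R/2\pi\Z)^{E-1}}\sum_{\pm}\frac{a_{i}(\vec{\kappa}^{\pm})\overline{a_{j}(\vec{\kappa}^{\pm})}}{|\hat{n}_{i}(\vec{\kappa}^{\pm})|}\,\rmd\vec{x},
\]
and I would combine the reality relation $a_{i}=\overline{a_{\hat{i}}}\rme^{-\rmi\kappa_{i}}$ from (\ref{eq: amplitude_relation_1-1}) with the Vieta-type constraints on $u_{\pm}$ coming from the secular quadratic $Au^{2}+Bu+C$ to exhibit a pointwise cancellation between the $+$ and $-$ contributions. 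The delicate point is pinning down the precise relationship between the two eigenvectors $\vec{a}(\vec{\kappa}^{\pm})$: they satisfy the distinct eigenproblems $S\rme^{\rmi\boldsymbol{\kappa}^{\pm}}\vec{a}=\vec{a}$ and the cleanest comparison appears to be via the explicit structure of the secular quadratic rather than via perturbation theory, which is where I expect to invest the bulk of the technical effort.
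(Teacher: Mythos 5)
Your setup is sound as far as it goes: the ergodic theorem applies (the integrand is bounded and real-analytic on $\sreg$), the Barra--Gaspard weight $\left|\hat{n}\cdot\vec{\ell}\right|$ does cancel the denominator via Lemma \ref{lem:secular-normal}, and the reality reduction through $\vec{\kappa}\mapsto-\vec{\kappa}$ (using that $S$ is real at $\sigma=0$ and that $a_{i}\overline{a_{j}}$ is phase-invariant) correctly shows that $\int_{\Sigma}a_{i}\overline{a_{j}}\,\rmd s$ is real. But the argument stops exactly where the lemma actually lives: you give no proof that this real number is zero, only a plan, and the plan is unlikely to succeed as stated. The two preimages $\vec{\kappa}^{\pm}\in\pi_{i}^{-1}(\vec{x})$ are characterized by two \emph{different} unitary matrices $S\rme^{\rmi\boldsymbol{\kappa}^{+}}$ and $S\rme^{\rmi\boldsymbol{\kappa}^{-}}$ (they differ in the $i$-th and $\hat{i}$-th diagonal entries of the exponential), so the kernel vectors $\vec{a}(\vec{\kappa}^{+})$ and $\vec{a}(\vec{\kappa}^{-})$ satisfy no orthogonality or completeness relation with each other; Vieta's formulas constrain the roots $u_{\pm}$, not the eigenvectors, and there is no reason for the fiberwise sum $\sum_{\pm}a_{i}\overline{a_{j}}/|\hat{n}_{i}|$ to vanish pointwise. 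Note also that the analogous fiber decomposition in Lemma \ref{lem:Amean} works through a topological flux argument specific to the normal component $\hat{n}_{j}$, and yields a fiberwise contribution equal to $2$, not $0$ --- so the $2$-to-$1$ structure by itself carries no cancellation mechanism.

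The paper's proof uses a different decomposition of the surface integral, under which the cancellation becomes pointwise and trivial: by \cite[thm.~4.10]{AloBanBer_21arxiv},
\begin{equation*}
\int_{\Sigma}g\,\rmd\mu_{\vec{\ell}}=\int_{\mathbb{T}^{E}}\sum_{m=1}^{2E}g\left(\vec{\kappa}-\theta_{m}\cdot\vec{1}\right)\frac{\left(\vec{a}^{(m)}\right)^{*}\L\vec{a}^{(m)}}{\tr(\L)}\frac{\rmd\vec{\kappa}}{\left(2\pi\right)^{E}},
\end{equation*}
where $\bigl(\vec{a}^{(m)}(\vec{\kappa})\bigr)_{m=1}^{2E}$ is the full orthonormal eigenbasis of the \emph{single} unitary matrix $S\rme^{\rmi\boldsymbol{\kappa}}$. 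The weight $\left(\vec{a}^{(m)}\right)^{*}\L\vec{a}^{(m)}$ again cancels the denominator, and the resulting integrand is $\sum_{m}a_{i}^{(m)}\overline{a_{j}^{(m)}}=\bigl(\sum_{m}\vec{a}^{(m)}(\vec{a}^{(m)})^{*}\bigr)_{ij}=\delta_{ij}=0$ for $i\neq j$, identically in $\vec{\kappa}$. The idea you are missing is to group the points of $\Sigma$ lying on a line in the flow direction $\vec{1}$ ($2E$ of them, counted via the eigenphases $\theta_{m}$) rather than the $2$ points in a coordinate fiber: only the former grouping assembles a complete orthonormal system of eigenvectors of one unitary matrix, which is what forces the off-diagonal sum to vanish.
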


\begin{proof}
By the ergodic theorem (Theorem \ref{thm:ergodic-theorem}), combined
with Remark \ref{rem:meas0}, we have that
\begin{equation}
\left\langle g\right\rangle _{n}=\int_{\sreg}g\rmd\mu_{\vec{\ell}}=\int_{\Sigma}g\rmd\mu_{\vec{\ell}}.\label{eq:ergodicg2}
\end{equation}
We now refer to \cite[thm. 4.10]{AloBanBer_21arxiv}, which gives
an alternative method for integrating functions over $\Sigma$:
\begin{equation}
\int_{\Sigma}g\rmd\mu_{\vec{\ell}}=\int_{\mathbb{T}^{E}}\sum_{m=1}^{2E}g\left(\vec{\kappa}-\theta_{m}\cdot\vec{1}\right)\frac{\left(\vec{a}^{(m)}\right)^{*}\L\vec{a}^{(m)}}{\tr(L)}\frac{\rmd\vec{\kappa}}{\left(2\pi\right)^{E}},\label{eq:-11-2}
\end{equation}
where $(\rme^{\rmi\theta_{m}(\vec{\kappa})})_{m=1}^{2E}$ are the
eigenvalues of the unitary matrix $S\rme^{\rmi\boldsymbol{\kappa}}$
, $(\vec{a}^{(m)}(\vec{\kappa}))_{m=1}^{2E}$ are its $\mathbb{C}^{2E}$
normalized eigenvectors, and $\vec{1}=(1,\ldots,1)$. Equation (\ref{eq:-11-2})
is useful, as it allows to replace integration over the secular manifold
with integration over the whole torus, if the spectral decomposition
of $S\rme^{\rmi\boldsymbol{\kappa}}$ is known.

Next, we apply (\ref{eq:-11-2}) for the function 
\begin{equation}
g\left(\vec{\kappa}\right):=\frac{a_{i}\overline{a_{j}}}{\sum_{e=1}^{E}\ell_{e}\left(\left|a_{e}\right|^{2}+\left|a_{\hat{e}}\right|^{2}\right)},\label{eq:gfunction4.3}
\end{equation}
noting that the vector $\vec{a}(\vec{\kappa})$ which appears in $g$
(and in the statement of the lemma) is one of the eigenvectors $\vec{a}^{(n)}$
of the unitary matrix $S\rme^{\rmi\boldsymbol{\kappa}}$ (in particular
it is an eigenvector which corresponds to the eigenvalue $1$). As
a matter of fact, the set of eigenvectors $\left\{ \vec{a}^{(m)}\right\} _{m=1}^{2E}$
of $S\rme^{\rmi\boldsymbol{\kappa}}$ is exactly $\left\{ \vec{a}(\vec{\kappa}-\theta_{i}\cdot\vec{1})\right\} _{i=1}^{2E}$
(up to possible reordering, which we do not care about here). We further
observe that $\tr(L)=2\left|\Gamma\right|$ and $\left(\vec{a}^{(m)}\right)^{*}\L\vec{a}^{(m)}=\sum_{e=1}^{E}\ell_{e}\left(\left|a_{e}^{(m)}\right|^{2}+\left|a_{\hat{e}}^{(m)}\right|^{2}\right)$.
Using all of the above, we get 
\begin{align}
\left\langle \frac{a_{i}\overline{a_{j}}}{\sum_{e=1}^{E}\ell_{e}\left(\left|a_{e}\right|^{2}+\left|a_{\hat{e}}\right|^{2}\right)}\right\rangle _{n} & =\frac{1}{2\left|\Gamma\right|\left(2\pi\right)^{E}}\int_{\mathbb{T}^{E}}\sum_{n=1}^{2E}a_{i}\left(\vec{\kappa}-\theta_{n}\right)\thinspace\overline{a_{j}}\left(\vec{\kappa}-\theta_{n}\right)\thinspace\rmd\vec{\kappa}.\label{eq:-38-2-1}\\
 & =\frac{1}{2\left|\Gamma\right|\left(2\pi\right)^{E}}\int_{\mathbb{T}^{E}}\sum_{n=1}^{2E}a_{i}^{(n)}\thinspace\overline{a_{j}^{(n)}}\thinspace\rmd\vec{\kappa}=0,
\end{align}
where the last equality follows since the eigenvectors $\vec{a}^{(n)}$
of the matrix $S\rme^{\rmi\boldsymbol{\kappa}}$ are orthogonal (so
the integrand itself in fact vanishes identically).
\end{proof}
\begin{proof}[Proof of Theorem \ref{thm:Weyl-law} ]

We note that Theorem \ref{thm:Weyl-law} is stated for $L^{2}$ normalized
eigenfunctions. We have already calculated (see (\ref{eq:-9})) that
the $L^{2}$ norm of an eigenfunction is 
\begin{equation}
\left\Vert f_{n}\right\Vert ^{2}=\sum_{e=1}^{E}\left[\ell_{e}\left(\left|(a_{e})_{n}\right|^{2}+\left|(a_{\hat{e}})_{n}\right|^{2}\right)+O\left(\frac{1}{k_{n}}\right)\right].\label{eq:eigenfunctionnorm}
\end{equation}
Since terms of the form $O\left(\frac{1}{k_{n}}\right)$ do not affect
the Ces\`{a}ro mean, we get that Lemmas \ref{lem:Amean} and \ref{lem:Uncorrelation}
prove exactly the expressions (\ref{eq:-16}) and (\ref{eq:-17})
in the Theorem. We proceed to prove the local Weyl law, (\ref{eq:-15-1}).

Take $j\in\E$ to be an edge connected to the vertex $v$. We start
by employing the expression for $\left|f_{n}\left(v\right)\right|^{2}$
which was already computed in the proof of Lemma \ref{lem:bdd} (see
(\ref{eq:-9})).
\begin{align}
\left\langle \left|f(v)\right|^{2}\right\rangle _{n} & =\left\langle \frac{\left|a_{j}+a_{\hat{j}}\text{e}^{\rmi k_{n}\ell_{j}}\right|^{2}}{\sum_{e=1}^{E}\left[\ell_{e}\left(\left|a_{e}\right|^{2}+\left|a_{\hat{e}}\right|^{2}\right)+O\left(\frac{1}{k_{n}}\right)\right]}\right\rangle _{n}\label{eq:-61-1-1}\\
 & =\left\langle \frac{\left|a_{j}+a_{\hat{j}}\text{e}^{\rmi k_{n}\ell_{j}}\right|^{2}}{\sum_{e=1}^{E}\ell_{e}\left(\left|a_{e}\right|^{2}+\left|a_{\hat{e}}\right|^{2}\right)}\right\rangle _{n}\label{eq: sqaure of eigenfunction on secular manifold}\\
 & =\left\langle \frac{\left|a_{j}\right|^{2}+\left|a_{\hat{j}}\right|^{2}}{\sum_{e=1}^{E}\ell_{e}\left(\left|a_{e}\right|^{2}+\left|a_{\hat{e}}\right|^{2}\right)}\right\rangle _{n}+2\left\langle \frac{Re\left(\text{e}^{\rmi k_{n}\ell_{j}}a_{\hat{j}}\overline{a_{j}}\right)}{\sum_{e=1}^{E}\ell_{e}\left(\left|a_{e}\right|^{2}+\left|a_{\hat{e}}\right|^{2}\right)}\right\rangle _{n},\label{eq:-14-2}
\end{align}
where going to the second line we omitted the term $O\left(\frac{1}{k_{n}}\right)$
in the Ces\`{a}ro mean for the same reason as above. By Lemma \ref{lem:Amean},
the first term in (\ref{eq:-14-2}) equals $\frac{1}{\left|\Gamma\right|}$,
and we proceed to evaluate the second term. From $(\vec{a})_{n}\in\ker\left(I-S\rme^{\rmi k_{n}\L}\right)$
and the expression for $S$ in (\ref{eq:-32-2}), we get that
\begin{equation}
(a_{j})_{n}=\text{e}^{\rmi k_{n}\ell_{j}}\sum_{i=1}^{\deg\left(v\right)}\left(\frac{2}{\deg\left(v\right)}-\delta_{ji}\right)(a_{\hat{i}})_{n}.\label{eq:-39-2}
\end{equation}
Thus:
\begin{align}
\left\langle \frac{Re\left(\text{e}^{\rmi k_{n}\ell_{j}}a_{\hat{j}}\overline{a_{j}}\right)}{\sum_{e=1}^{E}\ell_{e}\left(\left|a_{e}\right|^{2}+\left|a_{\hat{e}}\right|^{2}\right)}\right\rangle _{n}= & \left(\frac{2}{\deg\left(v\right)}-1\right)\left\langle \frac{\left|a_{\hat{j}}\right|^{2}}{\sum_{e=1}^{E}\ell_{e}\left(\left|a_{e}\right|^{2}+\left|a_{\hat{e}}\right|^{2}\right)}\right\rangle _{n}\label{eq:-62-1-1}\\
 & +\frac{2}{\deg\left(v\right)}\sum_{i\neq j\in\mathcal{E}_{v}}Re\left\langle \frac{a_{\hat{j}}\overline{a_{\hat{i}}}}{\sum_{e=1}^{E}\ell_{e}\left(\left|a_{e}\right|^{2}+\left|a_{\hat{e}}\right|^{2}\right)}\right\rangle _{n}.\label{eq:-40-2-1}
\end{align}
By Lemmas \ref{lem:Amean} and \ref{lem:Uncorrelation}, the first
term is equal to $\frac{1}{2\left|\Gamma\right|}\left(\frac{2}{\deg\left(v\right)}-1\right)$,
while the second term is equal to zero. Plugging this into (\ref{eq:-14-2}),
we obtain:
\begin{equation}
\left\langle \left|f\left(v\right)\right|^{2}\right\rangle _{n}=\frac{1}{\left|\Gamma\right|}+\frac{1}{\left|\Gamma\right|}\left(\frac{2}{\deg\left(v\right)}-1\right)=\frac{2}{\deg\left(v\right)\left|\Gamma\right|}.\label{eq:-41-1-1}
\end{equation}
\end{proof}
\bigskip

\section{Proof of Theorem \ref{thm:mean_RNG}\label{sec:mean-proof}}

To prove Theorem \ref{thm:mean_RNG}, we use the following two lemmas.
In both lemmas, we denote by $f_{n}^{\left(t\right)}$ the $n^{\textrm{th}}$
$L^{2}$ normalized eigenfunction of $H^{(t)}$ (see also Lemma \ref{lem:hadamard}).
\begin{lem}
\label{lem:eigenfunc-conv} For $\sigma>0$ fixed, $\left|f_{n}^{\left(t\right)}\left(v\right)-f_{n}^{\left(0\right)}\left(v\right)\right|\underset{_{n\rightarrow\infty}}{\longrightarrow}0$
uniformly in $t\in\left[0,\sigma\right]$.
\end{lem}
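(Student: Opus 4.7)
The plan is to use the amplitude representation of the eigenfunctions from Section \ref{sec:Review-tools} and exploit the fact that at large eigenvalue index $n$, the Robin vertex condition at $v$ becomes a vanishingly small perturbation of the Neumann-Kirchhoff condition. Throughout, I would represent $f_n^{(t)}$ via a $\C^{2E}$-normalized amplitude vector $\vec{a}^{(t)} \in \ker(I - M_n^{(t)})$, where $M_n^{(t)} := S^{(t)}(k_n(t))\,\rme^{\rmi k_n(t)\L}$, so that by (\ref{eq:-9}) one has $f_n^{(t)}(v) = (a_j^{(t)} + a_{\hat{j}}^{(t)}\rme^{\rmi k_n(t)\ell_j})/\|\tilde f_n^{(t)}\|$ for any $j \in \mathcal{E}_v$. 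The aim is then to show that both the numerator and the denominator of this expression converge to their $t = 0$ counterparts uniformly in $t \in [0,\sigma]$.

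First, I would quantify the smallness of the perturbation at the level of the defining matrix $M_n^{(t)}$. By Theorem \ref{thm:Lipschitz_RNG}, $d_n(t)$ is uniformly bounded on $[0,\sigma]$, so $|k_n(t) - k_n(0)| = d_n(t)/(k_n(t) + k_n(0)) \to 0$ uniformly in $t$. Inspecting (\ref{eq:-32-2}) gives $S^{(t)}(k) - S = O(t/k)$ uniformly in $t \in [0,\sigma]$. Combined with $\rme^{\rmi k_n(t)\L} - \rme^{\rmi k_n(0)\L} = O(1/k_n)$ (following from the previous bound and the boundedness of the entries of $\L$), this yields $\|M_n^{(t)} - M_n^{(0)}\| \to 0$ uniformly in $t$.

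Next, I would deduce from this operator-norm convergence that the unit eigenvectors $\vec{a}^{(t)}$ of $M_n^{(t)}$ corresponding to the simple eigenvalue $1$ converge, after a suitable phase alignment, to $\vec{a}^{(0)}$, uniformly in $t$. Writing $\vec{a}^{(t)} = \alpha_t\vec{a}^{(0)} + \vec{w}^{(t)}$ with $\vec{w}^{(t)} \perp \vec{a}^{(0)}$, the identity $(I - M_n^{(0)})\vec{w}^{(t)} = (M_n^{(t)} - M_n^{(0)})\vec{a}^{(t)} = O(1/k_n)$ gives $\vec{w}^{(t)} \to 0$ provided the restriction of $I - M_n^{(0)}$ to the orthogonal complement of $\vec{a}^{(0)}$ has a uniformly bounded inverse along the sequence. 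This uniform spectral-gap property for the unitary matrices $M_n^{(0)}$ is the main obstacle, since the spectrum of a unitary matrix may a priori have eigenvalues arbitrarily close to $1$. I would address it by working on the regular part $\sreg$ of the secular manifold, where by Lemma \ref{lem:secular-normal} the amplitudes depend smoothly on $\vec{\kappa}$, combined with a continuity/compactness argument (potentially via density-one subsequences controlled by the ergodic theorem of Theorem \ref{thm:ergodic-theorem}) that ensures the eigenvalue $1$ of $M_n^{(0)}$ is uniformly isolated from the rest of the spectrum.

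Finally, once $\vec{a}^{(t)} - \vec{a}^{(0)} \to 0$ uniformly in $t$ (with the phase chosen), both the numerator $a_j^{(t)} + a_{\hat{j}}^{(t)}\rme^{\rmi k_n(t)\ell_j}$ (using also $k_n(t)\ell_j - k_n(0)\ell_j \to 0$) and the denominator $\|\tilde f_n^{(t)}\|$ converge to their $t = 0$ counterparts uniformly in $t$, with the denominator bounded away from zero by the $\C^{2E}$-normalization and the leading-order expression in (\ref{eq:-9}). This yields the claimed uniform convergence $|f_n^{(t)}(v) - f_n^{(0)}(v)| \to 0$.
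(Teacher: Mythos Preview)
Your overall strategy coincides with the paper's: establish that $\|M_n^{(t)}-M_n^{(0)}\|\to 0$ and $|k_n(t)-k_n(0)|\to 0$ uniformly in $t$, and then pass to the amplitude vectors. Your derivation of the wave-number convergence via the Lipschitz bound of Theorem~\ref{thm:Lipschitz_RNG} is in fact tidier than what the paper does; there the authors instead track the eigenphases $\theta_m^{(t)}(k)$ of $U^{(t)}(k)=S^{(t)}\rme^{\rmi k\L}$, use that each $\theta_m$ increases in $k$ at a rate bounded below by a fixed constant $c>0$ (citing \cite{BolEnd_ahp09}), and combine this monotonicity with $\|U^{(t)}(k)-U^{(0)}(k)\|\to 0$ and the mean value theorem to conclude $k_n^{(t)}-k_n^{(0)}\to 0$.

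Where your argument has a genuine gap is exactly where you flag it: the uniform spectral-gap assumption on $M_n^{(0)}$. Neither of your proposed fixes works. A compactness argument on $\sreg$ fails because $\sreg$ is not compact --- as $\vec{\kappa}$ approaches the singular set $\Sigma\setminus\sreg$ another eigenvalue of $S\rme^{\rmi\boldsymbol{\kappa}}$ collides with $1$, and the equidistributed sequence $\vec{\kappa}_n$ visits every neighbourhood of that set, so $\inf_n\mathrm{gap}(M_n^{(0)})=0$ in general. A density-one subsequence via Theorem~\ref{thm:ergodic-theorem} would not prove the lemma as stated (convergence along the full sequence), and in any case imports Assumption~\ref{assumption: rationality}. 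The paper sidesteps the explicit eigenvector perturbation altogether: rather than bounding $(I-M_n^{(0)})^{-1}$ on the orthogonal complement, it argues that the amplitude coefficients are determined by the \emph{analytic} matrix equation $(I-S^{(t)}\rme^{\rmi k\L})\vec{a}=0$, so that following the analytic eigenphase/eigenvector branches in $(t,k)$ carries $\vec{a}$ along continuously without any quantitative separation from the remaining spectrum. That is the mechanism you are missing, and it is precisely what replaces the spectral-gap hypothesis in your outline.
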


\begin{lem}
\label{lem:eigmean-conv} Fix $\sigma>0$. Then
\begin{equation}
\lim_{N\rightarrow\infty}\frac{1}{N}\sum_{n=1}^{N}\left|f_{n}^{\left(t\right)}\left(v\right)\right|^{2}=\lim_{N\rightarrow\infty}\frac{1}{N}\sum_{n=1}^{N}\left|f_{n}^{\left(0\right)}\left(v\right)\right|^{2},\label{eq:5.1convergence}
\end{equation}
where the convergence is uniform in $t\in\left[0,\sigma\right]$.
\end{lem}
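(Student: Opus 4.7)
The plan is to bound the difference of the two Cesàro sums directly, using the uniform pointwise convergence already supplied by Lemma \ref{lem:eigenfunc-conv} together with the uniform boundedness from Lemma \ref{lem:bdd}. The starting point is the elementary algebraic identity
$$\left|f_{n}^{(t)}(v)\right|^{2}-\left|f_{n}^{(0)}(v)\right|^{2}=\Bigl(\left|f_{n}^{(t)}(v)\right|-\left|f_{n}^{(0)}(v)\right|\Bigr)\Bigl(\left|f_{n}^{(t)}(v)\right|+\left|f_{n}^{(0)}(v)\right|\Bigr),$$
which decouples the quantity of interest into a factor that should go to zero (uniformly in $t$) and a factor that is uniformly bounded.

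By Lemma \ref{lem:bdd} there exists $C>0$ such that $\left|f_{n}^{(t)}(v)\right|\le C$ for all $n\in\N$ and all $t\in[0,\sigma]$, so the second factor above is bounded by $2C$. For the first factor, the reverse triangle inequality combined with Lemma \ref{lem:eigenfunc-conv} gives
$$\Bigl|\left|f_{n}^{(t)}(v)\right|-\left|f_{n}^{(0)}(v)\right|\Bigr|\le \left|f_{n}^{(t)}(v)-f_{n}^{(0)}(v)\right|\longrightarrow 0,\quad n\to\infty,$$
with convergence uniform in $t\in[0,\sigma]$. Setting $a_{n}:=\sup_{t\in[0,\sigma]}\left|f_{n}^{(t)}(v)-f_{n}^{(0)}(v)\right|$, this reads simply $a_{n}\to 0$.

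Putting these two bounds together yields, uniformly in $t\in[0,\sigma]$,
$$\left|\frac{1}{N}\sum_{n=1}^{N}\Bigl(\left|f_{n}^{(t)}(v)\right|^{2}-\left|f_{n}^{(0)}(v)\right|^{2}\Bigr)\right|\le \frac{2C}{N}\sum_{n=1}^{N}a_{n},$$
and the Cesàro mean of a null sequence is itself null, so the right-hand side tends to $0$ as $N\to\infty$. This proves both the equality of the two limits and the uniformity in $t$; existence of the limit on the right is supplied by the local Weyl law (Theorem \ref{thm:Weyl-law}), and the statement then transfers to the left. There is no serious obstacle here: the argument is a routine Cesàro-of-a-null-sequence reduction, and the only point that requires care is propagating the uniformity in $t$ throughout, which is already built into Lemmas \ref{lem:bdd} and \ref{lem:eigenfunc-conv}.
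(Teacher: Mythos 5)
Your proof is correct and follows essentially the same route as the paper's: both decompose $\left|f_{n}^{(t)}(v)\right|^{2}-\left|f_{n}^{(0)}(v)\right|^{2}$ into a difference factor controlled by Lemma \ref{lem:eigenfunc-conv} and a sum factor bounded via Lemma \ref{lem:bdd}, and then invoke the fact that the Ces\`{a}ro mean of a (uniformly) null sequence is null. The only cosmetic difference is that you insert absolute values and the reverse triangle inequality where the paper factors the real-valued difference directly; this changes nothing.
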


Before proving the lemmas, we use them to prove Theorem \ref{thm:mean_RNG}.
\begin{proof}[Proof of Theorem \ref{thm:mean_RNG}]
 Within this proof, we apply Lemma \ref{lem:hadamard}, and do so
for a single Robin vertex $v$. The additivity of (\ref{eq:-1}) implies
that the proof holds for any set $\VR$ of Robin vertices (see Remark
\ref{rem:additivity}).

\begin{align}
\left\langle d\right\rangle _{n}\left(\sigma\right) & =\lim_{N\rightarrow\infty}\frac{1}{N}\sum_{n=1}^{N}d_{n}\left(\sigma\right)\label{eq:-50}\\
 & =_{(\ref{eq:-1})}\lim_{N\rightarrow\infty}\frac{1}{N}\sum_{n=1}^{N}\int_{0}^{\sigma}\left|f_{n}^{\left(t\right)}\left(v\right)\right|^{2}\rmd t\label{eq:RNG expressed as t-integral}\\
 & =\lim_{N\rightarrow\infty}\int_{0}^{\sigma}\frac{1}{N}\sum_{n=1}^{N}\left|f_{n}^{\left(t\right)}\left(v\right)\right|^{2}\rmd t\\
 & =\int_{0}^{\sigma}\lim_{N\rightarrow\infty}\frac{1}{N}\sum_{n=1}^{N}\left|f_{n}^{\left(0\right)}\left(v\right)\right|^{2}\rmd t\\
 & =\left\langle \left|f^{\left(0\right)}\left(v\right)\right|^{2}\right\rangle _{n}\sigma=\frac{2}{\deg\left(v\right)\left|\Gamma\right|}\sigma.\label{eq: RNG as square value of eigenfunction}
\end{align}
where we have used Lemma \ref{lem:eigmean-conv} when moving to the
fourth line, and the local Weyl law (\ref{eq:-15}) in the last equality.
\end{proof}
\begin{proof}[Proof of Lemma \ref{lem:eigenfunc-conv}]
 By Theorem \ref{thm:secular-equation}, the eigenvalues of $H^{\left(t\right)}$
are $\left(k_{n}^{(t)}\right)^{2}$, where $k_{n}^{(t)}$ are the
solutions to the secular equation. Recall that the roots $k_{n}^{(t)}$,
determine the coefficients of the eigenfunction via the (analytic)
matrix equation (\ref{eq:-2}) and (\ref{eq:-6}). It is thus enough
to show that as $n\rightarrow\infty$, $\left|k_{n}^{(t)}-k_{n}^{(0)}\right|\rightarrow0$
uniformly in $t\in\left[0,\sigma\right]$. 

By Theorem \ref{thm:secular-equation}, $k_{n}^{(t)}$ are the $k$
values for which $\det\left(I-S^{(t)}\rme^{\rmi k\L}\right)=0$, where
\begin{equation}
S_{j'j}^{\left(t\right)}=\begin{cases}
\frac{2}{\deg\left(v\right)+\frac{\rmi t}{k}}-1 & j'=\hat{j}\\
\frac{2}{\deg\left(v\right)+\frac{\rmi t}{k}} & j\rightarrow j'\text{ at \ensuremath{v} and }j'\neq\hat{j}\\
0 & \text{Otherwise},
\end{cases}\label{eq:-3-1-1-1-1-1}
\end{equation}
as in (\ref{eq:-32-2}). Denoting $U^{\left(t\right)}\left(k\right):=S^{\left(t\right)}\text{e}^{ik\L}$,
we get 
\begin{equation}
\left\Vert U^{\left(t\right)}\left(k\right)-U^{\left(0\right)}\left(k\right)\right\Vert _{\infty}=\left\Vert \text{e}^{\rmi k\L}\left(S^{\left(t\right)}-S^{\left(0\right)}\right)\right\Vert _{\infty}\leq\frac{2t}{\deg\left(v\right)\left|\deg\left(v\right)+\frac{\rmi t}{k}\right|k},\label{eq:-52}
\end{equation}
and this expression approaches zero uniformly in $t\in\left[0,\sigma\right]$
as $k\rightarrow\infty$ . Since the supremum norm of the difference
tends to zero, so does the operator norm of the difference. This means
that as $k\rightarrow\infty$, the eigenvalues of $U^{\left(t\right)}\left(k\right)$
converge to those of $U^{\left(0\right)}\left(k\right)$ uniformly
in $t\in\left[0,\sigma\right]$. Denote the eigenvalues of the unitary
matrix $U^{\left(t\right)}\left(k\right)$ by $\left(\text{e}^{i\theta_{m}^{(t)}\left(k\right)}\right)_{m=1}^{2E}$,
so that the eigenphases $\left(\theta_{m}^{(t)}\left(k\right)\right)_{m=1}^{2E}$
are the lifts of these eigenvalues from $S^{1}$ to the universal
cover $\mathbb{R}$. By Theorem \ref{thm:secular-equation}, $k^{2}>0$
is an eigenvalue of $H^{(t)}$ if and only if $\theta_{m}^{(t)}\left(k\right)\in2\pi\mathbb{Z}$
for some $m$. Denote by $\left(k_{n}^{(t)}\right)_{n=1}^{\infty}$
the $k$ values for which this happens (these are exactly the zeros
of the secular function $\det\left(I-S^{(t)}\rme^{\rmi k\L}\right)$).

We know that $\left(\theta_{m}^{(t)}\left(k\right)\right)_{m=1}^{2E}$
increase monotonically with $k$ at a rate which is bounded from below
by some $c>0$, and that $k_{n}^{(0)}<k_{n}^{(t)}$ for all $n$,
\cite[lem. 4.5]{BolEnd_ahp09}. Then by applying the mean value theorem,
we get
\begin{align}
 & \theta_{m}^{(0)}\left(k_{n}^{(0)}\right)=\theta_{m}^{(t)}\left(k_{n}^{(t)}\right)\geq\theta_{m}^{(t)}\left(k_{n}^{(0)}\right)+c\left(k_{n}^{(t)}-k_{n}^{(0)}\right)\label{eq:-53}\\
\Rightarrow & ~~k_{n}^{(t)}-k_{n}^{(0)}\leq\frac{1}{c}\left(\theta_{m}^{(0)}\left(k_{n}^{(0)}\right)-\theta_{m}^{(t)}\left(k_{n}^{(0)}\right)\right).\label{eq:-54}
\end{align}

As $n\rightarrow\infty$ (which is equivalent to $k\rightarrow\infty$),
we know that $\theta_{m}^{(0)}\left(k_{n}^{(0)}\right)-\theta_{m}^{(t)}\left(k_{n}^{(0)}\right)\longrightarrow0$
uniformly in $t\in\left[0,\sigma\right]$ (as argued (\ref{eq:-52})).
Therefore, we conclude that as $n\rightarrow\infty$, $\left|k_{n}^{(t)}-k_{n}^{(0)}\right|\rightarrow0$
uniformly in $t\in\left[0,\sigma\right]$. This completes the proof.
\end{proof}
\begin{proof}
[Proof of Lemma \ref{lem:eigmean-conv}]

We first note that for $t=0$, $\left\langle \left|f^{\left(0\right)}\left(v\right)\right|^{2}\right\rangle _{n}$
exists by the local Weyl law in Theorem \ref{thm:Weyl-law}. We denote
this mean value by $C$ for brevity. For $t\neq0$, we claim that
$\left\langle \left|f^{\left(t\right)}\left(v\right)\right|^{2}\right\rangle _{n}$
exists as well, that it actually equals the same constant $C$, and
that the convergence is uniform with $t\in\left[0,\sigma\right]$.
Writing
\begin{equation}
\frac{1}{N}\sum_{n=1}^{N}\left|f_{n}^{\left(t\right)}\right|^{2}=\frac{1}{N}\sum_{n=1}^{N}\left|f_{n}^{\left(0\right)}\right|^{2}+\frac{1}{N}\sum_{n=1}^{N}\left(\left|f_{n}^{\left(t\right)}\right|^{2}-\left|f_{n}^{\left(0\right)}\right|^{2}\right),\label{eq:-60}
\end{equation}
we have that as $N\rightarrow\infty$, the first term converges to
$C$, and we claim that the second term converges to zero uniformly
with $t\in\left[0,\sigma\right]$.

Recall by Lemma \ref{lem:bdd} that the values of $\left|f_{n}^{\left(t\right)}\left(v\right)\right|^{2}$
are all uniformly bounded in $t\in\mathbb{R}$ by some $M>0$. Hence,
\begin{align}
 & \left|\left|f_{n}^{\left(t\right)}\left(v\right)\right|^{2}-\left|f_{n}^{\left(0\right)}\left(v\right)\right|^{2}\right|=\left|f_{n}^{\left(t\right)}\left(v\right)-f_{n}^{\left(0\right)}\left(v\right)\right|\cdot\left|f_{n}^{\left(t\right)}\left(v\right)+f_{n}^{\left(0\right)}\left(v\right)\right|\label{eq:-61-1}\\
 & \leq2M\left|f_{n}^{\left(t\right)}\left(v\right)-f_{n}^{\left(0\right)}\left(v\right)\right|\rightarrow0.\label{eq:-62-1}
\end{align}
This convergence is uniform with $t\in\left[0,\sigma\right]$, since
$\left|f_{n}^{\left(t\right)}-f_{n}^{\left(0\right)}\right|\underset{n\rightarrow\infty}{\rightarrow}0$
uniformly by Lemma \ref{lem:eigenfunc-conv}. Now, we get that in
the limit $N\rightarrow\infty$, the second term of (\ref{eq:-60})
is a Ces\`{a}ro mean of a bounded series which converges uniformly
to zero. Using this, it follows straightforwardly that the Ces\`{a}ro
mean itself converges uniformly to zero, as required.
\end{proof}

\section{Proof and discussion of Theorems \ref{thm:probability} and \ref{thm:converging_subsequence}\label{sec:conv-sub-proof}}
\begin{proof}[Proof of Theorem \ref{thm:probability}]
 As in the preceding proofs, we may consider the case of a single
Robin vertex parameterized by $v=0$. Using arguments similar to the
ones used to derive (\ref{eq: RNG as square value of eigenfunction})
within the proof of Theorem \ref{thm:mean_RNG}, one can show that
\begin{equation}
\lim_{N\rightarrow\infty}\frac{\#\left\{ n\leq N:d_{n}\left(\sigma\right)\leq x\right\} }{N}=\lim_{N\rightarrow\infty}\frac{\#\left\{ n\leq N:\sigma\left|f_{n}^{\left(0\right)}\left(0\right)\right|^{2}\leq x\right\} }{N}.\label{eq:123}
\end{equation}
Motivated by (\ref{eq: sqaure of eigenfunction on secular manifold})
in the proof of Theorem \ref{thm:Weyl-law}, we consider the following
auxiliary function on $\sreg$:
\begin{equation}
g\left(\vec{\kappa}\right):=\frac{\left|a_{j}+a_{\hat{j}}\text{e}^{\rmi\kappa_{j}}\right|^{2}}{\sum_{e=1}^{E}\ell_{e}\left(\left|a_{e}\right|^{2}+\left|a_{\hat{e}}\right|^{2}\right)}.\label{eq:-18}
\end{equation}
Denoting $\vec{\kappa}_{n}:=k_{n}\vec{\ell}$, we have that
\begin{equation}
\lim_{N\rightarrow\infty}\frac{\#\left\{ n\leq N:\sigma\left|f_{n}^{\left(0\right)}\left(0\right)\right|^{2}\leq x\right\} }{N}=\lim_{N\rightarrow\infty}\frac{\#\left\{ n\leq N:\sigma g\left(\vec{\kappa}_{n}\right)\leq x\right\} }{N}.\label{eq:-24}
\end{equation}

For a fixed $x\in\mathbb{R}$, define the following characteristic
function on $\sreg$:
\begin{equation}
\eta_{x}\left(\vec{\kappa}\right)=\begin{cases}
1 & g\left(\vec{\kappa}\right)\leq\frac{x}{\sigma}\\
0 & \text{Otherwise}
\end{cases}\label{eq:-21}
\end{equation}
We claim that $\eta_{x}\left(\vec{\kappa}\right)$ is Riemann integrable
for every $x\in\mathbb{R}$. As in the proof of Lemma \ref{lem:Amean},
we employ the proof of lemma 4.25 in \cite{Alon_PhDThesis} to get
that all functions of the form $\left|a_{j}\left(\vec{\kappa}\right)\right|^{2}$
and $,a_{j}\left(\vec{\kappa}\right)\overline{a_{\hat{j}}}\left(\vec{\kappa}\right)$
are real analytic functions on $\sreg$. Since the denominator in
(\ref{eq:-18}) is bounded from below by a positive value, we conclude
that $g$ is real analytic as well. We get by \cite[lem. 7.1]{AloBan_21ahp}
that for each connected component $\mathcal{M}$ of $\sreg$, either
$g$ is constant or its level sets are of measure zero. In particular,
either the sublevel set $\left\{ g\left(\vec{\kappa}\right)\leq\frac{x}{\sigma}\right\} \cap\mathcal{M}$
is $\mathcal{M}$ or it is a submanifold of $\mathcal{M}$. In both
cases it has a boundary of measure zero. This implies that $\eta_{x}\left(\vec{\kappa}\right)$
is indeed Riemann integrable for every $x$.

The Riemann integrability of $\eta_{x}\left(\vec{\kappa}\right)$
allows to apply the ergodic theorem if we assume that the graph edge
lengths are linearly independent over $\mathbb{Q}$. We proceed by
first making this assumption (which is lifted later), and apply (\ref{eq:123}),(\ref{eq:-24})
and Theorem \ref{thm:ergodic-theorem} to obtain
\begin{align}
F_{\sigma}\left(x\right) & =\lim_{N\rightarrow\infty}\frac{\#\left\{ n\leq N:d_{n}\left(\sigma\right)\leq x\right\} }{N}=\lim_{N\rightarrow\infty}\frac{\#\left\{ n\leq N:\sigma g\left(\vec{\kappa}_{n}\right)\leq x\right\} }{N}\label{eq:-22}\\
 & =\lim_{N\rightarrow\infty}\frac{1}{N}\sum_{n=1}^{N}\eta_{x}\left(\vec{\kappa}_{n}\right)=\int_{\sreg}\eta_{x}\left(\vec{\kappa}\right)d\mu_{\vec{\ell}}=\mu_{\vec{\ell}}\left(\vec{\kappa}:g\left(\vec{\kappa}\right)\leq\frac{x}{\sigma}\right).\label{eq:-23}
\end{align}
We thus see that $F_{\sigma}\left(x\right)$ in fact returns the Barra-Gaspard
measure of the sublevel sets of $g$. The fact that $F_{\sigma}$
is a cumulative distribution function follows from $F_{\sigma}\left(x\right)=\mu_{\vec{\ell}}\left(\vec{\kappa}:g\left(\vec{\kappa}\right)\leq\frac{x}{\sigma}\right)$.
Moreover, Theorem \ref{thm:explicit_bounds} shows that the associated
probability measure $\mu_{\sigma}$ must be supported on $\left[0,\frac{4\sigma}{\ell_{\min}}\right]$.

Now, we lift the assumption that the graph edge lengths are linearly
independent over $\Q$. If the edge lengths are linearly dependent,
then the associated torus flow (as defined Section \ref{subsec:The-secular-manifold})
is no longer dense. Denoting the dimension of $\text{span}_{\mathbb{Q}}\left\{ \ell_{e}\right\} _{e\in\mathcal{E}}$
by $D$, we see that the torus flow is in fact only dense in a $D$-dimensional
subtorus of the $E$-dimensional torus. By applying an appropriate
change of coordinates, we may assume that the torus flow is dense
when projected onto the first $D$ coordinates of the torus, and constant
when projected onto the last $E-D$ coordinates (see Figure \ref{fig:tflow}).
\begin{figure}
\includegraphics[scale=0.3]{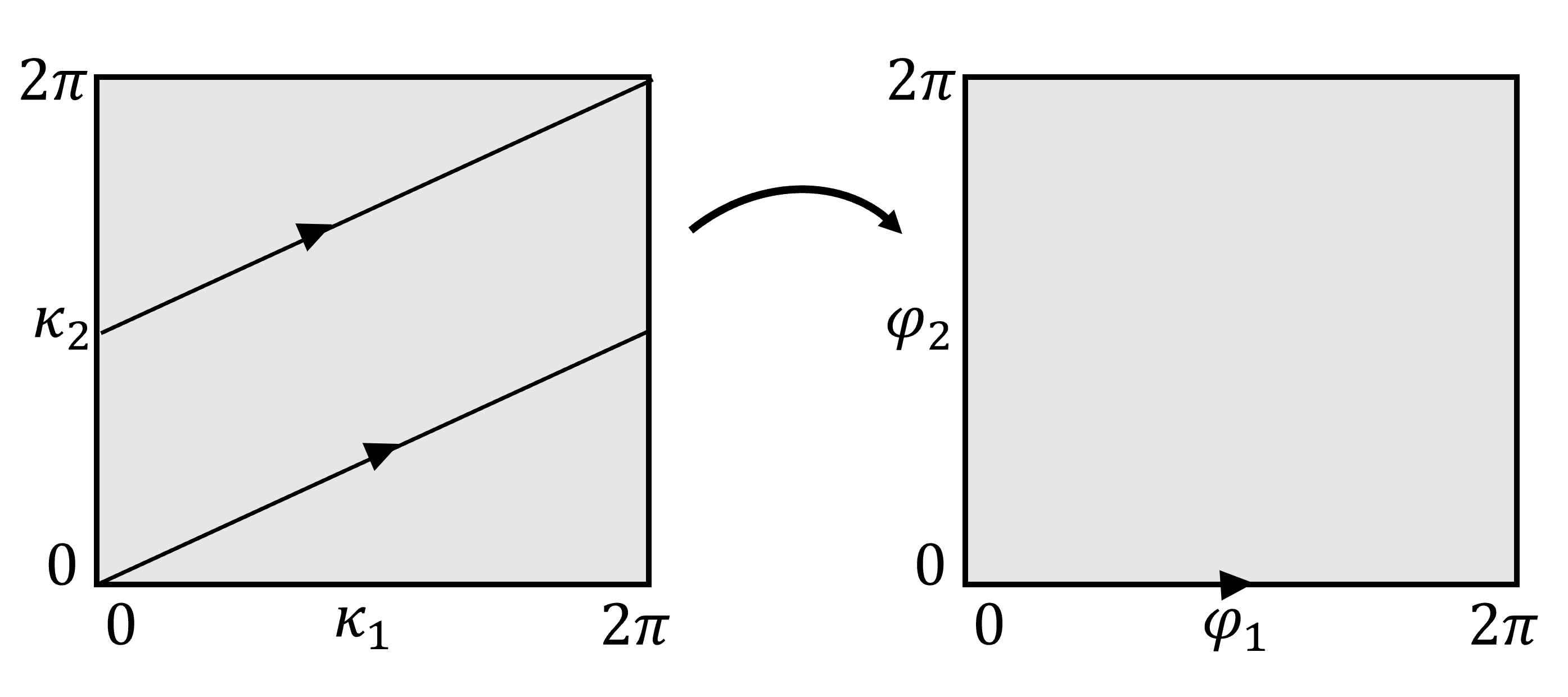}

\caption{\label{fig:tflow} The torus flow $\phi\left(k\right)=\left(2k,k\right)$
on $\mathbb{T}^{2}$ as an example of rationally dependent entries
(here $D=1$). After the change of coordinates $\left(\varphi_{1},\varphi_{2}\right)=\frac{1}{3}\left(\kappa_{1}+\kappa_{2},\kappa_{1}-2\kappa_{2}\right)$,
we get the flow $\tilde{\phi}\left(k\right)=\left(k,0\right)$, which
is dense in the first component and constant in the second component. }
\end{figure}
 Hence, we may consider the flow within this $D$-dimensional subtorus.
Repeating the same arguments as before, now considering the restriction
of the function $g$ to the intersection of $\sreg$ and the subtorus,
leads to the same conclusion -- that $F_{\sigma}$ is a cumulative
distribution function.

Lastly, we consider the case where $\ell_{e}/\ell_{e'}\in\mathbb{Q}$,
for all $e,e'\in\mathcal{E}$. In this case, $D=1$ and the subtorus
mentioned above is one-dimensional (see also Figure \ref{fig:tflow}).
Hence, the intersection of $\sreg$ and the subtorus consists of only
finitely many points. In particular the sequence $\left(g\left(\vec{\kappa}_{n}\right)\right)_{n=1}^{\infty}$
contains finitely many values (it is in fact also periodic). This
implies that $F_{\sigma}\left(x\right)$ only attains finitely many
values, and so the associated measure $\mu_{\sigma}$ is a convex
combination of Dirac masses.
\end{proof}
\begin{rem}
\label{rem:Absolutely-continuous-conjecture} We conjecture that if
not all ratios of edge lengths are rational, then the measure $\mu_{\sigma}$
is absolutely continuous with respect to the Lebesgue measure. We
briefly explain the intuition behind this conjecture. Assuming that
the function $g$ from the proof above only attains regular values,
then the co-area formula gives
\begin{equation}
F_{\sigma}\left(x\right)=\mu_{\vec{\ell}}\left(\vec{\kappa}:g\left(\vec{\kappa}\right)\leq\frac{x}{\sigma}\right)=\int_{-\infty}^{x/\sigma}dt\int_{\left\{ g=t\right\} }\frac{1}{\left|\nabla g\right|}d\mu_{\vec{\ell}}^{t}\left(\vec{\kappa}\right),\label{eq:co-area}
\end{equation}
where $d\mu_{\vec{\ell}}^{t}$ denotes the surface area element of
the Barra-Gaspard measure $\mu_{\vec{\ell}}$ on the level set $\left\{ g=t\right\} $.
Since $F_{\sigma}$ can be expressed as an integral with respect to
the Barra-Gaspard measure, which is absolutely continuous with respect
to the Lebesgue measure, we see that $\mu_{\sigma}$ is as well.

This computation can be carried out assuming that $\nabla g$ does
not vanish identically on some open subset of a level set. Since the
function $g$ is real analytic, this can happen only if $g$ is constant
on a whole connected component of $\sreg$. Hence, absolute continuity
holds assuming that $g$ is not constant on a connected component
of $\sreg$. Nevertheless, the assumption that $g$ is not constant
on a connected component of $\sreg$ is not proven here. This assumption
is based on the intuition that such a constraint on $g$ (and in turn
on the values of $\vec{a}\left(\vec{\kappa}\right)$) is too restrictive
and on numerical experiments (see for example Figure \ref{fig:RNG-hist-1}
which suggests that $\mu_{\sigma}$ is indeed absolutely continuous).
Finally, we refer the reader to a conjecture of a similar spirit in
\cite[rem. 7.2]{AloBan_21ahp}.
\end{rem}

\begin{proof}[Proof of Theorem \ref{thm:converging_subsequence}]
 This proof once again uses the function $g\left(\vec{\kappa}\right)$
from the proof of Theorem \ref{thm:probability}.

Assume first that the graph's edge lengths are linearly independent
over $\mathbb{Q}$, and so the torus flow considered in Section \ref{subsec:The-secular-manifold}
is dense. Recall that the intersections of the torus flow with $\sreg$
is exactly $\left(\vec{\kappa}_{n}\right)_{n=1}^{\infty}$. Since
$g$ is continuous on $\sreg$ and the torus flow is dense, then for
every value $c\in\overline{Im\left(g\right)}$, there exists a subsequence
$g\left(\vec{\kappa}_{n_{m}}\right)$ along the torus flow such that
$g\left(\vec{\kappa}_{n_{m}}\right)\rightarrow c$. In order to connect
the RNG with the values of $g$, we proceed exactly as in the proof
of Theorem \ref{thm:probability}. Namely, combining (\ref{eq: RNG as square value of eigenfunction})
within the proof of Theorem \ref{thm:mean_RNG} together with (\ref{eq: sqaure of eigenfunction on secular manifold})
in the proof of Theorem \ref{thm:Weyl-law} gives that
\begin{equation}
\lim_{m\rightarrow\infty}d_{n_{m}}\left(\sigma\right)=\lim_{m\rightarrow\infty}\sigma\cdot g\left(\vec{\kappa}_{n_{m}}\right)=c\sigma,\label{eq:}
\end{equation}
and that this convergence is uniform in $\sigma$. These are the only
possible accumulation points for the RNG, since the arguments above
also show that any accumulation point must belong to $\overline{\left\{ \sigma\cdot g\left(\vec{\kappa}_{n}\right)\right\} _{n=1}^{\infty}}=\overline{\sigma\cdot g\left(\sreg\right)}$.
Since $g$ is a bounded and continuous function on a space with finitely
many connected components, $\overline{g\left(\sreg\right)}$ is a
finite collection of closed intervals. These intervals are of course
independent of the edge lengths of the graph (since $g$ was defined
independently of them). This proves the statement under the assumption
above (edge lengths are linearly independent over $\Q$).

Similar to the proof of Theorem \ref{thm:probability}, the rationally
dependent case can be proven by projecting the torus flow to the appropriate
subtorus where the flow is dense, and repeating the same argument.
Note that unlike the rationally independent case, the possible limit
points now depend on the subtorus to which we restrict the function
$g$, which of course depends on the graph edge lengths.

Lastly, in the degenerate case where $\ell_{e}/\ell_{e'}\in\mathbb{Q},\forall e,e'\in\mathcal{E}$,
the mentioned subtorus is one-dimensional (with a periodic flow along
it), which implies that $\left(g\left(\vec{\kappa}_{n}\right)\right)_{n=1}^{\infty}$
attains only finitely many values. Hence, the set of accumulation
points is finite and these are the values $a_{i}=b_{i}$ in the statement
of the theorem.
\end{proof}
While Theorem \ref{thm:converging_subsequence} above shows the existence
of converging subsequences for the RNG, it does not concretely specify
the possible limit points. The proof of the theorem shows that that
these limit points are determined by the possible values of the observable
$\left|f_{n}\left(v\right)\right|^{2}$. We now discuss this briefly.

First, the discussion in Section \ref{subsec:Discussion-bounds} shows
that zero is always an accumulation point for the RNG. This was proven
for the case of a star graph in \cite{RivRoy_jphys20}. In the two-dimensional
setting, \cite{RudWigYes_arxiv21} shows that under the assumption
that the billiard dynamics associated with the domain is ergodic,
there exists a subsequence of density $1$ which converges pointwise
to $\left\langle d\right\rangle _{n}\left(\sigma\right)$. The example
of an equilateral star graph (Figure \ref{fig:equistar-gap}) shows
that it is not true in general for graphs. 
\begin{figure}
\includegraphics[scale=0.4]{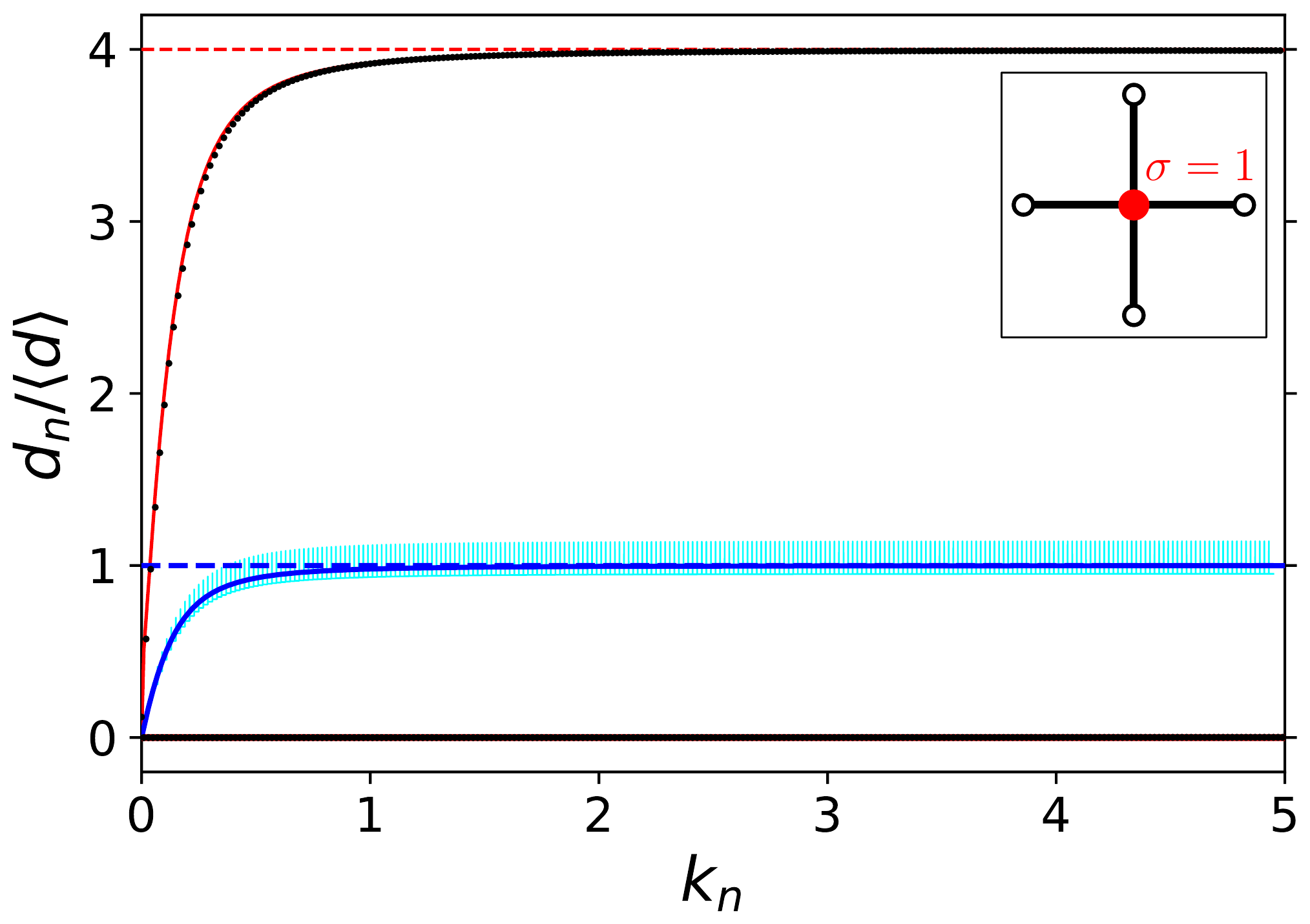}

\caption{\label{fig:equistar-gap} RNG (black points) for an equilateral star
graph with Robin condition at the central vertex, scaled so that $\langle d\rangle_{n}=1$
.The fluctuating light blue line is a running average and the blue
lines on top of it are the analytic results from Equations (\ref{eq:RNG_mean_const}),
(\ref{eq:RNG_mean_arctan}). The red dashed and full lines are the
analytic bounds of Equation (\ref{eq:explicit_bounds}), (\ref{eq:upperbound}).
Since many of the states vanish at the central vertex, their corresponding
RNG is zero. The RNG accumulates at two particular values, and does
not get close to the mean value.}
\end{figure}
 We believe that under the assumption that the graph's edge lengths
are linearly independent over $\mathbb{Q}$ (which is analogous to
the ergodic billiards assumption made in \cite{RudWigYes_arxiv21}),
then there does exist a subsequence which converges to the mean value.
This is equivalent to stating that the support of the measure $\mu_{\sigma}$
from Theorem \ref{thm:probability} contains the point $\left\langle d\right\rangle _{n}\left(\sigma\right)$.
This was proven for a star graph in \cite{RivRoy_jphys20}, and we
can also prove this for graphs which contain loops, but the general
case is still unclear. Even if such a subsequence exists, we believe
that it cannot be of density $1$ as in the two-dimensional setting.
Indeed, if the measure $\mu_{\sigma}$ is absolutely continuous in
general (as we conjecture in Remark \ref{rem:Absolutely-continuous-conjecture})
then there is no converging subsequence of density $1$.

\bigskip

\section{Proof and Discussion of Theorem \ref{thm:explicit_bounds} \label{sec:bounds proof}}

\subsection{Proof of Theorem \ref{thm:explicit_bounds}}

The main tool in the proof of Theorem \ref{thm:explicit_bounds} is
the following lemma, which provides an upper bound for the eigenvalue
derivatives with respect to the Robin parameter. This is done in terms
of the star decomposition introduced before Theorem \ref{thm:explicit_bounds}.
Adding to those notations, we also denote
\begin{equation}
s_{v}:=\frac{1}{\sum_{u\in\Uv}s_{v,u}^{-1}},\label{def:sv}
\end{equation}
so that $s_{v}$ is the harmonic mean of the edge lengths of the star
$\mathcal{S}_{v}$, divided by the number of edges in $\Sv$. If an
auxiliary vertex coincides with a graph vertex such that $s_{v,u}=0$,
then $s_{v}=0$.
\begin{lem}
\label{lemma:explicit_bounds} For any star decomposition of $\Gamma$
and any eigenvalue $\lambda_{n}\left(\sigma\right)>0$ ,
\begin{eqnarray}
0\leq\frac{\rmd\lambda_{n}\left(\sigma\right)}{\rmd\sigma}\le2\max_{v\in\VR}\left(\left|\Sv\right|+\frac{\sigma^{2}s_{v}+\sigma}{\lambda_{n}\left(\sigma\right)}\right)^{-1}\,.\label{eq:explicit_bounds_sensi}
\end{eqnarray}
For the lowest eigenvalue $\lambda_{1}\left(0\right)=0$ at $\sigma=0$,
the following holds:
\begin{equation}
\frac{\rmd\lambda_{1}\left(\sigma\right)}{\rmd\sigma}\big|_{\sigma=0}=\frac{\left|\VR\right|}{\left|\Gamma\right|}.\label{eq:-29}
\end{equation}
\end{lem}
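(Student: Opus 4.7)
The plan is to reduce the problem to a single quadratic integral identity on each half-edge of the star decomposition, then assemble these identities over the whole graph and finish with one Cauchy--Schwarz bound. Lemma~\ref{lem:hadamard} gives $\frac{\rmd \lambda_n(\sigma)}{\rmd \sigma} = \sum_{v \in \VR} |f_n^{(\sigma)}(v)|^2$, which makes the lower bound $\geq 0$ immediate. The second statement is also immediate: at $\sigma=0$ the ground state is the constant $f_1 \equiv |\Gamma|^{-1/2}$, so $|f_1(v)|^2 = |\Gamma|^{-1}$ at every vertex and the Hadamard formula gives $\frac{\rmd \lambda_1(\sigma)}{\rmd \sigma}\big|_{\sigma=0} = |\VR|/|\Gamma|$.

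For the nontrivial upper bound, write $k^2 := \lambda_n(\sigma)$ and let $\alpha_{v,u}$ denote the outward derivative of $f_n^{(\sigma)}$ at $v$ along the half-edge $[v,u]\subset\Sv$ of length $L_u := s_{v,u}$. Multiplying the eigenvalue equation $-f''=k^2 f$ by the multiplier $(L_u-x)f'(x)$ on $[0,L_u]$ and integrating by parts twice would yield the half-edge identity
\[
2\int_0^{L_u} |f|^2\, \rmd x \;=\; L_u\bigl(|f(v)|^2 + \alpha_{v,u}^2/k^2\bigr) + \frac{f(v)\,\alpha_{v,u} - f(u)\,f'(u)}{k^2}.
\]
Summing over $u\in\Uv$ and applying the vertex condition $\sum_{u}\alpha_{v,u}=\sigma_v f(v)$ (with $\sigma_v=\sigma$ for $v\in\VR$ and $\sigma_v=0$ otherwise) then gives the per-star identity
\[
2\|f\|_{L^2(\Sv)}^2 \;=\; |\Sv|\,|f(v)|^2 + \frac{1}{k^2}\sum_{u\in\Uv} L_u\,\alpha_{v,u}^2 + \frac{\sigma_v |f(v)|^2 - \sum_{u\in\Uv} f(u)\,f'(u)}{k^2}.
\]

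Next I would sum these per-star identities over every $v\in\V$. The boundary terms $f(u)f'(u)$ telescope: each interior auxiliary vertex is shared by two adjacent stars whose outward directions at the vertex are opposite, so the contributions cancel pairwise (auxiliary vertices coinciding with a graph vertex are handled by the same mechanism, since the Robin sum at that graph vertex absorbs the derivative that would have appeared on the degenerate star-edge). Using $\|f\|_{L^2(\Gamma)}^2=1$, the outcome is the global identity
\[
2 \;=\; \sum_{v\in\V} |\Sv|\,|f(v)|^2 + \frac{1}{k^2}\sum_{v\in\V}\sum_{u\in\Uv} L_u\,\alpha_{v,u}^2 + \frac{\sigma}{k^2}\sum_{v\in\VR}|f(v)|^2.
\]
For each Robin vertex $v$, Cauchy--Schwarz with weights $L_u^{-1}$ applied to $\sigma f(v)=\sum_u \alpha_{v,u}$ gives $\sum_{u\in\Uv} L_u\,\alpha_{v,u}^2 \geq \sigma^2 s_v\,|f(v)|^2$ (recall $s_v = 1/\sum_{u} L_u^{-1}$). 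Dropping the non-Robin contributions on the right and then factoring out the worst vertex yields
\[
2 \;\geq\; \sum_{v\in\VR}\!\left(|\Sv| + \frac{\sigma^2 s_v + \sigma}{\lambda_n(\sigma)}\right)|f(v)|^2 \;\geq\; \min_{v\in\VR}\!\left(|\Sv| + \frac{\sigma^2 s_v + \sigma}{\lambda_n(\sigma)}\right) \sum_{v\in\VR}|f(v)|^2,
\]
which together with Hadamard produces exactly the advertised upper bound. I expect the main technical obstacle to be the careful bookkeeping of the boundary cancellations, especially at auxiliary vertices placed at actual graph vertices (where $s_{v,u}=0$): one must check that the derivative term omitted from the local star-edge Robin sum is precisely compensated by the boundary contribution arriving from the neighboring star, so that the global identity still collapses to the clean form above. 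A clean way to handle this would be to prove the estimate first for strictly interior auxiliary vertices and then pass to the limit by continuity of both sides in the positions of the auxiliary vertices.
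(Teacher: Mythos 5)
Your proof is correct and follows essentially the same route as the paper's: both reorganize the normalization $\|f_n^{(\sigma)}\|_{L^2(\Gamma)}^2=1$ edge by edge according to the star decomposition, use the vertex conditions to evaluate the boundary/cross terms, and finish with the weighted Cauchy--Schwarz (harmonic-mean) bound $\sum_{u}s_{v,u}\,\alpha_{v,u}^2\ge \sigma^2 s_v\,|f(v)|^2$ before minimizing over $v\in\VR$. The only difference is presentational: the paper writes $f|_e=A_e\cos(kx_e-\varphi_{e,v})$ and optimizes over the phases $\tan\varphi_{e,v}$ under the constraint $\sum_e\tan\varphi_{e,v}=\sigma/k$, whereas your integration-by-parts identity keeps everything in terms of $f(v)$ and the outward derivatives, which lets you retain an exact global identity and avoid the case distinction at vertices where $f(v)=0$.
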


\begin{proof}
The lower bound holds trivially by Lemma \ref{lem:hadamard}:
\begin{equation}
\frac{\rmd\lambda_{n}\left(\sigma\right)}{\rmd\sigma}=\sum_{v\in\VR}|f_{n}^{\left(\sigma\right)}\left(v\right)|^{2}\geq0.\label{eq:hadamard2}
\end{equation}
For $\lambda_{1}\left(0\right)$, the corresponding eigenfunction
is constant, and normalization implies that $\left|f_{1}^{\left(0\right)}\left(v\right)\right|^{2}=\frac{1}{\left|\Gamma\right|}$
for all $v$, leading to (\ref{eq:-29}).

To prove the upper bound, we fix an eigenvalue $\lambda_{n}\left(\sigma\right)>0$,
and denote it by $\lambda$ for brevity. On an edge $e=\left\{ v,v'\right\} $,
the corresponding eigenfunction can be written as 
\begin{equation}
\left.f\right|_{e}\left(x_{e}\right)=A_{e}\cos\left(kx_{e}-\varphi_{e,v}\right),\label{eq:cosine}
\end{equation}
where $x_{e}$ is the distance from $v$. Alternatively, using $x_{e}':=\ell_{e}-x_{e}$
and $\varphi_{e,v'}:=k\ell_{e}-\varphi_{e,v}$, one can write
\begin{equation}
\left.f\right|_{e}\left(x_{e}'\right)=A_{e}\cos\left(kx_{e}'-\varphi_{e,v'}\right).\label{eq:cosine-1}
\end{equation}
The following relations then hold: 
\begin{eqnarray}
f\left(v\right)=A_{e}\cos\varphi_{e,v} & \qquad & \frac{\rmd\left.f\right|_{e}}{\rmd x_{e}}\Big|_{x_{e}=0}=kA_{e}\sin\varphi_{e,v}\label{eq:A2u}\\
f\left(v'\right)=A_{e}\cos\varphi_{e,v'} &  & \frac{\rmd\left.f\right|_{e}}{\rmd x_{e}'}\Big|_{x_{e}'=0}=kA_{e}\sin\varphi_{e,v'},
\end{eqnarray}
and straightforward integration yields 
\begin{eqnarray}
\int_{0}^{\ell_{e}}(\left.f\right|_{e})^{2}\left(x\right)\rmd x_{e} & = & \frac{A_{e}^{2}}{2}\ell_{e}+A_{e}^{2}\frac{\sin\varphi_{e,v}\cos\varphi_{e,v}+\sin\varphi_{e,v'}\cos\varphi_{e,v'}}{2k}.\label{eq:edgeint}
\end{eqnarray}

Using $\left\Vert f\right\Vert _{L^{2}}=1$ and denoting the set of
edges connected to $v$ by $\mathcal{E}_{v}$ as in Subsection \ref{subsec:Basic-definitions},
we have that
\begin{eqnarray}
1 & = & \sum_{e\in\E}\int_{0}^{\ell_{e}}(\left.f\right|_{e})^{2}\left(x_{e}\right)\rmd x_{e}\label{eq:norm1}\\
 & = & \sum_{e\in\E}\frac{A_{e}^{2}}{2}\ell_{e}+\sum_{e\in\E}A_{e}^{2}\frac{\sin\varphi_{e,v}\cos\varphi_{e,v}+\sin\varphi_{e,v'}\cos\varphi_{e,v'}}{2k}\\
 & = & \sum_{e\in\E}\frac{1}{2}A_{e}^{2}\left(s_{v,u_{e}}+s_{v',u_{e}}\right)+\sum_{v\in\V}\sum_{e\in\E_{v}}A_{e}^{2}\frac{\sin\varphi_{e,v}\cos\varphi_{e,v}}{2k}\label{eq:norm2}\\
 & = & \sum_{v\in\V}\sum_{e\in\E_{v}}\frac{1}{2}A_{e}^{2}s_{v,u_{e}}+\sum_{v\in\V}\sum_{e\in\E_{v}}A_{e}^{2}\frac{\sin\varphi_{e,v}\cos\varphi_{e,v}}{2k},
\end{eqnarray}
where in the first term of (\ref{eq:norm2}) we used $\ell_{e}=s_{v,u_{e}}+s_{v',u_{e}}$,
which holds for any position of the auxiliary vertex $u_{e}$. We
would like to replace the amplitudes $A_{e}$ with the expressions
from (\ref{eq:A2u}). This can be done whenever $\cos\varphi_{e,v}\neq0$.
Since the case $\cos\varphi_{e,v}=0$ implies $f\left(v\right)=0$,
then dropping the (non-negative) terms with $f\left(v\right)=0$ gives
the following inequality:

\begin{eqnarray}
 & 1\ge & \sum_{v\in\V:f\left(v\right)\ne0}\frac{\left|f\left(v\right)\right|^{2}}{2}\sum_{e\in\E_{v}}\frac{s_{v,u_{e}}}{\cos^{2}\varphi_{e,v}}+\sum_{v\in\V}\frac{f\left(v\right)}{2k^{2}}\sum_{e\in\E_{v}}\frac{\rmd f_{e}}{\rmd x_{e}}\Big|_{x_{e}=0}\label{eq:norm3}\\
 & = & \sum_{v\in\VR}\frac{\left|f\left(v\right)\right|^{2}}{2}\left(\left|\mathcal{S}_{v}\right|+\sum_{e\in\E_{v}}s_{v,u_{e}}\tan^{2}\varphi_{e,v}+\frac{\sigma}{k^{2}}\right),\label{eq:norm4}
\end{eqnarray}
where we have used the Robin condition (\ref{eq:-18-1}) at $v$,
dropped the (non-negative) terms corresponding to $v\notin\VR$, and
replaced $\cos^{-2}\varphi_{e,v}=1+\tan^{2}\varphi_{e,v}$.

We wish to estimate the second term in Equation (\ref{eq:norm4}).
For $f\left(v\right)\ne0$, we substitute (\ref{eq:A2u}) into the
Robin condition (\ref{eq:-18-1}) and find that
\begin{eqnarray}
\forall v:f\left(v\right)\ne0,\quad\sum_{e\in\E_{v}}\tan\varphi_{e,v}=\frac{\sigma}{k}.\label{eq:sumtan}
\end{eqnarray}
Optimizing the second term in (\ref{eq:norm4}) in the variables $\varphi_{e,v}$
under the constraint (\ref{eq:sumtan}), one obtains the following
lower bound: 
\begin{eqnarray}
\sum_{e\in\E_{v}}s_{v,u_{e}}\tan^{2}\varphi_{e,v}\ge\frac{\sigma^{2}}{k^{2}}s_{v},\label{eq: optimizing sum of tan squares}
\end{eqnarray}
where $s_{v}$ is defined in (\ref{def:sv}). The lower bound in (\ref{eq: optimizing sum of tan squares})
is attained by
\begin{eqnarray}
\forall e\in\E_{v},\quad\tan\varphi_{v,e}=\frac{1}{s_{v,u_{e}}}\cdot\frac{\sigma}{k}s_{v}.\label{eq:mincond}
\end{eqnarray}
Substituting (\ref{eq: optimizing sum of tan squares}) in (\ref{eq:norm4})
we get:
\begin{eqnarray}
1 & \ge & \sum_{v\in\VR}\frac{\left|f\left(v\right)\right|^{2}}{2}\left(\left|\mathcal{S}_{v}\right|+\frac{\sigma^{2}s_{v}+\sigma}{k^{2}}\right)\label{eq:norm5}\\
 & \ge & \min_{v\in\VR}\left(\left|\mathcal{S}_{v}\right|+\frac{\sigma^{2}s_{v}+\sigma}{k^{2}}\right)\sum_{v\in\VR}\frac{\left|f\left(v\right)\right|^{2}}{2}.\label{eq:norm6}
\end{eqnarray}
Substituting this in (\ref{eq:hadamard2}) provides the required upper
bound in the lemma.
\end{proof}
The proof of Theorem \ref{thm:explicit_bounds} is now straightforward.
\begin{proof}
Integrating over the bounds from (\ref{eq:explicit_bounds_sensi})
gives 
\begin{eqnarray}
 &  & 0\leq d_{n}\left(\sigma\right)\le\int_{0}^{\sigma}2\max_{v\in\VR}\left(\left|\Sv\right|+\frac{s_{v}t^{2}+t}{\lambda_{n}\left(t\right)}\right)^{-1}\rmd t.\label{eq:fine-bound}
\end{eqnarray}
 To get the bounds (\ref{eq:explicit_bounds}) in the theorem, we
drop the second term in the expression above (which is positive).
The inequality (\ref{eq:-13}) follows since there always exists a
star decomposition which contains only stars whose total length is
at least $\frac{\ell_{\min}}{2}$; This star decomposition is obtained
by taking $u_{e}$ as the middle point of $e$ for all edges $e\in\E$.
\end{proof}
\bigskip

\subsection{Discussion of Lemma \ref{lemma:explicit_bounds} and Theorem \ref{thm:explicit_bounds}
\label{subsec:Discussion-bounds}}

\subsubsection{Optimality of the lower bound in Lemma \ref{lemma:explicit_bounds}
and Theorem \ref{thm:explicit_bounds}}

\LyXZeroWidthSpace{}

Under certain assumptions, the lower bound of zero in Lemma \ref{lemma:explicit_bounds}
and Theorem \ref{thm:explicit_bounds} is optimal. This happens when
the corresponding graph allows for Robin eigenfunctions whose absolute
values at the set $\VR$ are arbitrarily small. This results in very
low sensitivity to the Robin condition (i.e., small value of $\frac{\rmd\lambda_{n}}{\rmd\sigma}$),
giving an arbitrarily small value to the RNG. We demonstrate this
for two typical cases:
\begin{enumerate}
\item \label{enu: graph with cycle} The graph contains a cycle.
\item \label{enu: graph is tree}The graph is a tree, where at least two
leaves (denoted by $v_{1},v_{2}$) are not contained in $\VR$.
\end{enumerate}
For the graphs above, eigenfunctions with low sensitivity to the Robin
condition exist. In the case where the edge lengths of the graph are
linearly dependent over $\mathbb{Q}$ (rationally dependent), constructing
such eigenfunctions is simple. In fact, one can construct eigenfunctions
which vanish on the set $\VR$. By Lemma \ref{lem:hadamard}, these
eigenfunctions have zero sensitivity to the Robin condition (i.e.,
$\frac{\rmd\lambda_{n}}{\rmd\sigma}=0$ for all $\sigma\in\R$), and
they thus give zero RNG -- $d_{n}\left(\sigma\right)=0$ for all
$\sigma>0$. We now point out the existence of these eigenfunctions
(see also Figures \ref{fig:0-sens-states} and \ref{fig:0-sensitivity}).

\begin{figure}
\centerline{ \includegraphics[width=0.4\textwidth]{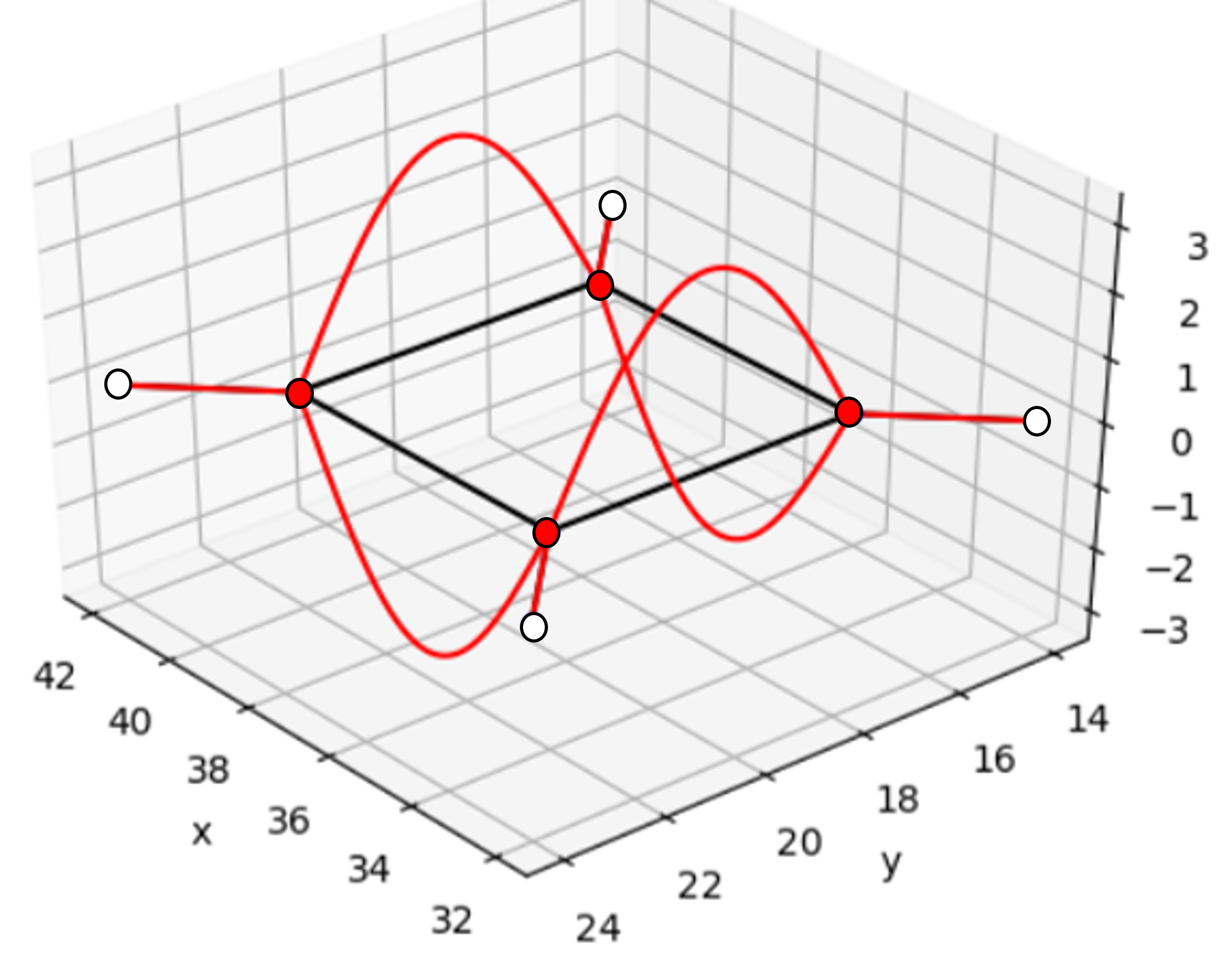}
\includegraphics[width=0.4\textwidth]{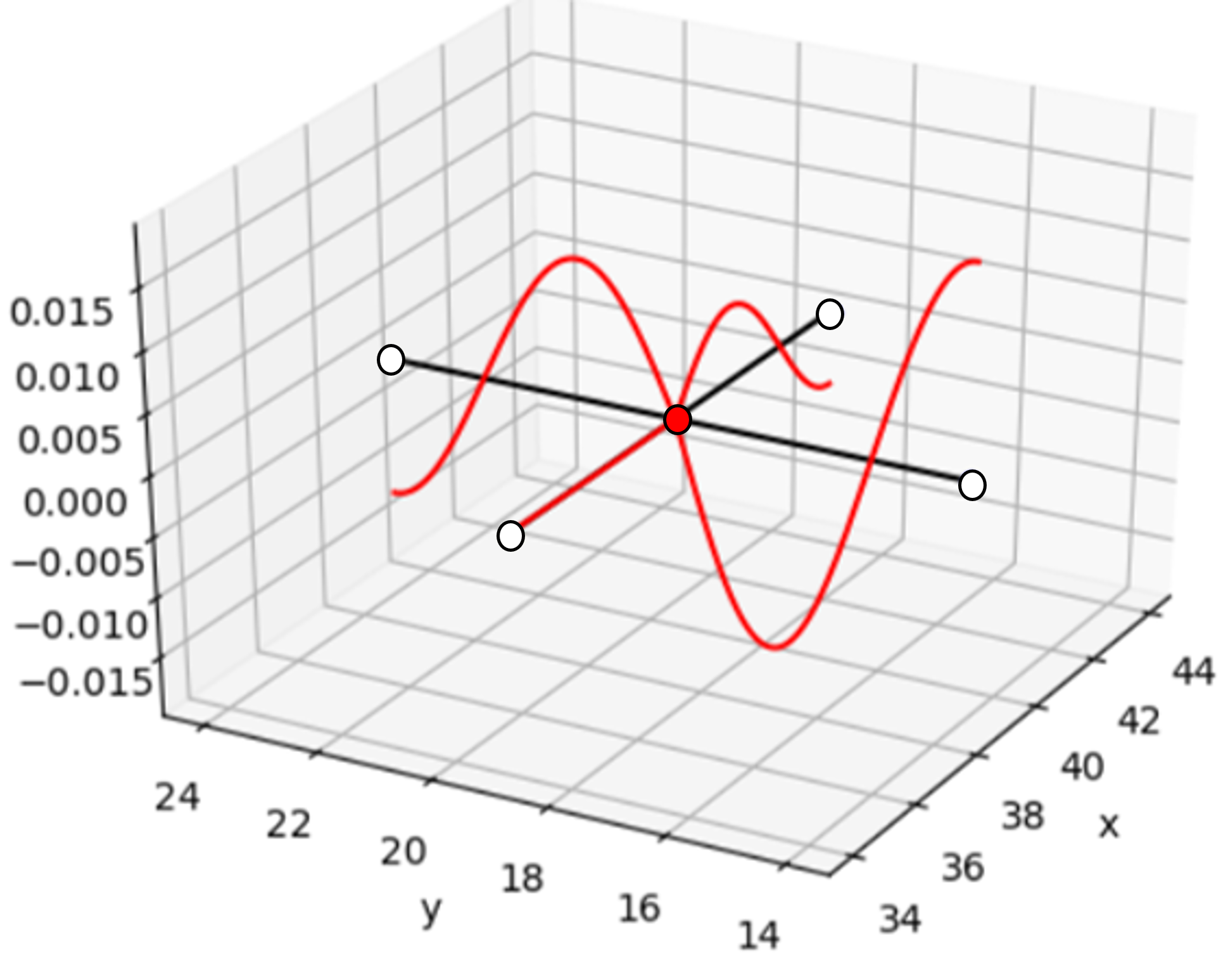} } \caption{\label{fig:0-sens-states} Two examples of states with zero sensitivity:
Case (\ref{enu: graph with cycle}) of a graph with a cycle, where
the Robin vertices are placed at the four vertices of the inner square.
Case (\ref{enu: graph is tree}) of a star graph with Robin condition
at the central vertex. In both cases the eigenfunction vanishes at
the set $\protect\VR$, resulting in zero sensitivity, i.e., $\frac{\protect\rmd\lambda_{n}}{\protect\rmd\sigma}$=0
for all $\sigma\in\protect\R$.}
\end{figure}

In case (\ref{enu: graph with cycle}), one can select a cycle on
the graph, and choose the wave number $k$ such that all edge lengths
in the cycle are integer multiples of the wave length $\Lambda:=2\pi/k$.
Under these conditions, there exist eigenfunctions which vanish on
the entire graph apart from the given cycle. In particular, those
eigenfunctions vanish at all vertices on the given cycle. In case
(\ref{enu: graph is tree}), consider the (unique) path connecting
the vertices $v_{1},v_{2}$. Then similar to before, one can choose
edge lengths and $k$ such that all edges in the path are integer
multiples of $\Lambda$, with the exception of the edges adjacent
to $v_{1},v_{2}$, to which an additional $\Lambda/4$ is added. Under
these conditions, there exist eigenfunctions which vanish on all of
the graph apart from the given path. In particular, those eigenfunctions
vanish at all interior vertices along the given path, and their derivative
at $v_{1},v_{2}$ vanishes. All eigenfunctions described above (for
both cases (\ref{enu: graph with cycle}) and (\ref{enu: graph is tree}))
vanish at $\VR$, as required.

\begin{figure}
\centerline{ \includegraphics[width=0.45\textwidth]{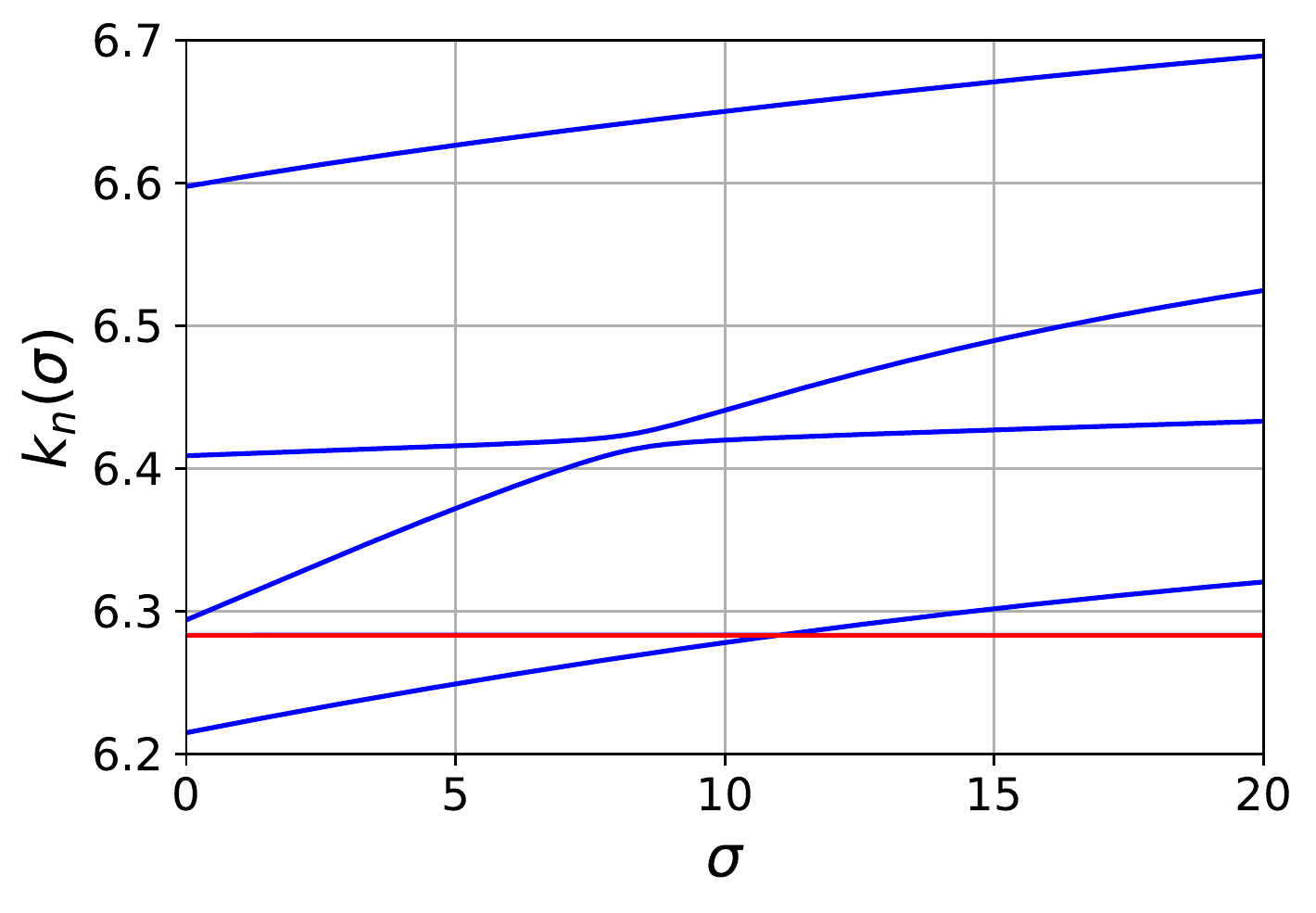} } \caption{\label{fig:0-sensitivity} The eigenvalue curves for a tetrahedron
graph with $\protect\VR=\protect\V$. The horizontal red curve at
$k=2\pi$ corresponds to a state which is supported on a triangle
subgraph and vanishes at all $v\in\protect\VR$. This gives a state
with zero sensitivity as in case (i), which results in $d_{n}\left(\sigma\right)=0$.}
\end{figure}

While it is impossible to construct eigenfunctions with zero sensitivity
in the general case of rationally independent edge lengths, one can
still use the construction above to find eigenfunctions with arbitrarily
small sensitivity. This can be done by approximating the given edge
lengths with rationally dependent edge lengths, and then applying
the method above. This will give a sequence of eigenfunctions whose
value at $\VR$ is arbitrarily small, resulting in a subsequence of
$d_{n}\left(\sigma\right)$ which tends to zero.

The frequency of such eigenfunctions with low sensitivity has been
estimated in \cite{Schanz2003}. It depends on the number of edge
lengths that determine the value of $k$ in the construction above.
For example, these eigenfunctions appear more frequently in Figure
\ref{fig:sensi} (a) than in (b). This is since in case (a) of a star
graph the supporting path which determines $k$ contains two edges,
while in case (b) of a tetrahedron it contains three edges forming
a cycle. The figures show that while the lower bound is not attained
in these cases, it is still tight.

\begin{figure}
\centerline{ \includegraphics[width=0.45\textwidth]{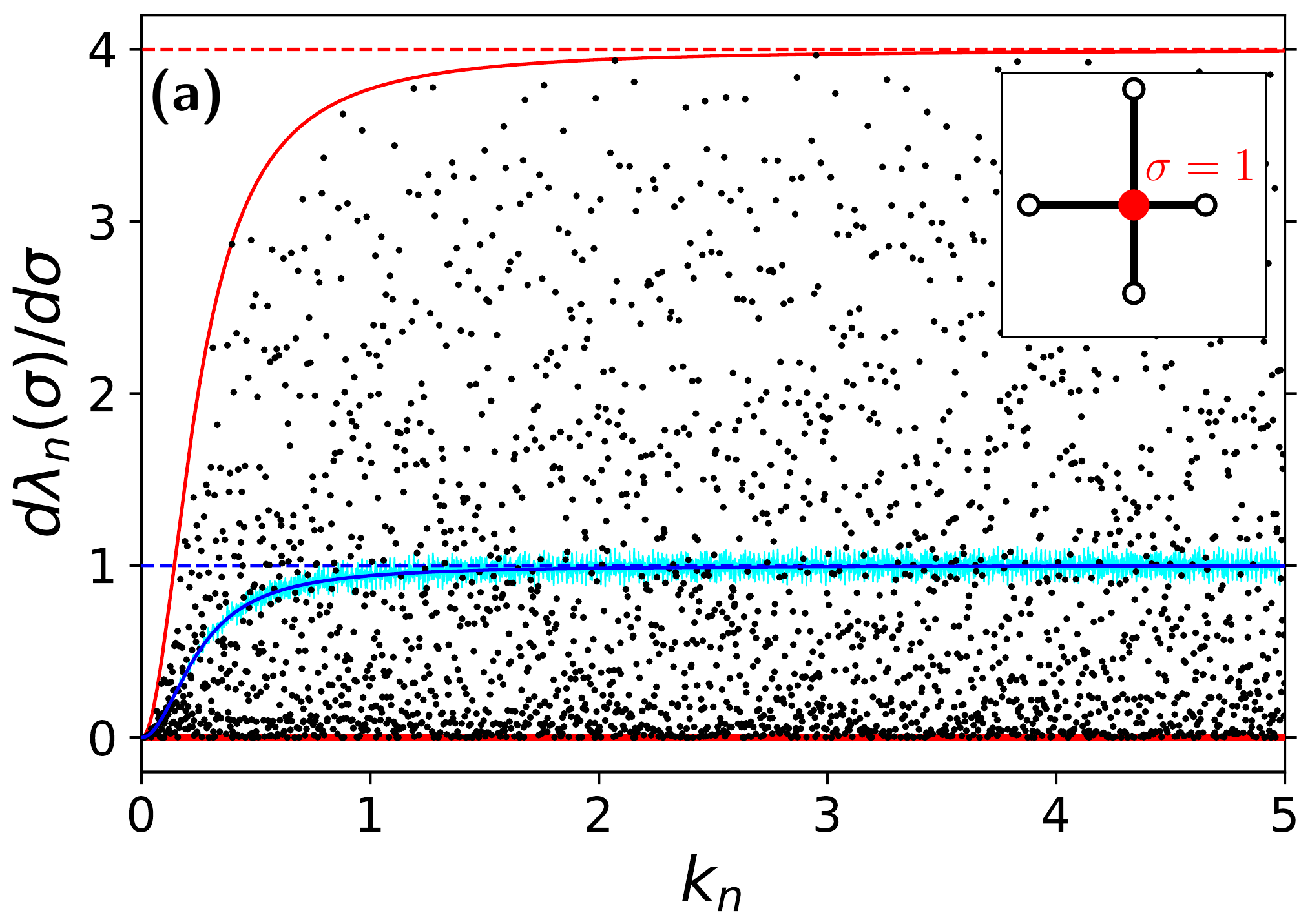} \includegraphics[width=0.45\textwidth]{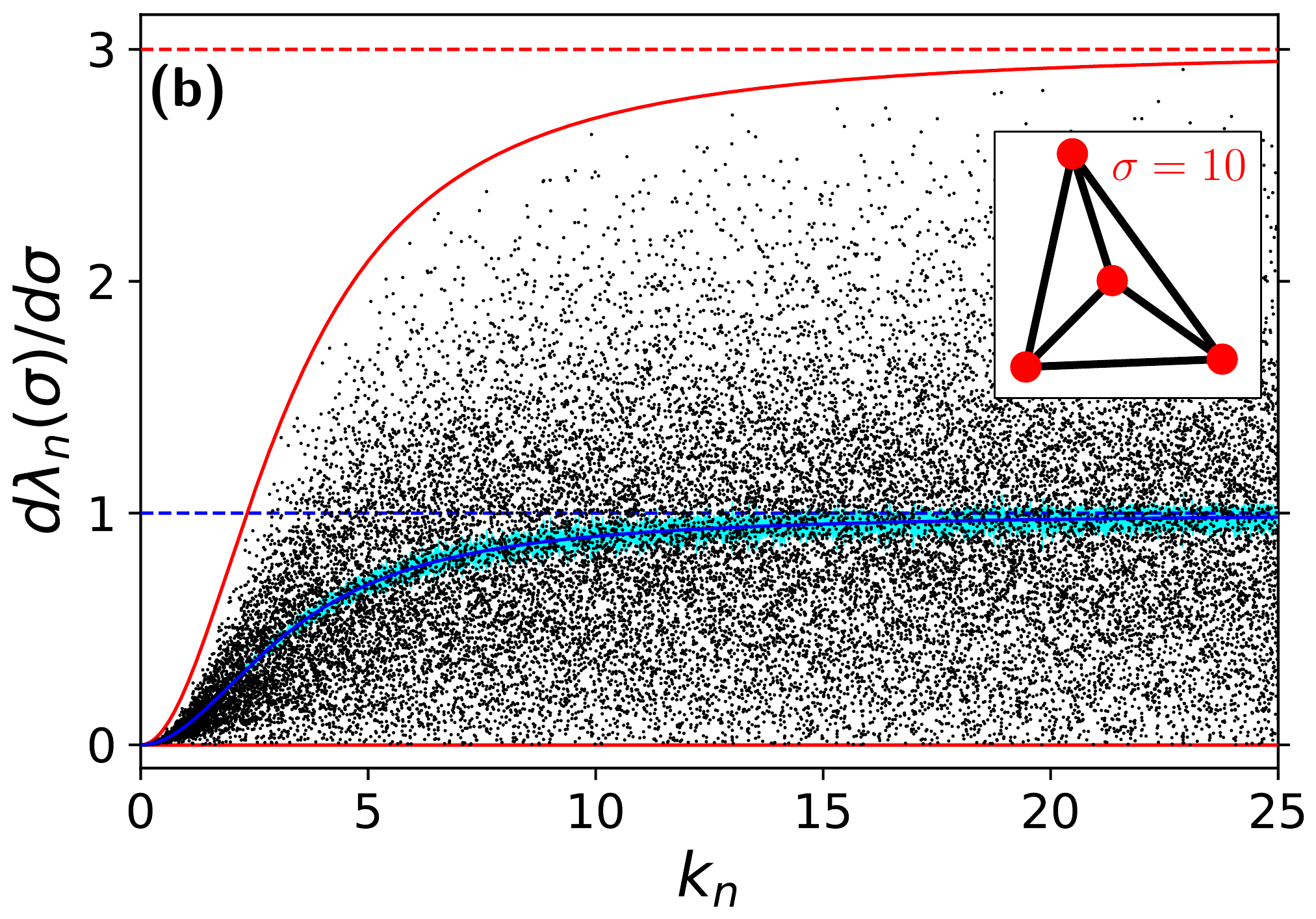}
} \caption{\label{fig:sensi} Scatter plot of the eigenvalue sensitivities $\frac{d\lambda_{n}}{d\sigma}$
for (a) the star graph from Figure \ref{fig:star_gap} and (b) the
tetrahedron graph from Figure \ref{fig:cg-gap} below. The values
are scaled so that the mean sensitivity is one. The light blue lines
are running averages, and the blue lines on top of it are the analytic
results from Equations (\ref{eq:-27}),(\ref{eq:-28}). The red lines
are the upper bound of (\ref{eq:explicit_bounds_sensi}) with and
without the second term, respectively. }
\end{figure}

\bigskip

\subsubsection{Optimality of the upper bound in Lemma \ref{lemma:explicit_bounds}}

\LyXZeroWidthSpace{}

In order to construct eigenfunctions which attain the upper bound
of Lemma \ref{lemma:explicit_bounds}, equality must hold in Equations
(\ref{eq:norm3}), (\ref{eq:mincond}), (\ref{eq:norm5}) and (\ref{eq:norm6}).
We provide two examples for specific graphs which satisfy this.

For the first example, consider an equilateral star graph with edges
of length $\ell$ and Robin condition at the central vertex $v_{0}$.
On each edge, the $L^{2}$ normalized eigenfunctions can be written
as
\begin{equation}
f_{e}\left(x_{e}\right)=\frac{1}{\sqrt{\deg(v_{0})\ell}}\cos\left(kx_{e}-\varphi_{e,v_{0}}\right).\label{eq:-30}
\end{equation}
Then by choosing $k>0$ such that 
\begin{equation}
\tan k\ell=\frac{\sigma}{k\deg(v_{0})},\label{eq:condmax}
\end{equation}
and taking $\varphi_{e,v_{0}}=k\ell$, we get a valid Robin eigenfunction.
Moreover, this eigenfunction does not vanish at any vertex, so equality
holds in (\ref{eq:norm3}). Now, choose a trivial star decomposition,
where the points $u_{e}$ are located at the outer vertices. In this
case, all terms in (\ref{eq:norm3}) which correspond to the outer
vertices vanish. For the remaining vertex $v_{0}$, (\ref{eq:mincond})
is satisfied and equality holds in $(\ref{eq:norm5})$. This also
gives equality in (\ref{eq:norm6}).

The assumption of equal edge lengths which was used for this construction
can be somewhat relaxed. Similar to the previous subsection, choosing
arbitrary edge lengths will lead to the existence of eigenfunctions
with sensitivity arbitrarily close to the upper bound. However, since
all edges of the graph are involved in the construction, the probability
of such an eigenfunction is much lower than in the case of approaching
the lower bound zero (where only a small subset of the edges were
involved), see Figure \ref{fig:sensi}. Moreover, condition (\ref{eq:condmax})
which is used to determine $k$ depends on $\sigma$ and can be satisfied
at most at isolated points along the integral in Equation (\ref{eq:explicit_bounds}).
Therefore, unlike the lower bound zero, the upper bound in Theorem
\ref{thm:explicit_bounds} cannot be realized by this construction.

The second construction is similar, but concerns a regular graph with
no neighboring Neumann vertices. For this construction, one should
take the length of edges connecting a Robin vertex to a Neumann vertex
to be $\ell$, and the length of an edge connecting two Robin vertices
to be $2\ell$. Just as before, choose $k$ according to (\ref{eq:condmax})
above and $\varphi_{e,v}=k\ell$ for all $v\in\V$. This will once
again give a valid eigenfunction. Now, choose a star decomposition
which is attained by splitting the edges which connect two Robin vertices
in the middle. i.e, $s_{v,u_{e}}=\ell$ for all Robin vertices and
$s_{v',u_{e}}=0$ for all Neumann vertices. Then just as in the example
above, the Neumann vertices have zero contribution to (\ref{eq:norm3}),
while for the Robin vertices (\ref{eq:mincond}) holds and so there
is equality in (\ref{eq:norm5}). Since all star graphs around Robin
vertices are identical, equality holds in (\ref{eq:norm6}) as well.

\bigskip

\subsubsection{Improved upper bound for the RNG \label{subsec:Explicit-estimate-of}}

\LyXZeroWidthSpace{}

We wish to give here a better bound than the one given in Theorem
\ref{thm:explicit_bounds}.

Fix $\sigma>0$. For every $t\in\left[0,\sigma\right]$ choose a star
decomposition, and denote by $v_{t}\in\VR$ the vertex which is selected
by the $\max$-condition in (\ref{eq:explicit_bounds_sensi}), 
\begin{equation}
\max_{v\in\VR}\left(\left|\Sv\right|+\frac{t^{2}s_{v}+t}{\lambda_{n}\left(t\right)}\right)^{-1}.\label{eq:max-cond}
\end{equation}
Fix the parameters $\check{s},\check{S}$ such that 
\begin{equation}
\forall t\in\left[0,\sigma\right],\quad\left|\mathcal{S}_{v_{t}}\right|\ge\check{S},\qquad s_{v_{t}}\ge\check{s}.\label{eq:constestimate}
\end{equation}

For the purpose of getting a finer upper bound on the RNG, we will
be interested to take the highest possible values for $\check{s},\check{S}$.
In particular, the bound in the next proposition will be valid only
if $\check{s},\check{S}$ are positive. We demonstrate that this can
be achieved in the following two cases:
\begin{enumerate}
\item \label{enu: s-Assumption - star graph}For a star graph with Robin
condition at the central vertex. The auxiliary vertices may be chosen
to be the boundary vertices of the graph. Thus, the (degenerate) stars
around these vertices have zero length. Since the boundary vertices
are imposed with the Neumann-Kirchhoff condition, it is easy to see
that $v_{t}$ is not one of the boundary vertices. Hence, we may choose
$\check{S}$ to be the total length of the graph, and $\check{s}$
to be the harmonic mean of its edge lengths divided by the degree.
Indeed, both are non-zero.
\item \label{enu: s-Assumption - arbitrary graph} Let the graph be arbitrary,
with the same star decomposition fixed for all $t\in\left[0,\sigma\right]$.
Then one can take $\check{S}$ as the minimal length of a star around
a Robin vertex in the partition and $\check{s}$ as the minimum value
of $s_{v}$ for a star in the partition. If none of the partition
vertices are placed at $\VR$, then indeed both $\check{S}$ and $\check{s}$
are non-zero. In general, the bound stated in the next proposition
is tighter if $\check{S},\check{s}$ are large, i.e. if the stars
around the Robin vertices are large. Therefore, if there are edges
which connect a Robin vertex $u\in\VR$ with a Neumann vertex $w\in\V\backslash\VR$,
then to maximize the values of $\left|\Sv\right|$ and $s_{v}$, one
should choose a star partition for which the auxiliary vertex should
be placed at $w$, similarly to what was done in case (\ref{enu: s-Assumption - star graph}). 
\end{enumerate}
\begin{prop}
\label{prop:integral_estimate} For $\check{s},\check{S}$ as above
and $\lambda_{n}\left(0\right)>\frac{1}{4\check{s}\check{S}}$,
\begin{equation}
d_{n}\left(\sigma\right)<\left(\frac{\exp\left(2\alpha\arctan\left(\frac{\alpha}{2}\cdot\left[1+2\check{s}\sigma\right]\right)\right)}{\exp\left(2\alpha\arctan\left(\frac{\alpha}{2}\right)\right)}-1\right)\cdot\lambda_{n}\left(0\right),\label{eq:upperbound}
\end{equation}
with 
\begin{equation}
\alpha=\frac{2}{\sqrt{4\lambda_{n}\left(0\right)\check{s}\check{S}-1}}.\label{eq:alpha}
\end{equation}
\end{prop}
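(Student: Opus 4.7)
The plan is to convert the pointwise bound of Lemma \ref{lemma:explicit_bounds} into a differential inequality for $\lambda_n(t)$ in the variable $t$ alone, separate the variables, integrate, and recognize the resulting antiderivative as an $\arctan$ after completing the square. Throughout, I write $\lambda(t) := \lambda_n(t)$ for brevity.

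Starting from Lemma \ref{lemma:explicit_bounds} and using the assumptions (\ref{eq:constestimate}) on $\check{s},\check{S}$, for each $t\in[0,\sigma]$ we have
\begin{equation}
\frac{\rmd\lambda(t)}{\rmd t}\;\le\;\frac{2}{|\mathcal{S}_{v_t}|+\dfrac{t^2 s_{v_t}+t}{\lambda(t)}}\;\le\;\frac{2\lambda(t)}{\check{S}\,\lambda(t)+\check{s}\,t^2+t}.
\end{equation}
Since $\lambda(t)\ge\lambda(0)$ for all $t\ge 0$, replacing the $\lambda(t)$ appearing in the denominator by $\lambda(0)$ only weakens the bound, yielding the \emph{autonomous in the denominator} inequality
\begin{equation}
\frac{\rmd\lambda(t)}{\lambda(t)}\;\le\;\frac{2\,\rmd t}{\check{s}\,t^2+t+\check{S}\lambda(0)}.
\end{equation}
Integrating over $[0,\sigma]$ gives
\begin{equation}
\log\frac{\lambda(\sigma)}{\lambda(0)}\;\le\;\int_0^\sigma \frac{2\,\rmd t}{\check{s}\,t^2+t+\check{S}\lambda(0)}.
\end{equation}

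The second step is to evaluate this integral. Complete the square:
\begin{equation}
\check{s}\,t^2+t+\check{S}\lambda(0)\;=\;\check{s}\left(t+\tfrac{1}{2\check{s}}\right)^{2}+\check{S}\lambda(0)-\tfrac{1}{4\check{s}},
\end{equation}
where the hypothesis $\lambda_n(0)>\tfrac{1}{4\check{s}\check{S}}$ is exactly what makes the constant term positive, so the quadratic has no real roots. Setting $\beta^2:=\check{S}\lambda(0)-\tfrac{1}{4\check{s}}$ and substituting $u=\sqrt{\check{s}}\,(t+\tfrac{1}{2\check{s}})$, the integral becomes a standard $\arctan$ integral and evaluates to
\begin{equation}
\frac{2}{\sqrt{\check{s}}\,\beta}\left[\arctan\!\left(\frac{\sqrt{\check{s}}\,\sigma+\tfrac{1}{2\sqrt{\check{s}}}}{\beta}\right)-\arctan\!\left(\frac{1}{2\sqrt{\check{s}}\,\beta}\right)\right].
\end{equation}
A direct computation gives $\sqrt{\check{s}}\,\beta=\tfrac{1}{2}\sqrt{4\lambda(0)\check{s}\check{S}-1}$, so $\tfrac{1}{\sqrt{\check{s}}\,\beta}=\alpha$ exactly as defined in (\ref{eq:alpha}). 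The prefactor is therefore $2\alpha$, the first $\arctan$ argument is $\tfrac{\alpha}{2}(1+2\check{s}\sigma)$, and the second is $\tfrac{\alpha}{2}$, yielding
\begin{equation}
\log\frac{\lambda(\sigma)}{\lambda(0)}\;\le\;2\alpha\arctan\!\left(\tfrac{\alpha}{2}(1+2\check{s}\sigma)\right)-2\alpha\arctan\!\left(\tfrac{\alpha}{2}\right).
\end{equation}

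Exponentiating and subtracting $1$ gives the claimed bound for $d_n(\sigma)=\lambda(\sigma)-\lambda(0)$. The strict inequality follows because the replacement $\lambda(t)\mapsto\lambda(0)$ in the denominator is strict on any subinterval where $\lambda$ actually increases, which must occur unless $d_n(\sigma)=0$, in which case the inequality is strict trivially as the right-hand side is positive.

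The main place that needs care is the reduction to an autonomous ODE: Lemma \ref{lemma:explicit_bounds} gives a bound on $\rmd\lambda/\rmd\sigma$ that still depends on $\lambda_n(\sigma)$, and keeping that dependence would lead to an implicit ODE whose Gr\"onwall-type analysis would give only a weaker exponential bound (which is essentially how (\ref{eq:explicit_bounds}) is obtained). The key trick is that the monotonicity $\lambda(t)\ge\lambda(0)$ lets us freeze the denominator at its initial value so that the integrand becomes an explicit rational function of $t$ which integrates in closed form -- this is what produces the sharper $\arctan$-type bound and, in particular, captures the fact that the upper bound on $d_n(\sigma)$ tends to a finite limit rather than growing linearly for large $\sigma$ (at fixed $n$).
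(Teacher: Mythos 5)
Your proposal is correct and follows essentially the same route as the paper: both start from Lemma \ref{lemma:explicit_bounds}, use the monotonicity $\lambda_n(t)\ge\lambda_n(0)$ to freeze the denominator's $\lambda$-dependence at its initial value, separate variables, and evaluate the resulting integral of $2/(\check{s}t^2+t+\check{S}\lambda_n(0))$ as an $\arctan$ (the paper quotes the closed form with $A=\lambda_n(0)\check{S}$, $B=1/2$, $C=\check{s}$ and $\alpha=(AC-B^2)^{-1/2}$, while you complete the square explicitly, which is the same computation). The strictness argument also matches the paper's.
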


\begin{proof}
From Equation (\ref{eq:explicit_bounds_sensi}) in Lemma \ref{lemma:explicit_bounds},
it follows that 
\begin{eqnarray}
\frac{d\lambda_{n}\left(t\right)}{dt} & \leq & \frac{2\lambda_{n}\left(t\right)}{\left|\mathcal{S}_{v_{t}}\right|\lambda_{n}\left(t\right)+s_{v_{t}}t^{2}+t}\leq\frac{2\lambda_{n}\left(t\right)}{\check{S}\lambda_{n}\left(0\right)+\check{s}t^{2}+t}.\label{eq:sensestimate1}
\end{eqnarray}
Thus, $\lambda_{n}\left(\sigma\right)$ is bounded from above by the
function $\lambda\left(\sigma\right)$ satisfying the differential
equation 
\begin{eqnarray}
\frac{d\lambda}{dt}=\frac{2\lambda\left(t\right)}{A+2Bt+Ct^{2}},
\end{eqnarray}
with $A=\lambda_{n}\left(0\right)\check{S}$, $B=1/2$, $C=\check{s}$
and initial condition $\lambda\left(0\right)=\lambda_{n}\left(0\right)$.
Solution by separation of variables gives
\begin{equation}
\ln\frac{\lambda\left(\sigma\right)}{\lambda\left(0\right)}=\int_{0}^{\sigma}dt\frac{2}{A+2Bt+Ct^{2}}=2\alpha\arctan\left(\alpha\left[Ct+B\right]\right)\Big|_{0}^{\sigma}\qquad(AC>B^{2}),\label{eq:logdiff}
\end{equation}
with $\alpha=\left(AC-B^{2}\right)^{-1/2}$ corresponding to (\ref{eq:alpha}).

Solving (\ref{eq:logdiff}) above for $\lambda_{n}\left(\sigma\right)-\lambda_{n}\left(0\right)$
provides the given upper bound for $d_{n}\left(\sigma\right)$. Since
equality holds in (\ref{eq:sensestimate1}) only for $\lambda_{n}\left(t\right)=\lambda_{n}\left(0\right)$
(and then $d_{n}\equiv0$), the inequality in (\ref{eq:upperbound})
is strict.

\begin{figure}
\includegraphics[scale=0.5]{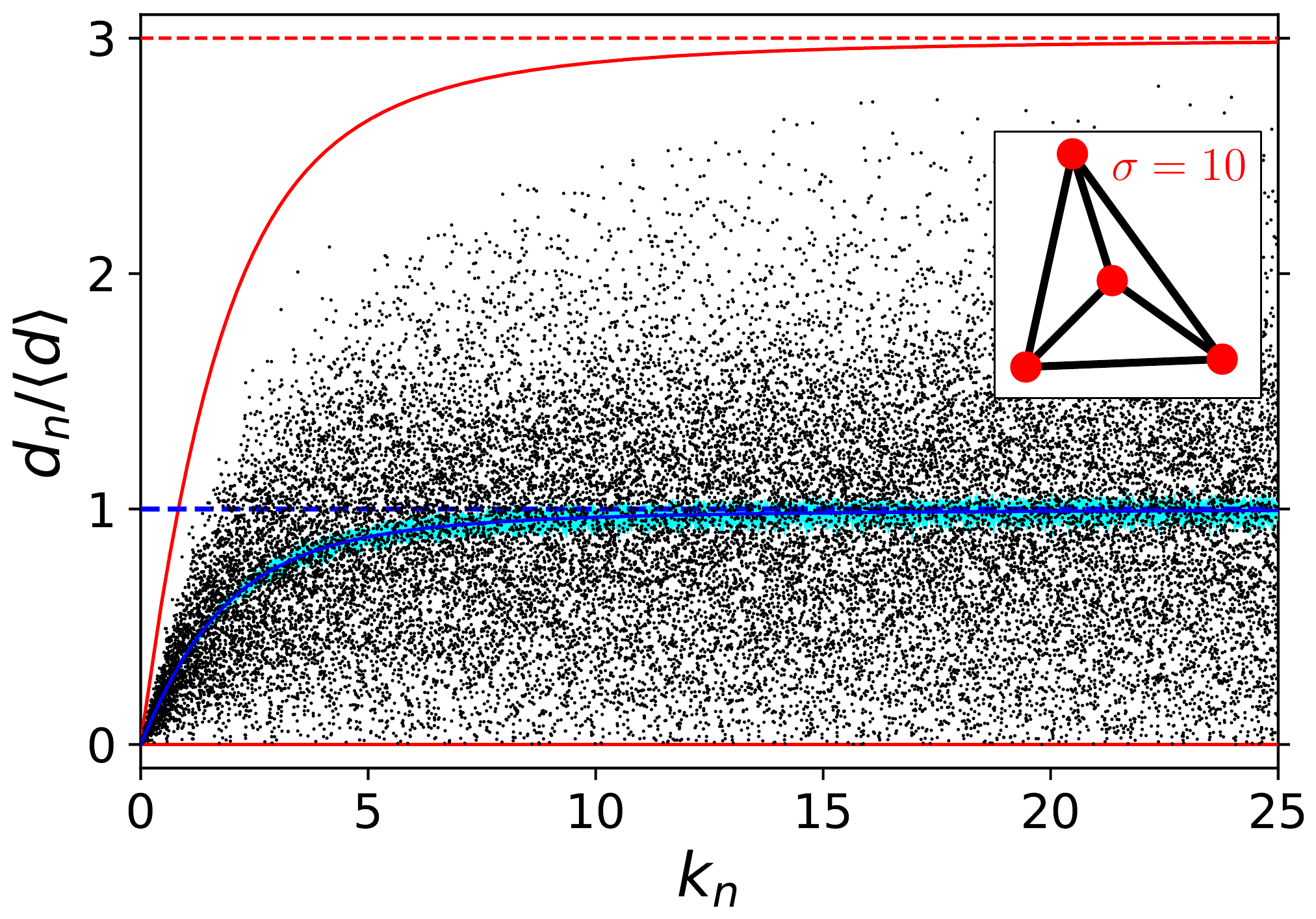}

\caption{Scatter plot of the first 25,000 Robin-Neumann gaps for a tetrahedron
with Robin condition at all vertices, scaled so that $\left\langle d\right\rangle _{n}\left(\sigma\right)=1$.
The light blue line is a running average and the blue lines on top
of it are the results from (\ref{eq:RNG_mean_const}), (\ref{eq:RNG_mean_arctan}).
The red dashed and full lines are the bounds from (\ref{eq:explicit_bounds}),
(\ref{eq:upperbound}). \label{fig:cg-gap}}
\label{fig:RNG of tetrahedron}
\end{figure}
\end{proof}
Figures \ref{fig:star_gap} and \ref{fig:sensi}(a) demonstrate the
sensitivity and RNG values for a star graph with four edges and $\VR$
consisting of just the central vertex. In these figures, the solid
red line was calculated with parameters as described in case (i) above.
By Theorem \ref{thm:mean_RNG}, the limiting mean value of the RNG
for this star graph is $\sigma/2\left|\Gamma\right|$, while from
(\ref{eq:explicit_bounds}) with $\left|\Sv\right|=\left|\Gamma\right|$
we find that the upper bound is larger than the mean by a factor of
$\deg\left(v_{0}\right)=4$. Figures \ref{fig:sensi}(b) and \ref{fig:cg-gap}
demonstrate the sensitivity and RNG values for a tetrahedron graph
with $\VR=\V$. The optimal partition used to draw the bounds in these
figures was found numerically. It has $\left|\Sv\right|=\left|\Gamma\right|/4$,
i.e., all stars in the partition have the same total length. Once
again, the ratio between the mean gap and upper bound is determined
by the degree of the Robin vertices ($3$ in this case).

\section{An additional approach for deriving the mean Robin-Neumann gap \label{sec:discuss_mean_RNG}}

In this section, we present an alternative derivation for the mean
value of the Robin-Neumann gap. This is done by considering a so-called
``local average'' of the RNG with respect to the wave number $k$
(instead of averaging with respect to $n$). This approach is not
as rigorous as the proof of Theorem \ref{thm:mean_RNG}. Nevertheless,
it is advantageous in that it provides not only the limiting mean
value of the RNG, but also the running mean as it depends on $k$,
see Figure \ref{fig:star_gap}.

We begin by considering the situation where the Robin condition is
imposed at a single vertex, and later generalize to multiple vertices.
In this case, we know that the eigenvalues interlace (see \cite[thm. 3.1.8]{BerKuc_incol12});
if $\sigma<\sigma'$, then for all $n\in\mathbb{N}$
\begin{equation}
k_{n}\left(\sigma\right)\le k_{n}\left(\sigma'\right)<k_{n+1}\left(\sigma\right).\label{}
\end{equation}
This can be rewritten in terms of the number counting function $\ncf\left(k,\sigma\right):=\left|\left\{ n\in\mathbb{N}:k_{n}\le k\right\} \right|$.
Mainly, this means that the spectral shift, which is the difference
between the number counting functions at fixed $k$, may only take
the values zero and one,
\begin{equation}
\Delta^{\sigma}\ncf\left(k\right):=\ncf\left(k,0\right)-\ncf\left(k,\sigma\right)\in\left\{ 0,1\right\} .\label{}
\end{equation}
We denote the length of the intervals where the spectral shift is
equal to one by
\begin{equation}
\delta_{n}\left(\sigma\right):=k_{n}\left(\sigma\right)-k_{n}\left(0\right).\label{-1}
\end{equation}
Note that these intervals are defined similarly as the RNG, but for
the difference between the $k$ values rather than the eigenvalues.

By the Weyl law for metric graphs (see \cite{BerKuc_incol12,GnuSmi_ap06})
for a fixed value of $\sigma$, the mean distance between consecutive
values of $\left\{ k_{n}(\sigma)\right\} $ is 
\begin{eqnarray}
 & \deltak:=\pi/\left|\Gamma\right|.\label{eq:deltak}
\end{eqnarray}
Hence, for large $K>0$, the interval $\left[k-K/2,k+K/2\right]$
contains on average $N:=K/\deltak$ values from $\left\{ k_{n}(\sigma)\right\} $.
Thus, defining a local $k$-average, the spectral shift in $k$ is
equal to:
\begin{eqnarray}
\overline{\Delta^{\sigma}\ncf}\left(k\right) & = & \frac{1}{K}\int_{k-K/2}^{k+K/2}\Delta\ncf^{\sigma}\left(k'\right)dk'\approx\frac{\left|\Gamma\right|}{\pi}\cdot\frac{1}{N}\sum_{n=N_{0}+1}^{N_{0}+N}\delta_{n}\left(\sigma\right)=\frac{\left|\Gamma\right|\delta^{\sigma}\left(k\right)}{\pi},\label{eq:meanshift}
\end{eqnarray}
where $\{N_{0}+1,\ldots,N_{0}+N\}$ are the indices of the $\left\{ k_{n}(\sigma)\right\} $
values which are contained in the interval $\left[k-K/2,k+K/2\right]$
(on average). Hence, $\delta^{\sigma}\left(k\right):=\frac{1}{N}\sum_{n=N_{0}+1}^{N_{0}+N}\delta_{n}\left(\sigma\right)$
is the mean spectral shift around $k$. The expression above holds
up to an error of order $N^{-1}$ due to the limits of the integration
interval.

To evaluate the mean spectral shift above, we use the trace formula
for the counting function (as derived in \cite{KotSmi_prl97,KotSmi_ap99}):
\begin{equation}
\ncf^{\sigma}\left(k,\sigma\right)=\frac{\Theta\left(k,\sigma\right)}{2\pi}+\frac{1}{\pi}\cdot Im\sum_{m=1}^{\infty}\frac{\tr U^{m}\left(k,\sigma\right)}{m}.\label{eq:numbercounting}
\end{equation}
Here, $U\left(k,\sigma\right):=S^{\left(\sigma\right)}e^{ik\L}$ is
the unitary scattering matrix (as in Theorem \ref{thm:secular-equation})
and $\Theta\left(k,\sigma\right):=\log\left(\det\left(U\left(k,\sigma\right)\right)\right)$
is known as the total phase of $U\left(k,\sigma\right)$. Under the
assumption that $K$ is large enough (mainly, that $K\gg\pi/\ell_{\min}$,
see \cite{KotSmi_prl97,KotSmi_ap99}), the contribution of the oscillatory
term in (\ref{eq:numbercounting}) is suppressed by the averaging,
and in leading order we have that
\begin{equation}
\overline{\Delta^{\sigma}\ncf}\left(k\right)=\frac{\overline{\Theta\left(k,0\right)}-\overline{\Theta\left(k,\sigma\right)}}{2\pi}.\label{totalphase}
\end{equation}
The total phase was evaluated in \cite{KotSmi_ap99} as
\begin{equation}
\Theta\left(k,\sigma\right)=2k\left|\Gamma\right|-2\sum_{v\in\VR}\arctan\left(\frac{\sigma}{\deg\left(v\right)k}\right).\label{tot-phase}
\end{equation}
Plugging this into (\ref{totalphase}) and then using (\ref{eq:meanshift})
gives that for a single Robin vertex
\begin{eqnarray}
\delta^{\sigma}\left(k\right) & = & \frac{1}{\left|\Gamma\right|}\arctan\left(\frac{\sigma}{\deg\left(v\right)k}\right).\label{eq:avdeltak}
\end{eqnarray}
Finally, we can define the $k$-averaged Robin-Neumann gap by 
\begin{equation}
\left\langle d\right\rangle _{k}\left(\sigma\right):=\left(k+\delta^{\sigma}\left(k\right)\right)^{2}-k^{2},\label{eq:-26}
\end{equation}
which under the assumption $\delta^{\sigma}\left(k\right)\ll k$,
and together with (\ref{eq:avdeltak}), gives the following:

\begin{equation}
\left\langle d\right\rangle _{k}\left(\sigma\right)\approx2k\delta^{\sigma}\left(k\right)=\frac{2k}{\left|\Gamma\right|}\arctan\left(\frac{\sigma}{\deg\left(v\right)k}\right).\label{eq:-25}
\end{equation}

For the more general case where the Robin condition is imposed at
several vertices, we can repeat the same proof (applying the additivity
of Equation (\ref{tot-phase})) to obtain
\begin{equation}
\left\langle d\right\rangle _{k}\left(\sigma\right)=\frac{2k}{\left|\Gamma\right|}\sum_{v\in\V_{\mathcal{R}}}\arctan\left(\frac{\sigma}{\deg\left(v\right)k}\right).\label{eq:RNG_mean_arctan}
\end{equation}

Note that for $k\to\infty$, one can recover the rigorously obtained
expression from Theorem \ref{thm:mean_RNG} by first order approximation
of (\ref{eq:RNG_mean_arctan}). On the other hand, for $k\to0$ the
average gap approaches zero. The average sensitivity of the gaps with
respect to a change of the Robin parameter is obtained by differentiating
(\ref{eq:RNG_mean_arctan}) with respect to $\sigma$:
\begin{align}
 & \left\langle \frac{d\lambda\left(\sigma\right)}{d\sigma}\right\rangle _{n}=\frac{2}{\left|\Gamma\right|}\sum_{v\in\VR}\frac{\lambda\deg\left(v\right)}{\sigma^{2}+\lambda\deg\left(v\right)^{2}}\label{eq:-27}\\
 & \approx_{\lambda\rightarrow\infty}\frac{2}{\left|\Gamma\right|}\sum_{v\in\VR}\frac{1}{\deg\left(v\right)}.\label{eq:-28}
\end{align}

A crucial assumption in the derivations above was that the averaging
interval $K$ contains many eigenvalues, which is required to neglect
the oscillating terms in (\ref{eq:numbercounting}). At the same time,
$K$ must be small enough so that the value of (\ref{eq:avdeltak})
does not change by a large amount inside the given interval. Otherwise,
the definition of a local average of the gap is not meaningful. The
two conditions can only be met for graphs with a large metric length
$\left|\Gamma\right|\to\infty$. Alternatively, one may employ an
ensemble average over graphs where the topology is fixed and the edge
lengths are varied. Having written the above, we refer to Figures
\ref{fig:star_gap}, \ref{fig:sensi} and \ref{fig:RNG of tetrahedron},
which demonstrate how close is (\ref{eq:RNG_mean_arctan}) to a running
mean value obtained by averaging over $21$ adjacent eigenvalues.

\bigskip

\section{Discussion and open questions\label{sec:discussion}}

We conclude this work by comparing its results to the ones presented
in \cite{RivRoy_jphys20,RudWig_amq21,RudWigYes_arxiv21} and raising
several open questions.

\subsection*{Limiting mean value}

Theorem \ref{thm:mean_RNG} states that the mean value of the RNG
is given by the following expression:
\begin{equation}
\left\langle d\right\rangle _{n}\left(\sigma\right)=\frac{2\sigma}{\left|\Gamma\right|}\sum_{v\in\mathcal{V}_{\mathcal{R}}}\frac{1}{\deg\left(v\right)}.\label{eq:-98}
\end{equation}
This expression bears obvious similarity to the result introduced
in \cite{RudWigYes_arxiv21} for planar domains
\begin{equation}
\left\langle d\right\rangle _{n}\left(\sigma\right)=\frac{2\left|\partial\Omega\right|}{\text{\ensuremath{\left|\Omega\right|}}}\sigma,\label{eq:-33-2}
\end{equation}
which is also the expression proven in \cite{RudWig_amq21} for the
hemisphere.

In the graph setting, the boundary term $\left|\partial\Omega\right|$
is replaced by a discrete measure on the set of Robin points. Interestingly,
this discrete measure assigns to each vertex a total weight which
is inversely proportional to its degree. Heuristically, if a vertex
``meets'' the graph from many different sides, then it is less likely
to ``feel'' the Robin perturbation.

While this property is seemingly unique for the graph setting, we
believe that at least in a sense, a similar effect also exists for
two-dimensional domains. For instance, one can consider a domain with
a two sided boundary (as in Figure \ref{fig: bndry}) and replace
the usual one-sided Robin condition $\frac{\partial f}{\partial n}+\sigma f=0$
with the two-sided Robin condition:
\begin{equation}
\frac{\partial f_{1}}{\partial n_{1}}+\frac{\partial f_{2}}{\partial n_{2}}+\sigma f=0.\label{eq:-12}
\end{equation}
\begin{figure}
\includegraphics[scale=0.4]{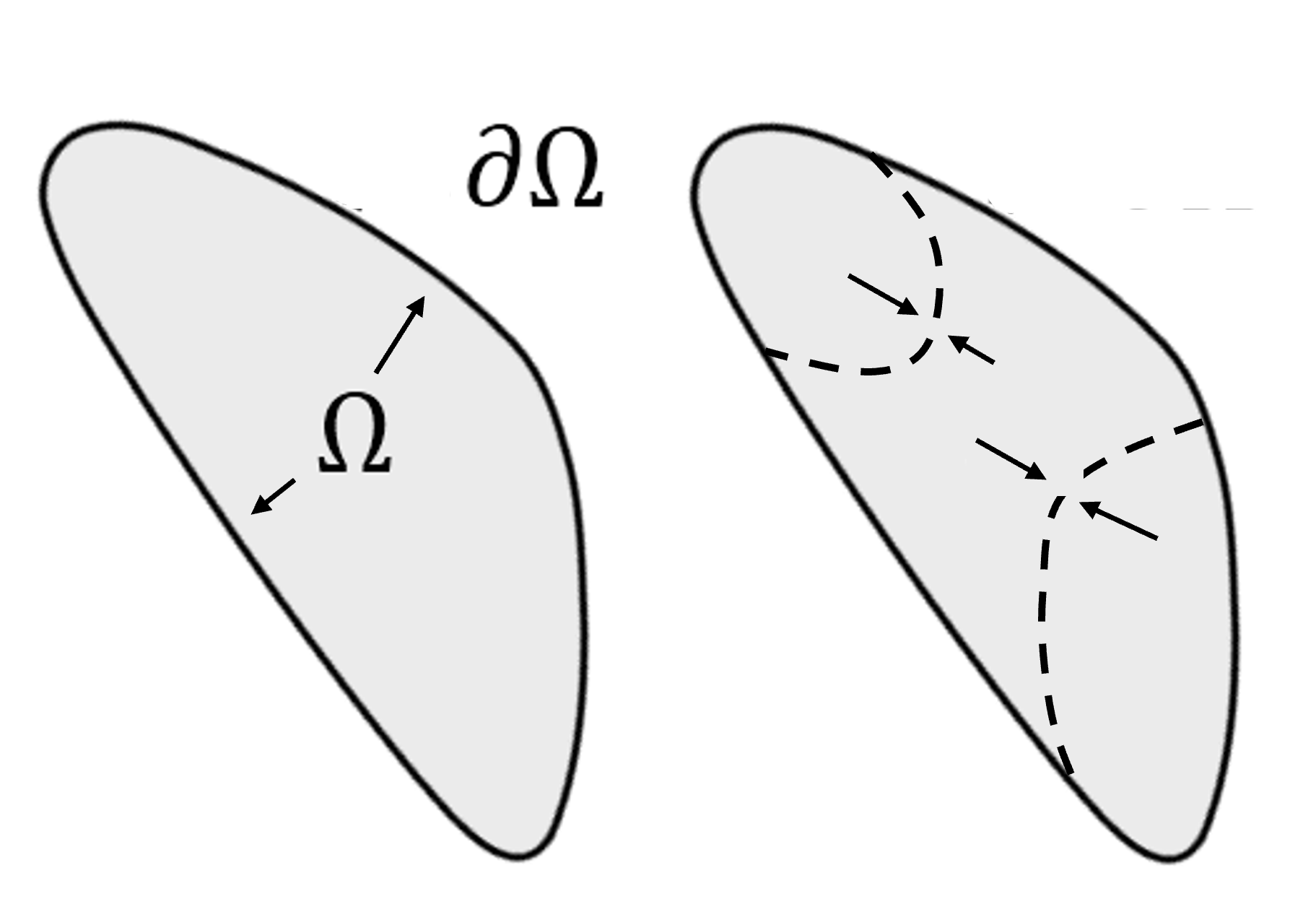}

\caption[Domains with different ``degrees''.]{A one-sided boundary compared to a two-sided boundary. \label{fig: bndry}}
\end{figure}
In this case, we believe that the corresponding expression in (\ref{eq:-33-2})
should be divided by two, whenever the two-sided boundary is considered.
In this sense, the degree of the vertex can be replaced by the number
of sides of the boundary which are in contact with the domain. This
idea is further developed in \cite{BifKer_arXiv}, where the notion
of circumference for a quantum graph is introduced.

\subsection*{Bounds}

The uniform boundedness of the Robin-Neumann gaps in $\sigma$ proven
in Theorems \ref{thm:Lipschitz_RNG} and \ref{thm:explicit_bounds}
agrees with the result proven for star graphs in \cite{RivRoy_jphys20}.
Having uniform bounds on the gaps is interesting, since this does
not always hold when one passes to the two-dimensional setting. While
the sequence of RNG is bounded for some domains (with explicit bounds
for the rectangle given in \cite{Rudnick2021}), the sequence is known
to be unbounded for the hemisphere. It is also conjectured to be unbounded
for certain planar domains, like the disk.

\subsection*{RNG distribution and its higher moments}

Theorems \ref{thm:probability} and \ref{thm:converging_subsequence}
provide information about the distribution of the RNG, which is governed
by a probability distribution similar to the one discussed in \cite{RivRoy_jphys20}
(see e.g. Figure \ref{fig:RNG-hist-1}). A very natural question to
ask is what else can be said about this probability distribution,
and what other geometric information about the graph it holds.\\
Since the expectation of this probability measure can be computed
explicitly (as in Theorem \ref{thm:mean_RNG}), it is sensible to
try and study the measure $\mu_{\sigma}$ by computing the higher
moments as well. Naively, the computation of the higher moments could
be carried out by an approach similar to the one used in proving Theorem
\ref{thm:mean_RNG} -- defining the higher moments as functions on
the secular manifold, and then computing the corresponding integral.
Yet, it turns out that the higher moments cannot be expressed as well
defined functions on the secular manifold. Since this approach fails
for the higher moments, this problem holds an additional challenge
of finding a different way to perform the computation.

\begin{figure}
\includegraphics[scale=0.6]{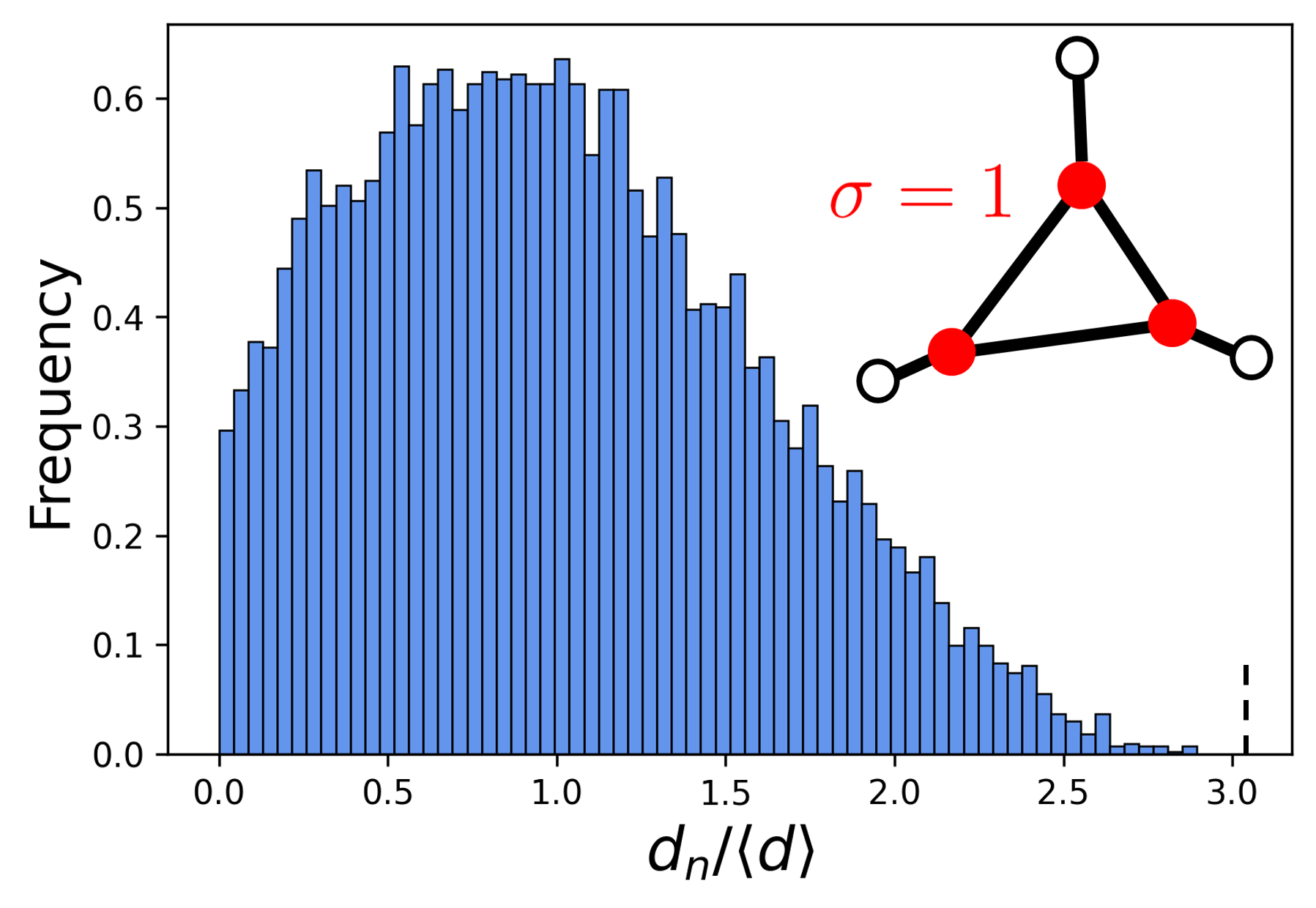}

\caption[Histogram of Robin-Neumann gaps.]{Histogram of the first $10,000$ values of the RNG, normalized so
that $\left\langle d_{n}\right\rangle \left(\sigma\right)=1$. The
Robin vertices are marked in red in the upper right corner. The frequencies
are normalized so that the total area under the histogram is $1$,
indicating the possible probability distribution. An estimate for
the upper bound appearing in (\ref{eq:explicit_bounds}) is marked
by a dashed vertical line.\label{fig:RNG-hist-1}}
\end{figure}

\subsection*{From Robin to other vertex conditions}

Lastly, we address the possible generalization of the results to other
vertex conditions. For instance, one may ask whether similar results
can be obtained for the counterpart $\delta'$-type condition (see
e.g. \cite{AlbeverioGesztesy_solvable,AvrExnLas_prl94}). In this
case, the eigenvalue curves are non-increasing with the coupling parameter
$\sigma$, and one can similarly define the family of gaps.

From numerical exploration, it seems that if one defines
\begin{equation}
\tilde{d}_{n}\left(\sigma\right):=\lambda_{n}\left(0\right)-\lambda_{n}\left(\sigma\right)=k_{n}^{2}\left(0\right)-k_{n}^{2}\left(\sigma\right),\label{eq:-33}
\end{equation}
then the given sequence is no longer bounded, and its mean value does
not converge. This suggests that generalization of the results above
to other vertex conditions might not always be possible. Interestingly,
if one instead defines 
\begin{equation}
\tilde{d}_{n}\left(\sigma\right)=k_{n}\left(0\right)-k_{n}\left(\sigma\right),\label{eq:-33-1}
\end{equation}
then the given sequence does seem to be bounded, and the mean value
converges as before (see Figure \ref{fig:DPRNG}), although it is
not clear to us at this point what it converges to.

\begin{figure}
\includegraphics[scale=0.6]{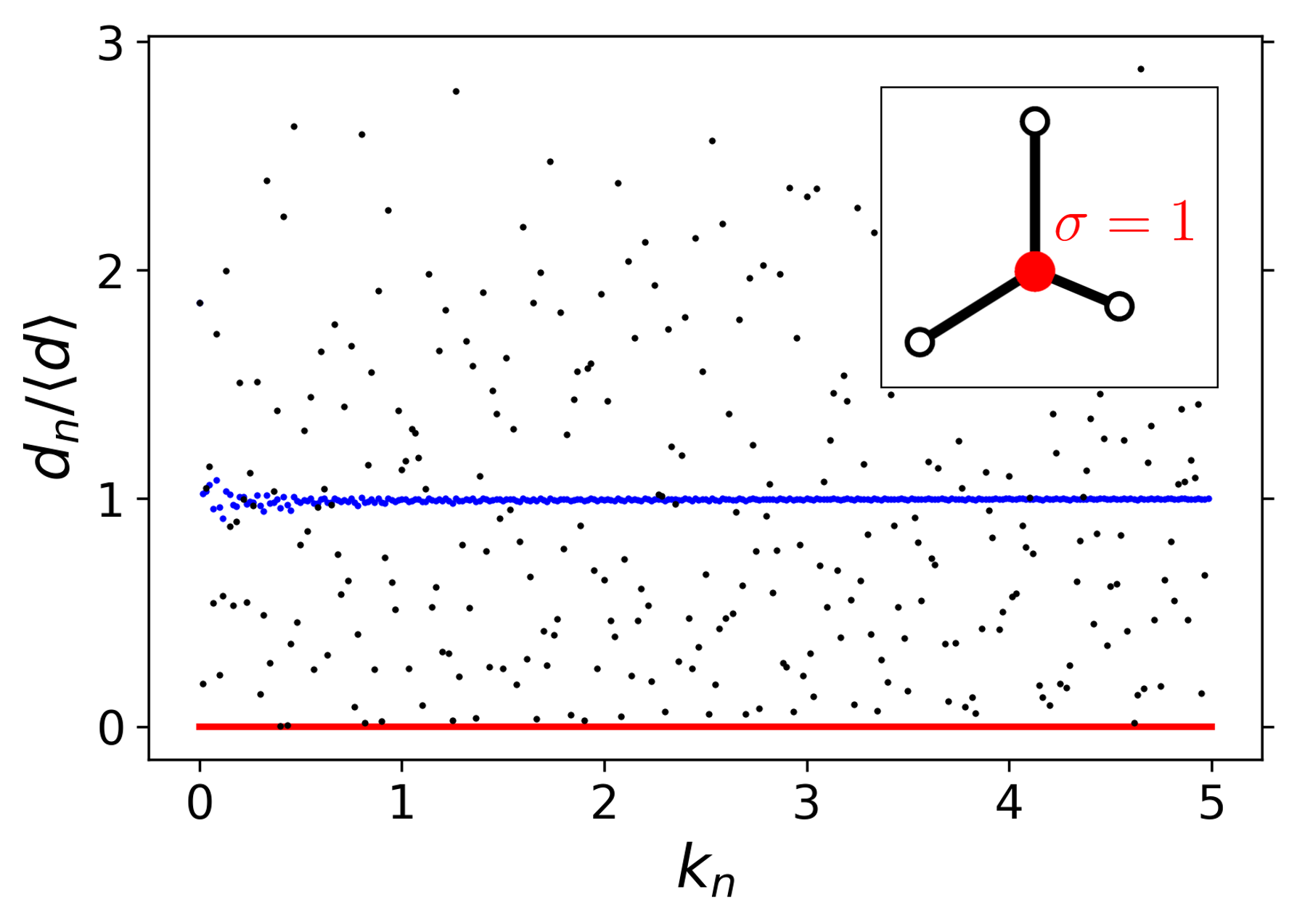}

\caption{The first three-hundred values of the spectral gap for the $\delta'$
condition on a star graph (red points). The mean value of the sequence
up to the $n^{\textrm{th}}$ eigenvalue is indicated by blue points
and suggests that it converges (namely, the Ces\`{a}ro mean exists).
The plot also suggests that the sequence is bounded. \label{fig:DPRNG}}
\end{figure}

When trying to repeat the computation of the mean value of gaps for
the $\delta'$ condition, one comes across a problem similar to the
one with the higher moments. It turns out that it is not possible
to define the corresponding gap as a function on the secular manifold,
which makes it difficult to apply the approach presented in this work.
It is possible that the computation may be done using the local Weyl
law proven in \cite{BifKer_arXiv} for general vertex conditions.
But as the numerical example above suggests, the results might be
very different.
\begin{acknowledgement*}
We are indebted to Uzy Smilansky for gluing our team together. Uzy
participated in many discussions towards the formation of this paper.
He provided ideas which stimulated the scientific content and promoted
our work process. We are grateful to him for all that.

We thank Nadav Yesha for a very clear presentation of the analogous
results on manifolds, which encouraged us to start this work. This
work also gained from interesting discussions and email correspondence
with Lior Alon, Gregory Berkolaiko, James Kennedy and Zeev Rudnick,
and we thank them for that.

RB and GS were supported by ISF (Grant No. 844/19) and by the Binational
Science Foundation Grant (Grant No. 2016281).
\end{acknowledgement*}

\appendix

\section{\label{sec:Appendix-A} Omitting the assumption of independence over
$\mathbb{Q}$}

Recall that in order to apply the ergodic theorem (Theorem \ref{thm:ergodic-theorem}),
we added the assumption that the entries of the vector of edge lengths
$\vec{\ell}$ are linearly independent over $\mathbb{Q}$. We now
show that the result of Theorem \ref{thm:mean_RNG} in fact holds
without this assumption. A similar proof shows that Theorem \ref{thm:Weyl-law}
also holds without this assumption (see Remark \ref{rem: assumption also removed from Weyl law}).
\begin{prop}
Assumption \ref{assumption: rationality} can be omitted in Theorem
\ref{thm:mean_RNG}.
\end{prop}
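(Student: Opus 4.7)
The plan is a continuity and approximation argument. Given edge lengths $\vec{\ell}$ that may be rationally dependent, I would choose a sequence $\vec{\ell}^{(m)}$ with entries linearly independent over $\mathbb{Q}$ with $\vec{\ell}^{(m)} \to \vec{\ell}$. Theorem \ref{thm:mean_RNG} (valid under Assumption \ref{assumption: rationality}) applied to each $\vec{\ell}^{(m)}$ gives
\begin{equation*}
\lim_{N \to \infty} \frac{1}{N} \sum_{n=1}^{N} d_n\bigl(\sigma;\vec{\ell}^{(m)}\bigr) \;=\; L^{(m)} \;:=\; \frac{2\sigma}{|\Gamma^{(m)}|} \sum_{v \in \VR} \frac{1}{\deg(v)},
\end{equation*}
and $L^{(m)} \to L := \frac{2\sigma}{|\Gamma|} \sum_{v \in \VR} \frac{1}{\deg(v)}$ as $m \to \infty$, by continuity of the total length $|\Gamma|$ in $\vec{\ell}$.

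Two enabling facts support the argument: first, for each fixed $N$, the partial sum $S_N(\vec{\ell}) := \frac{1}{N}\sum_{n=1}^N d_n(\sigma; \vec{\ell})$ is a continuous function of $\vec{\ell}$, since each eigenvalue of $H^{(\sigma)}$ and $H^{(0)}$ depends continuously on the edge lengths by standard perturbation theory for self-adjoint operators; second, by Theorem \ref{thm:explicit_bounds}, $0 \le d_n(\sigma;\vec{\ell}) \le 4\sigma/\ell_{\min}$ uniformly in $n$, so all $S_N$ are uniformly bounded.

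The core of the argument is a three-$\epsilon$ diagonal construction. Given $\epsilon > 0$, first pick $m$ large enough that $|L^{(m)} - L| < \epsilon/3$. By Theorem \ref{thm:mean_RNG} applied to $\vec{\ell}^{(m)}$, there exists $N_0 = N_0(m)$ such that $|S_N(\vec{\ell}^{(m)}) - L^{(m)}| < \epsilon/3$ for all $N \ge N_0$. For each such $N$, continuity of $S_N$ in $\vec{\ell}$ provides $\delta_N > 0$ with $|S_N(\vec{\ell}') - S_N(\vec{\ell})| < \epsilon/3$ whenever $\|\vec{\ell}' - \vec{\ell}\| < \delta_N$. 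By choosing $\vec{\ell}^{(m(k))}$ in this $\delta_{N_k}$-neighborhood along a diagonal sequence $N_k \to \infty$ (using density of rationally independent vectors), one obtains $S_{N_k}(\vec{\ell}) \to L$ along a subsequence.

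The main obstacle is that this produces convergence only along a subsequence, because $\delta_N$ depends on $N$, preventing a single $\vec{\ell}^{(m)}$ from uniformly controlling all tails. To upgrade to convergence of the full sequence, one must separately establish that $\langle d \rangle_n(\sigma; \vec{\ell})$ exists. This can be done via a generalization of the Barra--Gaspard ergodic theorem to the subtorus in which the flow $k \mapsto k\vec{\ell} \bmod 2\pi$ is dense, in the same spirit as the argument used in the proof of Theorem \ref{thm:probability} to treat rationally dependent edge lengths. Once existence of $\langle d \rangle_n(\sigma; \vec{\ell})$ is granted, the subsequential identification with $L$ from the diagonal argument forces $\langle d \rangle_n(\sigma; \vec{\ell}) = L$, completing the proof.
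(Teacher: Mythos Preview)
Your overall strategy---approximate by rationally independent lengths and pass to the limit---matches the paper's. But your diagonal argument does not actually deliver even subsequential convergence. After fixing $N_k$ and choosing $\vec{\ell}^{(m(k))}$ in the $\delta_{N_k}$-neighborhood of $\vec{\ell}$, you still need $|S_{N_k}(\vec{\ell}^{(m(k))})-L^{(m(k))}|<\epsilon/3$, i.e.\ $N_k\ge N_0(m(k))$. Since $m(k)$ was chosen \emph{after} $N_k$ (to satisfy the $\delta_{N_k}$-constraint), and since the ergodic convergence rate $N_0(\cdot)$ for $\vec{\ell}^{(m(k))}$ can blow up as $\vec{\ell}^{(m(k))}\to\vec{\ell}$, there is no reason $N_k\ge N_0(m(k))$. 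The three-$\epsilon$ chain never closes. Consequently your final step---``once existence is granted, the subsequential identification forces the value''---has nothing to stand on, and the subtorus argument you invoke would have to supply both existence and the value, rendering the diagonal construction superfluous.

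The paper resolves this by strengthening the continuity input rather than bypassing it. It regards $\phi_1:\vec{\ell}\mapsto (d_n(\sigma;\vec{\ell}))_n$ as a map into sequence space and uses that the set $\mathcal{C}$ of Ces\`aro-summable sequences is closed and the Ces\`aro-sum map $\phi_2$ is continuous. For this to be correct one must work in the $\ell^\infty$ topology (in the product topology $\mathcal{C}$ is not closed), so the real content is that $\phi_1$ is continuous into $\ell^\infty$: the gaps $d_n(\sigma;\vec{\ell})$ depend on $\vec{\ell}$ continuously \emph{uniformly in $n$}. This is exactly the equicontinuity of the $S_N$ that would let you take $\delta_N$ independent of $N$ and fix a single $\vec{\ell}^{(m)}$ controlling all tails; with it, your argument (and the paper's) goes through in one stroke without any subtorus detour. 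The paper only sketches this and defers the verification of uniform-in-$n$ continuity to \cite[prop.~5.16]{Sofer2022}.
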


\begin{proof}
This is a simple denseness argument. Fix $\sigma>0$, a discrete graph
$G=\left(\mathcal{V},\mathcal{E}\right)$ and a Robin set $\mathcal{V}_{R}\subset\mathcal{V}$.
Denote:
\begin{equation}
\mathbb{R}_{+}^{E}=\left\{ \vec{x}\in\mathbb{R}^{E}:x_{i}>0,\forall i\in\left\{ 1,..,E\right\} \right\} .\label{eq:-11}
\end{equation}
For $\vec{\ell}\in\mathbb{R}_{+}^{E}$, denote by $\Gamma_{\vec{\ell}}$
the metric graph obtained by assigning the vector of edge lengths
$\vec{\ell}$ to the fixed combinatorial graph $G$.

Define the following function:
\begin{align}
 & \phi_{1}:\mathbb{R}_{+}^{E}\rightarrow\mathbb{R}^{\mathbb{N}},\label{eq:-10}\\
 & \phi_{1}\left(\vec{\ell}\right)=\left(d_{1}^{\vec{\ell}}\left(\sigma\right),d_{2}^{\vec{\ell}}\left(\sigma\right),...\right),\label{eq:-14}
\end{align}
where $d_{n}^{\vec{\ell}}\left(\sigma\right)$ is the RNG for the
metric graph $\Gamma_{\vec{\ell}}$ with corresponding Robin vertex
set $\mathcal{V}_{R}$.

Denote by $P$ the subset of $\mathbb{R}_{+}^{E}$ of vectors whose
coordinates are rationally independent. This is a dense subset of
$\mathbb{R}_{+}^{E}$. Denote the set of Ces\`{a}ro summable sequences
by $\mathcal{C}$. Define the following functions:
\begin{align}
 & \phi_{2}:\mathcal{C}\rightarrow\mathbb{R},\label{eq:-63-1}\\
 & \phi_{2}\left(\left(c_{n}\right)_{n=1}^{\infty}\right)=\lim_{N\rightarrow\infty}\frac{1}{N}\sum_{n=1}^{N}c_{n}\label{eq:-64}\\
 & \phi:P\rightarrow\mathbb{R},\label{eq:-65}\\
 & \phi=\phi_{2}\circ\left(\phi_{1}|_{P}\right).\label{eq:-66}
\end{align}

By the version we proved for Theorem \ref{thm:mean_RNG}, $\phi$
is a well defined function on $P$. For all $\vec{x}\in\mathbb{R}_{+}^{E}$,
there exists a neighborhood $U\subset\mathbb{R}_{+}^{E}$ of $\vec{x}$,
such that $\phi$ is uniformly continuous on $U\cap P,$ since it
is simply given by the expression:
\begin{equation}
\phi\left(\vec{\ell}\right)=\frac{2\sigma}{\sum_{e=1}^{E}\ell_{e}}\sum_{v\in\VR}\frac{1}{\deg\left(v\right)}.\label{eq:-67}
\end{equation}
Since $P$ is dense in $\mathbb{R}_{+}^{E}$ and $\phi$ is locally
uniformly continuous (in the sense mentioned above), it can be extended
into a continuous function $\tilde{\phi}$ on $\mathbb{R}_{+}^{E}$.

To complete the proof, we should show that $\tilde{\phi}=\phi_{2}\circ\phi_{1}$
and that $\tilde{\phi}$ is given by expression (\ref{eq:-67}). This
is based on standard topological arguments which we merely sketch
here, and refer the interested reader to \cite[prop. 5.16]{Sofer2022}
for further details. First, one shows that $\phi_{1}$ is continuous.
From here follows $\phi_{1}\left(\mathbb{R}_{+}^{E}\right)\subset\mathcal{C}$,
using that set of Ces\`{a}ro summable sequences is closed. Now one
gets that $\phi_{2}\circ\phi_{1}$ is well defined and continuous,
and concludes $\tilde{\phi}=\phi_{2}\circ\phi_{1}$ since those functions
agree on the dense set $P$. Finally, the continuity of $\tilde{\phi}$
implies that it is indeed given by (\ref{eq:-67}) and completes the
proof.
\end{proof}
\begin{rem}
\label{rem: assumption also removed from Weyl law} A similar proof
as above shows that assumption \ref{assumption: rationality} may
be omitted also from Lemmas \ref{lem:Amean} and \ref{lem:Uncorrelation},
and hence also from Theorem \ref{thm:Weyl-law}. The required change
in the proof is to modify the function $\phi_{1}:\mathbb{R}_{+}^{E}\rightarrow\mathbb{R}^{\mathbb{N}}$
accordingly, so that it returns either the values of (\ref{eq:-42-1-1})
(to prove Lemma \ref{lem:Amean}) or the values of (\ref{eq:-43-2})
(to prove Lemma \ref{lem:Uncorrelation}).
\end{rem}

\begin{rem}
It is worth noting that while the limiting mean value of the RNG is
the same for the rationally dependent case, the behavior of the RNG
sequence itself might be drastically different in that case. For instance,
for the case of an equilateral star graph, one can show that the RNG
accumulates around two values, and does not get close to the mean
value. Nevertheless, the Ces\`{a}ro mean converges to its expected
value as in Theorem \ref{thm:mean_RNG}. The above is nicely exemplified
in Figure \ref{fig:equistar-gap}.
\end{rem}

\bibliographystyle{plain}
\bibliography{GlobalBib_210709}

\bigskip

\bigskip

\bigskip

Ram Band and Gilad Sofer

Faculty of Mathematics

Technion -- Israel Institute of Technology

32000 Haifa

Israel

e-mail: ramband@technion.ac.il, gilad.sofer@campus.technion.ac.il

\bigskip

Holgar Schanz

Hochschule Magdeburg-Stendal

39114 Magdeburg

Germany

e-mail: holger.schanz@h2.de
\end{document}